\documentclass[aps,prd,reprint,nofootinbib,superscriptaddress,floatfix]{revtex4-2}
\usepackage[T1]{fontenc}
\usepackage[utf8]{inputenc}
\usepackage{lmodern}
\usepackage{amsmath,amssymb,amsthm,mathtools,bm}
\usepackage{graphicx}
\graphicspath{{figures/}}
\usepackage{xcolor}
\usepackage{booktabs}
\usepackage{enumitem}
\usepackage{hyperref}
\hypersetup{colorlinks=true,citecolor=blue,linkcolor=blue,urlcolor=blue,hypertexnames=false}
\usepackage{orcidlink}
\numberwithin{equation}{section}

\theoremstyle{plain}
\newtheorem{theorem}{Theorem}
\newtheorem{proposition}{Proposition}
\newtheorem{corollary}{Corollary}

\theoremstyle{definition}
\newtheorem{definition}{Definition}
\newtheorem{hypothesis}{Hypothesis}

\theoremstyle{remark}
\newtheorem{remark}{Remark}

\newcommand{\AH}{\mathcal{A}_H}
\newcommand{\Ainf}{\mathcal{A}_{\infty}}
\newcommand{\Qu}{Q_u}
\newcommand{\Ku}{K_u}
\newcommand{\Aone}{A_1(e)}
\newcommand{\Atwo}{A_2(e)}
\newcommand{\Azero}{A_0(e)}
\newcommand{\C}{\mathbb{C}}
\newcommand{\R}{\mathbb{R}}
\newcommand{\Oa}{\mathbb{O}}
\newcommand{\Cu}{\mathbb{C}_u}
\newcommand{\id}{\mathbf{1}}
\newcommand{\idtwo}{\mathbf{1}_2}
\newcommand{\Tr}{\mathrm{Tr}}
\newcommand{\Jor}{\mathrm{Jor}}
\newcommand{\SBH}{S_{\mathrm{BH}}^{\mathrm{RN}}}
\newcommand{\eexc}{E_{\mathrm{exc}}}
\newcommand{\dd}{\mathrm{d}}

\newcommand{\FTS}{\mathfrak{F}}

\begin{document}

\title{Albertian Channel Memory in Black-Hole Evaporation}

\author{Rafael B. Frigori \orcidlink{0000-0002-4861-7240}}
\email{frigori@utfpr.edu.br}
\affiliation{Universidade Tecnol\'ogica Federal do Paran\'a, Rua Cristo Rei 19, 85902-490 Toledo, PR, Brazil}

\begin{abstract}
The AMPS paradox assumes a globally associative tensor-product stage for the early radiation, the exterior Hawking mode, and the interior partner. We study a retained attractor sector of octonionic magical supergravity whose horizon symbols form the Albert algebra $J_3(\Oa)$. This induces an Albertian algebraic-quantum description: states are positive normalized functionals, events are Jordan idempotents, reversible motions are algebra automorphisms, and ordinary quantum mechanics is recovered on associative readout blocks. Peirce theory then splits the horizon data into a hidden exceptional complement, an interface relay, and a two-helicity exterior detector. Eliminating the relay gives a source-fixed Volterra memory law on a neutral-source fixed-charge Reissner--Nordstr{\"o}m evaporation trajectory. In real time, the leading one-time occupation follows the sourced evaporation clock, while the retained-memory imprint appears as a spectral-overlap connected two-time coherence of windowed helicity/Stokes observables in the emitted history. In Euclidean time, the Peirce--Volterra kernel becomes a transfer kernel with two branchwise superstatistical limits: a regular-opening Tsallis/Lomax onset and a near-extremal shifted-L{\'e}vy residence branch. The lower admissible envelope of the endpoint actions then reconstructs the Page-curve envelope. The result is an ordinary emitted readout with exceptional memory, not a restored AMPS tensor factorization.
\end{abstract}

\maketitle

\section{Introduction}

The Almheiri--Marolf--Polchinski--Sully (AMPS) trilemma sharpens the black-hole information problem. It uses local effective field theory to motivate an ordinary tripartite Hilbert-factor description of the early radiation $R$, exterior Hawking mode $B$, and interior partner $A$, schematically $\mathcal H_R\otimes\mathcal H_B\otimes\mathcal H_A$ \cite{Hawking1975,Page1993,AMPS2013,Harlow2016}. In the algebraic language used here, this is the globally associative tensor-product stage whose applicability to the retained horizon algebra is questioned. Semiclassical developments based on generalized entropy, quantum extremal surfaces, replica wormholes, and Islands reorganize the Page-curve side of this problem \cite{Almheiri2020,AlmheiriRMP2021,Penington2020}. Dressing, non-factorization, and gravitational subregion algebras also sharpen the limits of naive subsystem localization \cite{DonnellyFreidel2016,DonnellyGiddings2017,Raju2022}. The question asked here is more primitive: is the globally associative stage presupposed by AMPS the right observable stage for the relevant horizon degrees of freedom?

The answer developed below keeps the ordinary outgoing detector and changes the retained horizon stage. The Jordan--von Neumann--Wigner classification leaves a single non-associative exceptional finite-dimensional simple formally real Jordan algebra, the Albert algebra $J_3(\Oa)$ \cite{Jordan1934,Albert1934,McCrimmon2004}. Algebraic quantum mechanics (AQM) starts from the ordered real algebra of observables rather than from a Hilbert space chosen in advance. States are positive normalized linear functionals, sharp events are idempotents, and reversible motions are automorphisms generated infinitesimally by derivations \cite{Segal1947,HaagKastler1964,Haag1992}. For $H_n(\mathbb C)$ this is ordinary finite-dimensional quantum mechanics in observable language. For $J_3(\Oa)$ it is the unique exceptional finite Jordan-algebraic quantum mechanics, realized here as Albertian AQM \cite{GunaydinPironRuegg1978,Barnum2014,Niestegge2015}.

Octonionic magical supergravity supplies a black-hole setting in which the exceptional Jordan structure arises from charge, adjoint, and cubic norm data naturally organized by $J_3(\Oa)$ \cite{GunaydinSierraTownsend1983,GunaydinSierraTownsend1984,FerraraGunaydin2006}. The Enriques realization gives controlled string-theoretic support for treating the two-derivative octonionic attractor data as retained Wilsonian data \cite{BianchiFerrara2008}. Section~\ref{sec:Albertian} identifies the retained attractor--Wilsonian symbol sector with the effective ordered observable algebra of the retained horizon-side data. In this sector the independent gauge-invariant symbols are generated by $Q,Q^{\#},N(Q)$, with $Q\in J_3(\Oa)$, and the associated ordered cone reconstructs $J_3(\Oa)$. The finite quantum kinematics is therefore the Albertian AQM kinematics induced by this retained observable algebra. Section~\ref{sec:Peirce} then explains how Peirce theory extracts the associative readout block $\Qu\simeq H_2(\C)$.

A source-fixed evaporation trajectory is provided by the neutral-source fixed-charge Reissner--Nordstr{\"o}m (RN) member of the charge-ray-preserving class. Along this RN trajectory the attractor-adapted frame remains frozen while the excitation energy decays. This separates the structural input from the channel calculation: the AMPS obstruction and the detector block are algebraic, while the RN trajectory supplies the clock, the Wentzel--Kramers--Brillouin (WKB) barrier, the Volterra kernel, and the branch weights \cite{ParikhWilczek2000,Gripenberg1990}. The quantitative benchmark used throughout the paper is this controlled readout problem: a neutral-source, fixed-charge Reissner--Nordstr{\"o}m evaporation trajectory with frozen attractor frame, fixed Peirce pair $(e,u)$, selected transverse spin-$1$ two-helicity exterior readout, and no independent leakage outside that selected readout sector. The source supplies the evaporation clock; the Peirce relay supplies the memory law; the exterior Maxwell problem supplies the ordinary photonic support. State-dependent interiors and algebraic descriptions of gravitational subregions address related aspects of the factorization problem \cite{PapadodimasRaju2013,PapadodimasRaju2016,Harlow2014PR,Witten2022Crossed,JensenSorceSperanza2023,FaulknerSperanza2024}. Here the starting point is the exceptional retained horizon algebra, followed by the ordinary exterior detector selected from it.

The mechanism is Peirce--Volterra memory \cite{McCrimmon2004,Gripenberg1990}. The hidden Albertian complement reaches the exterior detector through an interface relay, leaving an accumulated two-time coherence in the emitted history. The primary diagnostic is therefore a correlation in the emitted history rather than a large one-time spectral distortion. Page's analysis treats information as a property of the radiation history, and Mathur's theorem shows that arbitrary small local corrections to Hawking pair creation do not supply the purification mechanism \cite{Page1993,Mathur2009SmallCorrections}. Multi-time analyses sharpen the relevant diagnostic by showing how Hawking radiation can carry history-sensitive temporal correlations \cite{AnastopoulosSavvidou2020MultiTime}. The present construction gives that diagnostic a channel-specific origin: the connected covariance $G_{\rm conn}^{(\chi)}(v,s)$ of windowed helicity/Stokes observables is generated by the Peirce relay and is realized in the selected two-helicity exterior detector. Related soft and exterior-channel perspectives also point toward weak information carriers outside a violent horizon-scale spectral distortion \cite{Giddings2012Nonviolent,HawkingPerryStrominger2016SoftHair}.

The exterior detector has a controlled photonic realization: the gauge-invariant spin-$1$ problem on adiabatic Reissner--Nordstr{\"o}m slices, with the two transverse Maxwell helicities furnishing the two-component support \cite{lindquist1965,bonnorvaidya1970,Page1976I,Page1976II,HiscockWeems1990,moncrief1974a,moncrief1975,zerilli1974,chandrasekhar1983}. The low-frequency pure-photon luminosity scales as $r_+^4T_H^6$, distinct from the Stefan-like $A_+T_H^4$ law \cite{hod2016,ngampitipan2013}; the RN clock remains an effective evaporation clock after the photonic support is made explicit. Single-time occupations follow the greybody-filtered Hawking law at the adiabatic order used here. The new exterior observable is the connected two-time covariance of the emitted history: the mode-resolved fixed-support kernel induces a spectral overlap $W(v,s)$ and a phase-locked polarimetric covariance whose prefix integral gives the scalar coherence $c(v)$.

Superstatistics enters as the macroscopic Euclidean reading of the relay memory. Once the hidden complement has been integrated out and the exterior block has closed as a Volterra law on the RN clock, the opening and late regimes define different temporal ensembles. In the Beck--Cohen sense, a superstatistical representation is a positive Laplace superposition of local exponential weights, appropriate when local exponential response survives while the effective intensive data fluctuate on a slower scale \cite{BeckCohen2003,Beck2007,Bernstein1929,Widder1941}. Such mixtures underlie Tsallis--Pareto tails in nonequilibrium high-energy spectra \cite{Beck2009HEP,Tsallis1988,Tsallis2009}, and related nonextensive lattice gauge settings \cite{Frigori2014NonextensiveLGT}; the one-sided onset used here is represented by the Tsallis--Pareto/Lomax family \cite{Shalizi2007,Lomax1954}. The near-extremal branch is a shifted-L{\'e}vy residence law, the standard canonical form associated with broad waiting-time, first-passage, and residence-time statistics \cite{Feller1971,Redner2001,ZaburdaevDenisovKlafter2015}.

The Lorentzian channel solution gives the reduced detector entropy and the connected two-time coherence of the selected exterior block. The Euclidean transfer form of the Peirce--Volterra kernel gives branch weights on the RN clock: a regular-opening Tsallis/Lomax branch and a near-extremal shifted-L{\'e}vy residence branch. In the complete selected-detector background, the accumulated readout algebra is the outgoing algebra of that detector. The two Euclidean endpoint actions are then placed on the common source-fixed RN clock used to compare the branch actions, and their branch-admissible lower envelope reconstructs the Page-curve envelope. This gives a channel-level analogue of saddle/envelope dominance, rather than an Island prescription. The same positive branch measures also admit a replica/Mellin reading: shared-scale moments generate non-factorizing branch weights, providing the channel counterpart of the connected replica saddle used in Island calculations. The same Peirce--Volterra memory therefore has two controlled representations: operationally, it appears as a two-time exterior covariance; thermodynamically, it appears as a two-branch Euclidean transfer problem with an effective replica structure.

The paper is organized as follows. Section~\ref{sec:Albertian} proves the attractor--Wilsonian Albertian closure of the retained symbol algebra. Section~\ref{sec:Peirce} derives the Peirce relay architecture and identifies the ordinary detector block. Section~\ref{sec:background} fixes the RN trajectory and the electric Freudenthal slice used in the $5$D$\to4$D descent. Section~\ref{sec:channel} derives the Volterra memory law and the accumulated readout algebra. Section~\ref{sec:state} extracts the readout qubit, its spectral gap, its photonic realization, its transport rates, and the connected two-time readout observable. Section~\ref{sec:superstatistics} gives the Euclidean transfer reading of the Peirce--Volterra kernel and derives the branch-admissible Page envelope. Section~\ref{sec:conclusion} returns to the AMPS question posed above. Appendix~\ref{app:kernel} fixes the Albertian conventions and collects the source-fixed derivations behind the attractor map, the ordinary-slice support, the WKB barrier factorization, the RN clock, and the explicit kernel. Appendix~\ref{app:qubit} closes the hidden-sector history and records the reduced-channel admissibility criterion. Appendix~\ref{app:photonic} gives the gauge-invariant photonic support, the fixed-support emitted-history coherence, and the channel-rate formulas. Appendix~\ref{app:branchappendix} records the outgoing-history identification and the branch calculations used in Sec.~\ref{sec:superstatistics}: the pushforward form of the branch measure, its replica/Mellin tower, the regular-opening Tsallis/Lomax endpoint, the near-extremal shifted-L{\'e}vy residence endpoint, and the no-global-positive-transform argument for the hard branch envelope.

\section{The sectorial Albertian horizon and canonical structure}
\label{sec:Albertian}

This section fixes the finite horizon-side algebra used below. The claim is
sectorial: it identifies the ordered real algebra retained by the
bosonic two-derivative Bogomol'nyi--Prasad--Sommerfield (BPS) attractor sector of octonionic magical
supergravity. The Peirce readout and the sourced evaporation benchmark are
introduced only after this algebra has been fixed.

\begin{definition}[Retained attractor--Wilsonian symbol sector]
\label{def:SAW}
Let \(\mathcal S_{\rm AW}\) denote the retained attractor--Wilsonian sector
obtained from the bosonic two-derivative fields of octonionic magical
supergravity by imposing gauge-invariant horizon-side observables,
near-horizon BPS attractor boundary conditions, and retention of the
 two-derivative attractor data. A retained horizon-side symbol is a
gauge-invariant zero-mode function of the two-derivative near-horizon
attractor data that survives this Wilsonian projection. Independent soft,
boundary, higher-derivative, non-BPS, dyonic, and propagating symbols are not
generators of \(\mathcal S_{\rm AW}\) unless the sector is explicitly enlarged
to include them. This definition fixes the physical regime; it does not assume
an Albertian observable algebra in advance.
\end{definition}

\begin{theorem}[Attractor--Wilsonian Albertian closure]
\label{thm:AlbertianClosure}
Consider \(\mathcal S_{\rm AW}\) in the sense of
Definition~\ref{def:SAW}. Then the independent retained horizon-side symbols
are generated by the octonionic attractor package
\begin{equation}
\label{eq:attractor}
\begin{gathered}
Q\in J_3(\Oa),
\qquad
Q^{\#},
\qquad
N(Q),
\\[2pt]
X_*(Q)=\frac{Q^{\#}}{N(Q)^{2/3}} .
\end{gathered}
\end{equation}
The retained real symbol space carries the ordered package
\begin{equation}
\label{eq:ordered-package}
\bigl(
V_{\rm ret},
C_+,
\mathbf 1,
\mathrm{Tr},
\#,
N
\bigr),
\end{equation}
where \(C_+\) is the positive cone inherited from the retained
BPS-admissible octonionic attractor data. Equivalently, it is the cone generated
by positive combinations of primitive BPS-admissible idempotent attractor data.
This is the homogeneous self-dual cone with order unit \(\mathbf 1\) carried
by the retained attractor symbols. By Jordan cone reconstruction, the
corresponding Euclidean Jordan algebra is the Albert algebra \(J_3(\Oa)\).
Hence the retained finite horizon-side ordered symbol algebra is Albertian:
\begin{equation}
\label{eq:Hret}
\AH^{\rm ret}\cong J_3(\Oa).
\end{equation}
\end{theorem}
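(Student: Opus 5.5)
The plan has three stages: identify the generators of the retained symbols, build the ordered package on the real span they produce, and then apply Jordan cone reconstruction.

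\textbf{Stage 1: generators.} I start from the standard structure of octonionic magical supergravity. In the $5$D parent, the vector multiplets carry charges valued in $J_3(\Oa)$, and the U-duality group is $E_{6(-26)}$, which preserves the cubic norm $N$. The BPS attractor equations fix the near-horizon scalars in terms of the charges. Solving them gives $X_*(Q)=Q^{\#}/N(Q)^{2/3}$, where $Q^{\#}$ is the Freudenthal adjoint. Since Definition~\ref{def:SAW} keeps only gauge-invariant zero modes of the two-derivative attractor data, every retained symbol is a function of $Q$ alone. Covariance under the stabilizer of the attractor point, together with the classical invariant theory of the reduced structure group, then shows that polynomial covariants are generated by $Q$, $Q^{\#}$ and $N(Q)$. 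The identity $(Q^{\#})^{\#}=N(Q)Q$ closes this list, and the soft, dyonic and higher-derivative symbols are excluded by definition. This yields \eqref{eq:attractor}.

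\textbf{Stage 2: the ordered package.} I take $V_{\rm ret}$ to be the real span of the retained charge symbols. This is a $27$-dimensional real space, identified with $J_3(\Oa)$ as a vector space, and it carries the inherited $\mathbf 1$, $\mathrm{Tr}$, $\#$ and $N$. I define $C_+$ as the closed cone of BPS-admissible charges. Because the BPS regime is characterized by the principal minors of $Q$ being positive, $C_+$ is generated by the rank-one elements $Q$ with $Q^{\#}=0$ and $\mathrm{Tr}\,Q\ge0$, which are the primitive idempotent attractor data. I then check three properties.

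1. $C_+$ is \emph{self-dual} for the trace pairing $\langle x,y\rangle=\mathrm{Tr}(x\circ y)$. Here the Jordan product is recovered polarization-wise from $\#$ and $\mathrm{Tr}$ via $x\circ y=\tfrac12[(x+y)^{\#}-x^{\#}-y^{\#}]$ plus trace terms, i.e. the standard Freudenthal–Springer construction.

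2. $C_+$ is \emph{homogeneous}: the reduced structure group $E_{6(-26)}$, the subgroup of the norm-preserving group that preserves the cone, acts transitively on its interior, because the interior is the orbit of $\mathbf 1$, corresponding to the $N>0$ BPS charges.

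3. $\mathbf 1$ is an \emph{order unit}.

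\textbf{Stage 3: cone reconstruction.} With these properties in hand, I invoke the Koecher–Vinberg theorem: a finite-dimensional homogeneous self-dual cone with a distinguished interior point determines a unique Euclidean Jordan algebra with that point as unit. The cone is irreducible, since the structure group acts irreducibly on $V_{\rm ret}$. It has rank three, since the generic element has a cubic minimal polynomial with $N$ as the determinant. It has dimension $27$. The Jordan–von Neumann–Wigner classification then leaves only $J_3(\Oa)$ among the simple Euclidean Jordan algebras with these data; the candidate $H_3(\mathbb H)$ would have dimension $15$. This gives \eqref{eq:Hret}. As a consistency check, the reconstructed product coincides with the Springer product obtained from $(\mathrm{Tr},\#,N)$.

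\textbf{Main obstacle.} I expect the hardest step to be the first: showing that the retained symbols have no further independent generators beyond $Q$, $Q^{\#}$ and $N$, and that the BPS-admissible cone is exactly the self-dual cone rather than a proper subcone or orbit closure. My first approach would be the orbit classification of $J_3(\Oa)$ under $E_{6(-26)}$ (Ferrara–Günaydin, Krutelevich). It identifies the rank-one, rank-two and rank-three BPS orbits and shows that their closure with $\mathrm{Tr}\ge0$ is precisely the cone of squares. Generation of the invariant ring then follows from the fact that $N$ is the only independent polynomial invariant, with $\mathrm{Tr}$ becoming invariant only after the order unit $\mathbf 1$ is fixed by the attractor frame.
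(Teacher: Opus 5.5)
Your three stages follow the paper's route. Gauge invariance reduces the vector data to the fluxes $Q\in J_3(\Oa)$, the attractor fixes the scalars as $X_*(Q)$, and $N(Q)$ controls the near-horizon scale. The package $(\mathbf 1,\mathrm{Tr},\#,N)$, with the polarized cross product, is then combined with Koecher--Vinberg on the BPS positive cone.

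Two of your additions make explicit what the paper only asserts, and are worth keeping:
\begin{itemize}
\item the extreme-ray description of $C_+$: the elements with $Q^{\#}=0$ and $\mathrm{Tr}\,Q\ge 0$ are exactly the $\lambda e$ with $\lambda\ge0$ and $e$ primitive, so generation follows from the spectral theorem;
\item the identification of the reconstructed algebra through irreducibility, rank three and dimension $27$, where the paper simply states that ``for this exceptional cone'' the result is $J_3(\Oa)$. The rank condition is essential there, because it is what rules out the $27$-dimensional spin factor $\R\oplus\R^{26}$.
\end{itemize}

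\textbf{Homogeneity fails as written.} $E_{6(-26)}$ is the group of linear maps preserving $N$, so it cannot be transitive on $\mathrm{int}\,C_+$. That interior contains $t\mathbf 1$ with $N(t\mathbf 1)=t^3$ for every $t>0$. The $E_{6(-26)}$-orbit through $\mathbf 1$ is only the level set $\{X\in\mathrm{int}\,C_+:N(X)=1\}\cong E_{6(-26)}/F_4$, of dimension $78-52=26$. You need the dilations too: $\R_+\times E_{6(-26)}$ acts transitively, since $X/N(X)^{1/3}$ lies on that level set. Since homogeneity is a hypothesis of Koecher--Vinberg, this correction is needed for Stage 3 to go through.

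\textbf{Two smaller points.}

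First, the invariant-theory step in Stage 1 is not needed, and it mixes two groups. The claim that the polynomial covariants are $Q\,\R[N]$ and $Q^{\#}\,\R[N]$ holds for $E_{6(-26)}$. But the stabilizer of the attractor point is $F_4$, whose covariants also involve $\mathbf 1$, $\mathrm{Tr}\,Q$ and $\mathrm{Tr}\,Q^{\#}$. Moreover, Definition~\ref{def:SAW} imposes no duality covariance. What the theorem actually uses is that every retained zero mode is a function of $Q$. You already state this, and it is the content of the paper's flux/attractor/entropy argument.

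Second, you justify self-duality through the Jordan product recovered from $(\mathbf 1,\mathrm{Tr},\#,N)$. That product, and hence $V_{\rm ret}\cong J_3(\Oa)$, is therefore available before Koecher--Vinberg is invoked, so Stage 3 confirms the identification rather than deriving it. The paper's proof has the same shape, so this is not a departure from it. If you want the cone itself to carry the argument, take the pairing to be the Hessian of $-\log N$ at $\mathbf 1$, and establish self-duality without using the product.
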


\begin{proof}
The proof has two logically separate parts. First, the physical restriction
defining \(\mathcal S_{\rm AW}\) leaves no independent retained generators
beyond the attractor charge package. Second, the ordered cone carried by that
package reconstructs the Albert algebra.

The two-derivative bosonic field content consists of the metric, vector
fields, and scalar fields. Gauge invariance removes vector potentials as
elementary local symbols in the retained near-horizon sector. The
gauge-invariant vector data are the conserved fluxes. In octonionic magical
supergravity these fluxes are organized by an element
\[
Q\in J_3(\Oa)
\]
with cubic norm \(N(Q)\) and quadratic adjoint \(Q^{\#}\)
\cite{GunaydinSierraTownsend1983,GunaydinSierraTownsend1984,FerraraGunaydin2006,BianchiFerrara2008,GunaydinKidambi2022}.

The BPS attractor mechanism fixes the horizon scalar data in terms of the
charge. In the Jordan form used here this map is
\[
X_*(Q)=\frac{Q^{\#}}{N(Q)^{2/3}},
\]
so the retained scalar zero-mode symbols add no independent generators
\cite{FerraraKalloshStrominger1995,FerraraGibbonsKallosh1997,CeresoleFerraraMarrani2007}.
The retained near-horizon geometric scale and the two-derivative BPS entropy
are controlled by the same cubic invariant \(N(Q)\). Thus the metric, vector,
and scalar zero-mode symbols retained in \(\mathcal S_{\rm AW}\) are generated
by
\[
Q,
\qquad
Q^{\#},
\qquad
N(Q).
\]

It remains to identify the ordered algebra carried by these symbols. The
octonionic attractor package is not merely a list of charge components. In the
octonionic magical theory, the five-dimensional charge sector is organized by
the cubic Jordan algebra \(J_3(\Oa)\); this is the standard Jordan form of the
magical Maxwell--Einstein system
\cite{GunaydinSierraTownsend1983,GunaydinSierraTownsend1984,FerraraGunaydin2006,BianchiFerrara2008,GunaydinKidambi2022}.
Thus the retained charge variable comes with the structural maps of the
Albert algebra: the order unit \(\mathbf 1\), the trace form, the quadratic
adjoint \(X^{\#}\), and the cubic norm \(N(X)\)
\cite{Jordan1934,Albert1934,Schafer1966,McCrimmon2004,Krutelevich2007Jordan}.

These maps determine the cubic Jordan structure. In particular, polarizing the
quadratic adjoint gives the cross product
\begin{equation}
\label{eq:cross-product}
X\times Y
=(X+Y)^{\#}-X^{\#}-Y^{\#}.
\end{equation}
Together with the trace form and the order unit, this is the standard cubic
Jordan package of the Albert algebra
\cite{McCrimmon2004,Krutelevich2007Jordan,FarautKoranyi1994SymmetricCones}.

The cone \(C_+\) used here is the positive cone of the retained octonionic
attractor symbols. It can be described equivalently as the cone generated by
positive combinations of primitive BPS-admissible idempotent attractor data,
or as the spectral positive cone of \(J_3(\Oa)\) in the retained BPS
orientation. On the electric Freudenthal slice this is the exceptional
positive cone associated with \(J_3(\Oa)\). By the Koecher--Vinberg
reconstruction theorem, a finite-dimensional homogeneous self-dual cone with
order unit reconstructs a Euclidean Jordan algebra; for this exceptional cone
the reconstructed algebra is precisely \(J_3(\Oa)\)
\cite{Koecher1957Positivitaetsbereiche,Vinberg1963HomogeneousCones,FarautKoranyi1994SymmetricCones,BellissardIochum1978,Krutelevich2007Jordan}.
This proves Eq.~\eqref{eq:Hret}.
\end{proof}

With this definition in place, there is no second hidden observable algebra
inside the retained sector. The gauge-invariant quantities that survive the
attractor--Wilsonian projection are precisely the effective observables of
\(\mathcal S_{\rm AW}\). Hence the ordered symbol algebra reconstructed above
is the effective ordered observable algebra of the retained sector:
\begin{equation}
\label{eq:symbols-observables}
\mathcal A_{\rm ret}^{\rm obs}
\equiv
\mathcal A_{\rm ret}^{\rm sym}
\cong
J_3(\Oa).
\end{equation}

The algebraic-quantum description now follows from the ordered observable algebra.
Algebraic quantum mechanics assigns quantum kinematics to an ordered real
algebra of observables \cite{Segal1947,HaagKastler1964,Haag1992}.
For shorthand, within this retained finite horizon-side sector, we write
\begin{equation}
\label{eq:H}
\AH
\cong
J_3(\Oa).
\end{equation}
This shorthand always refers to the effective ordered observable algebra
identified above. The finite dimension of \(J_3(\Oa)\) is an algebraic
dimension. It is not a microscopic count of the Bekenstein--Hawking
degeneracy, which belongs to the microscopic completion rather than to the
retained finite symbol sector.

Thus the Albertian AQM identification is structural rather than postulated.
Once the retained horizon-side observables form \(J_3(\Oa)\), the
Jordan--von Neumann--Wigner classification places the sector in the unique
exceptional finite-dimensional simple formally real Jordan case. Since this
algebra is non-special, it is not realizable as the self-adjoint part of an
associative Hilbert-space operator algebra with the global Jordan product
preserved. The construction is therefore a finite Jordan-AQM description
induced by the effective ordered observable algebra above, not a deformation
quantization of the full supergravity phase space
\cite{Jordan1934,GunaydinPironRuegg1978,HancheOlsen1983,McCrimmon2004}.

\begin{corollary}[Albertian AQM of the retained sector]
\label{cor:AlbertianAQM}
For the retained sector of Theorem~\ref{thm:AlbertianClosure}, the
corresponding finite quantum kinematics is the exceptional algebraic quantum
mechanics of G\"unaydin, Piron, and Ruegg \cite{GunaydinPironRuegg1978}.
States are normalized positive linear functionals on \(J_3(\Oa)\), sharp events
are primitive idempotents, and reversible transformations are generated by
\begin{equation}
\label{eq:f4}
\mathrm{Der}(J_3(\Oa))\cong\mathfrak f_4 .
\end{equation}
\end{corollary}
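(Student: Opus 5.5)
The plan is to read the corollary as a direct consequence of Theorem~\ref{thm:AlbertianClosure} combined with the observable identification \eqref{eq:symbols-observables}. Once the effective ordered observable algebra is known to be $J_3(\Oa)$ with positive cone $C_+$ and order unit $\mathbf 1$, the three kinematic ingredients are fixed by the general algebraic-quantum dictionary. States are positive normalized functionals, sharp events are idempotents, and reversible motions are order-unit-preserving order automorphisms. All that remains is to check that, in the Albertian case, each ingredient takes the specific form stated in the corollary. I would proceed in three steps and cite standard Jordan-algebra results rather than recompute anything.

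\emph{States.} Using the trace form $\langle X,Y\rangle=\mathrm{Tr}(X\circ Y)$, which is positive definite on the Euclidean Jordan algebra $J_3(\Oa)$, together with self-duality of $C_+$, every positive functional $\omega$ can be written uniquely as $\omega(X)=\mathrm{Tr}(\rho\circ X)$ with $\rho\in C_+$. Normalization $\omega(\mathbf 1)=1$ then becomes $\mathrm{Tr}\,\rho=1$. This reproduces the density-matrix description of G\"unaydin--Piron--Ruegg.

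\emph{Sharp events.} Sharp events are the idempotents $P^2=P$. By the spectral theorem in rank-three Euclidean Jordan algebras, every idempotent is a sum of orthogonal primitive idempotents. The primitive ones, i.e.\ the trace-one projections $\mathrm{Tr}\,P=1$, form the octonionic projective plane $\Oa P^2$ and are the extreme rays of $C_+$. So the atomic sharp events are the primitive idempotents, and the remaining events are their orthogonal sums.

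\emph{Reversible transformations.} A linear bijection that preserves both $C_+$ and $\mathbf 1$ is a Jordan automorphism. This follows from the standard fact that the unital stabilizer of a symmetric cone equals $\mathrm{Aut}(V)$, as in Faraut--Kor\'anyi; there is also a Kadison-type argument. Consequently the reversible group is $\mathrm{Aut}(J_3(\Oa))\cong F_4$, its Lie algebra is $\mathrm{Der}(J_3(\Oa))$, and Chevalley--Schafer give $\mathrm{Der}(J_3(\Oa))\cong\mathfrak f_4$, which is \eqref{eq:f4}.

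The main obstacle is the third step. The cone automorphism group contains the larger structure group, which includes $E_{6(-26)}\times\R_+$. One therefore has to show that imposing $\mathbf 1$-preservation cuts this down exactly to $F_4$ and does not merely produce some subgroup of the stabilizer. My first approach would be the argument that any $g$ in the structure group fixing $\mathbf 1$ preserves the trace form, and hence the Jordan product through the generic minimal polynomial. If that route gets messy, I would instead cite the symmetric-cone theorem directly. A second point to state explicitly is that, because $J_3(\Oa)$ is non-special (as established in the text following Theorem~\ref{thm:AlbertianClosure}), this kinematics cannot be embedded in a Hilbert-space model. That is why it must be identified as the exceptional Jordan algebraic quantum mechanics rather than as ordinary quantum mechanics.
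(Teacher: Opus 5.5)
Your proposal is correct and follows essentially the paper's route: the corollary is read off from Theorem~\ref{thm:AlbertianClosure} and Eq.~\eqref{eq:symbols-observables} through the standard AQM dictionary, with the G\"unaydin--Piron--Ruegg kinematics and $\mathrm{Der}(J_3(\Oa))\cong\mathfrak f_4$ cited rather than recomputed. The only difference is that the paper takes reversible motions to be Jordan automorphisms by definition, so your identification of unital cone automorphisms with $\mathrm{Aut}(J_3(\Oa))\cong F_4$ is a valid strengthening it does not need; note also that the cone group $\R_+\times E_{6(-26)}$ is contained in the structure group, not the reverse.
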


Therefore, the resulting Heisenberg and dual state laws
\cite{Barnum2014,Niestegge2015} are the internal reversible laws of this
retained finite Jordanian algebra. Complementary sectors enter only after the
Wilsonian projection used in Theorem~\ref{thm:AlbertianClosure} is enlarged.
We write \(\delta\) for derivations of \(J_3(\Oa)\) and reserve \(D\) for the
projected channel blocks introduced after the Peirce/readout reduction. If
\(\delta_t\in\mathrm{Der}(J_3(\Oa))\), then the Heisenberg law is
\begin{equation}
\label{eq:heis}
\frac{\dd}{\dd t}A_t=\delta_t(A_t),
\qquad
A_t=\alpha_{t,s}(A_s),
\end{equation}
with automorphisms \(\alpha_{t,s}\) satisfying
\begin{equation}
\label{eq:auto-flow}
\frac{\dd}{\dd t}\alpha_{t,s}
=\delta_t\circ\alpha_{t,s},
\qquad
\alpha_{s,s}=\mathrm{id}.
\end{equation}
The dual evolution of states is
\begin{equation}
\label{eq:dualstate}
\varpi_t(A)=\varpi_0\!\left(\alpha_{t,0}(A)\right),
\qquad
\frac{\dd}{\dd t}\varpi_t(A)=\varpi_t(\delta_t A).
\end{equation}
On associative special sectors, Eqs.~\eqref{eq:heis}--\eqref{eq:dualstate}
reduce to the ordinary Heisenberg and von Neumann laws.

\begin{proposition}[Canonical Albertian package]
\label{prop:canonical-structure}
Given Eq.~\eqref{eq:H}, the state space, the cubic invariant, and the
Freudenthal envelope are canonically associated:
\begin{equation}
\label{eq:package}
\AH\cong J_3(\Oa)
\Longrightarrow
\bigl(
\Omega(J_3(\Oa)),
\ I_3=N,
\ \FTS(J_3(\Oa)),
\ I_4
\bigr).
\end{equation}
In particular, the five-dimensional entropy functional and the electric
four-dimensional descent used below can be written in Albertian variables
\cite{FerraraGunaydin2006,BianchiFerrara2008,CeresoleFerraraMarrani2007}.
\end{proposition}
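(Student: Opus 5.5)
The plan is to construct each entry of the package in Eq.~\eqref{eq:package} directly from the cubic Jordan data of $J_3(\Oa)$ already identified in Theorem~\ref{thm:AlbertianClosure}. These data are the order unit $\mathbf 1$, the trace form $\mathrm{Tr}$, the adjoint $\#$, the norm $N$, and the cross product \eqref{eq:cross-product}. Once every entry is a functor of these data, canonicity will follow by one uniform argument: every Jordan automorphism of $J_3(\Oa)$ preserves them, and hence preserves each construction.

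\emph{Step 1 (state space).} I will set $\Omega(J_3(\Oa))=\{\varpi\in J_3(\Oa)^*:\varpi(C_+)\ge 0,\ \varpi(\mathbf 1)=1\}$, as in Corollary~\ref{cor:AlbertianAQM}. Using the trace form $\langle X,Y\rangle=\mathrm{Tr}(X\circ Y)$, self-duality of $C_+$ identifies $\Omega$ with the base $\{\rho\in C_+:\mathrm{Tr}\rho=1\}$. This identification uses only the order structure and the order unit.

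\emph{Step 2 (cubic invariant).} I will take $I_3=N$, the determinant form of $J_3(\Oa)$. Its invariance group, the reduced structure group $E_{6(-26)}$, contains $\mathrm{Aut}=F_4$; and $N$ is recovered intrinsically from the generic minimal polynomial $X^3-\mathrm{Tr}(X)X^2+\mathrm{Tr}(X^\#)X-N(X)\mathbf 1=0$. Hence $N$ is fixed by the Jordan structure alone.

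\emph{Step 3 (Freudenthal envelope).} I will form $\FTS(J_3(\Oa))=\R\oplus\R\oplus J_3(\Oa)\oplus J_3(\Oa)$, with elements $x=(\alpha,\beta,A,B)$. Its symplectic form is built from $\mathrm{Tr}$. Its quartic form is
\begin{equation*}
I_4(x)=-\bigl(\alpha\beta-\mathrm{Tr}(A\circ B)\bigr)^2-4\alpha N(A)-4\beta N(B)+4\,\mathrm{Tr}(A^{\#}\circ B^{\#}),
\end{equation*}
normalized as in Krutelevich's construction. Its automorphism group is $E_{7(-25)}$, and the embedding $J_3(\Oa)\hookrightarrow\FTS$ is equivariant. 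The physical statement then follows from the attractor formulas. The 5D BPS entropy is $S\propto\sqrt{N(Q)}$. The electric 4D descent corresponds to the slice $(0,\beta,0,Q)$, on which $I_4$ reduces to a multiple of $\beta N(Q)$, giving $S_{4D}=\pi\sqrt{|I_4|}$ in Albertian variables as in \cite{FerraraGunaydin2006,CeresoleFerraraMarrani2007}.

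The main obstacle is making ``canonically associated'' precise. The Freudenthal construction involves normalization choices, namely the overall scale of $I_4$ and of the symplectic form. I would fix these by requiring that $I_4$ restrict on the electric slice to the 4D entropy formula. I would then argue that any other choice differs by an $E_{7(-25)}$ transformation and a positive rescaling. This shows the package is unique up to the natural equivalence, which is the sense in which the proposition should be read.
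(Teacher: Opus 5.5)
Your proposal is correct and takes essentially the paper's route. The paper gives no separate proof; it records the same standard constructions in Appendix~\ref{app:kernel} (normalized positive functionals, $I_3=N$, the Freudenthal envelope with $I_4$, and the electric-slice reduction $I_4(x_e)=-4\alpha_{\mathrm{FTS}}N(A)$ with $Q_0=|I_4|^{1/4}$) and defers to the cited literature, so your $F_4$-equivariance argument just makes ``canonical'' more explicit. The only cosmetic difference is that you put the charge in the $(\beta,B)$ block, $x=(0,\beta,0,Q)$, rather than the paper's $x_e=(\alpha_{\mathrm{FTS}},A;0,0)$, and the symplectic exchange of the two blocks maps one slice to the other.
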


\begin{remark}
Equation~\eqref{eq:package} records the canonical structures attached to the
retained Albertian sector. It does not yet select a Peirce representative, an
exterior detector, or a sourced evaporation trajectory. Those choices enter in
the Peirce/readout construction and in the sourced RN benchmark developed
below.
\end{remark}

\section{Peirce relay architecture and ordinary readout}
\label{sec:Peirce}

Peirce theory \cite{McCrimmon2004,Petersson2004,Petersson2019} turns the Albertian structure of Section~\ref{sec:Albertian} into the observable architecture used in the horizon problem. It first isolates the minimal sharp image associated with the chosen asymptotic sector, while the internal complex slice identifies the associative radiative block that carries the exterior detector.

For a primitive idempotent $e\in J_3(\Oa)$, the standard Peirce theory gives the decomposition
\begin{equation}
J_3(\Oa)=\Atwo\oplus \Aone\oplus \Azero,
\qquad \Atwo=\R e.
\label{eq:Peirce}
\end{equation}
The one-dimensional piece $A_2(e)=\R e$ is the minimal sharp image selected by $e$.
Its associated Peirce projector $\Pi_e$ is the projector onto $A_2(e)$.
The corresponding minimal sharp asymptotic image is therefore
\begin{equation}
\Ainf(e):=\Atwo=\R e.
\label{eq:Ainf}
\end{equation}
In the present setting, $A_\infty(e)$ is the smallest associative image extracted from the exceptional horizon algebra. The full detector algebra appears after the rank-two readout face and its complex slice are selected.

Primitive idempotents form the Cayley plane \cite{Baez2002,Petersson2019},
\begin{equation}
\Oa P^2\cong F_4/\mathrm{Spin}(9),
\label{eq:CayleyPlane}
\end{equation}
so the admissible minimal asymptotic image is rigid but not absolute.
\begin{proposition}[Uniqueness up to automorphism]
For any two primitive idempotents $e$ and $f$ there exists $g\in F_4$ such that $g(e)=f$.
Consequently,
\begin{equation}
g\Pi_e g^{-1}=\Pi_f,
\end{equation}
so the admissible minimal asymptotic image is unique only up to automorphism.
\end{proposition}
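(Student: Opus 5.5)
The plan is to prove transitivity of the automorphism group $F_4=\mathrm{Aut}(J_3(\Oa))$ on primitive idempotents and then get the projector relation by naturality of the Peirce decomposition. For the first part I will use the spectral theorem for the Euclidean Jordan algebra $J_3(\Oa)$: every element lies in a Jordan frame and can be brought by some $g\in F_4$ to diagonal form $\mathrm{diag}(\lambda_1,\lambda_2,\lambda_3)$ with real $\lambda_i$ (Faraut--Kor\'anyi, or Freudenthal's diagonalization for the Albert algebra). A primitive idempotent $e$ satisfies $e^2=e$ and $\Tr e=1$. Its eigenvalues therefore lie in $\{0,1\}$ and sum to $1$. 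So $e$ is conjugate under $F_4$ to $E_1=\mathrm{diag}(1,0,0)$, and likewise $f$. Taking $g=g_f^{-1}g_e$ gives $g(e)=f$.

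To make the diagonalization step concrete, I will use the standard parametrization of rank-one trace-one elements. These are the elements $e$ with $e^{\#}=0$ and $\Tr e=1$. Each can be written $e=\xi\xi^\dagger$ with $\xi\in\Oa^3$, $|\xi|=1$, and entries lying in a common associative subalgebra. This exhibits the set of primitive idempotents as the Cayley plane of Eq.~\eqref{eq:CayleyPlane}. Transitivity then amounts to the homogeneity $\Oa P^2\cong F_4/\mathrm{Spin}(9)$, where $\mathrm{Spin}(9)$ is the stabilizer of $E_1$. I will cite this rather than reprove it, since the paper already quotes Eq.~\eqref{eq:CayleyPlane}.

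The main obstacle is the diagonalization itself. Because $\Oa$ is non-associative, one cannot conjugate by octonionic unitary matrices naively. If a self-contained argument is needed, I would proceed in steps. First, the off-diagonal octonion entries can be made real by elements of $\mathrm{Spin}(8)\subset F_4$ acting via triality. Second, the problem is then reduced to $H_3(\R)$, where ordinary $SO(3)\subset F_4$ finishes the job.

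For the projector relation, let $g\in F_4$ be an automorphism. Then $g(x\circ y)=g(x)\circ g(y)$, so $g$ intertwines the multiplication operators: $gL_eg^{-1}=L_{g(e)}$. The Peirce spaces $A_i(e)$ are the eigenspaces of $L_e$ for eigenvalues $1,\tfrac12,0$. Hence $g(A_i(e))=A_i(g(e))$, and in particular $g(\R e)=\R f$. Since $\Pi_e$ is the spectral projector of $L_e$ onto the eigenvalue-$1$ space, it is a polynomial in $L_e$, namely $\Pi_e=L_e(2L_e-1)$. Therefore $g\Pi_eg^{-1}=L_f(2L_f-1)=\Pi_f$. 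Equivalently, $\Pi_e=U_e$, the quadratic representation, which is also natural under automorphisms. This completes the plan.
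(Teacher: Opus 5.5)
Your proposal is correct and follows the paper's proof: both rest on the transitive $F_4$ action on $\Oa P^2\cong F_4/\mathrm{Spin}(9)$, and both use the fact that automorphisms carry the Peirce data of $e$ to those of $g(e)$, which you make explicit via $gL_eg^{-1}=L_{g(e)}$ and $\Pi_e=U_e=L_e(2L_e-1)$. One caution on your optional self-contained sketch: $\mathrm{Spin}(8)$ preserves the diagonal, each $|x_i|$, and $\mathrm{Re}(x_1x_2x_3)$, so it cannot make all off-diagonal entries of a general element real; for a primitive idempotent, however, it suffices to make $x_1,x_2$ real, because $e^{\#}=0$ gives $\xi_3x_3=\overline{x_1x_2}\in\R$ (and $\xi_3=0$ forces $x_1=x_2=0$).
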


\begin{proof}
The transitive $F_4$ action on $\Oa P^2$ sends any primitive idempotent to any other.
Automorphisms preserve the Jordan product, hence the Peirce decomposition and the associated rank-one projector.
\end{proof}

There is an equally sharp converse.
\begin{proposition}[No absolute channel]
The abstract Albert algebra does not select an absolute asymptotic channel.
\end{proposition}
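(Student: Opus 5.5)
The plan is to make ``no absolute channel'' precise and then reduce it to the transitivity result just proved. An \emph{absolute asymptotic channel} should mean a primitive idempotent $e$, or equivalently its rank-one Peirce projector $\Pi_e$ onto $A_2(e)=\R e$, that is determined by the abstract Albert algebra alone. The abstract algebra is the Jordan product together with the canonical package $(\mathbf 1,\mathrm{Tr},\#,N)$ of Theorem~\ref{thm:AlbertianClosure}. Any prescription built only from this data is covariant under all automorphisms. So the claim becomes: no primitive idempotent, and no Peirce projector $\Pi_e$, is fixed by all of $F_4=\mathrm{Aut}(J_3(\Oa))$. I will argue by contradiction from a putative canonical selection.

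Suppose a rule selects a primitive idempotent $e_0$ using only the canonical structure. Then for every $g\in F_4$ we have $g(e_0)=e_0$, since $g$ preserves the product, the trace, the adjoint and the norm. By the preceding proposition (uniqueness up to automorphism), for any primitive idempotent $f$ there is a $g$ with $g(e_0)=f$. Hence $f=e_0$ for every primitive idempotent $f$, so $\Oa P^2$ would be a single point. This contradicts Eq.~\eqref{eq:CayleyPlane}: $\Oa P^2\cong F_4/\mathrm{Spin}(9)$ has dimension $52-36=16>0$. Concretely, the diagonal idempotents $E_1,E_2,E_3$ are distinct and are permuted by automorphisms. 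The same argument works at the level of projectors, because $g\Pi_e g^{-1}=\Pi_{g(e)}$ and $\Pi_e=\Pi_f$ forces $e=f$, since both are the unit-trace generators of the same line $\R e$.

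Two points need a little care. The first is that ``abstract'' covers the full retained package. Automorphisms of $J_3(\Oa)$ automatically preserve $\mathbf 1$, $\mathrm{Tr}$, $\#$ and $N$, because these are expressed through the Jordan product via the generic minimal polynomial. So enlarging the canonical data from Proposition~\ref{prop:canonical-structure} adds no selection. Here $\Omega$ and $\FTS$ are functorially attached and are acted on equivariantly. The second, which I expect to be the main obstacle, is ruling out selections that are not single idempotents, such as a canonical orbit or a state. I would handle this by noting that the only $F_4$-invariant elements of $J_3(\Oa)$ are multiples of $\mathbf 1$. This follows because the $26$-dimensional traceless part is an irreducible $F_4$-module with no invariant vector. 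Thus any canonically selected element is of the form $\lambda\mathbf 1$, which is not a primitive idempotent (its trace is $3\lambda\ne 1$ unless it fails idempotency). Similarly, the unique $F_4$-invariant state is the tracial state $\tfrac13\mathrm{Tr}$, which singles out no channel. I conclude that a channel is fixed only after external data, such as the asymptotic sector or the attractor frame, breaks $F_4$ to the stabilizer $\mathrm{Spin}(9)$.
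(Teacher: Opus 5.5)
Your proposal is correct and follows essentially the same argument as the paper: a selection built from the abstract algebra alone must be $F_4$-invariant, and the transitive, nontrivial $F_4$ action on primitive idempotents (equivalently, on the Peirce projectors $\Pi_e$) leaves no fixed point. What you add is supporting detail: the dimension count $52-36=16$, the implication $\Pi_e=\Pi_f\Rightarrow e=f$, and the fact that the only $F_4$-invariant elements are multiples of $\mathbf 1$ (and the only invariant state is $\tfrac13\mathrm{Tr}$). These make the paper's word ``nontrivial'' precise and extend the conclusion to selections by elements or states.
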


\begin{proof}
If a projector $\Pi_*$ were selected canonically by the abstract Albert algebra alone, it would have to be invariant under all automorphisms.
The corresponding primitive idempotent would then be fixed by the full $F_4$ action.
But the $F_4$ action on primitive idempotents is transitive and nontrivial, so no such fixed point exists.
\end{proof}

Albertian AQM therefore determines a unique orbit of admissible asymptotic images, not an absolute representative.
A concrete representative is fixed only after macroscopic asymptotic data select a superselection sector---for example a charge sector, asymptotic vacuum, or boundary datum inside the same asymptotically flat solution.
A primitive idempotent $e$ fixes the sharp asymptotic sector and the corresponding Peirce frame. It labels the asymptotic channel relative to which the observable algebra is represented and does not become an additional dynamical variable on the sourced benchmark.

In the source-fixed ray-preserving trajectory, the charge datum $Q_0$ fixes the attractor-adapted Jordan frame. The exterior detector selects a rank-two readout face inside that frozen frame, and $e$ is the complementary primitive idempotent. The complex slice $u$ is fixed operationally by the transverse two-helicity photon detector. Thus $(e,u)$ is the Peirce/readout datum selected by the frozen attractor frame and by the exterior detector.

The standard Peirce multiplication rules and the detector-selected ordinary slice are used in the form stated here; only the later sourced benchmark derivations are deferred to Appendices~\ref{app:attractor} and \ref{app:sourcedclockdetail}, while the projected-block and kernel derivations are deferred to Appendices~\ref{app:projectedblocks}--\ref{app:kernel_source}.
To recover an ordinary internal channel, choose a unit imaginary octonion $u\in\mathrm{Im}\,\Oa$ and the associated complex slice
\begin{equation}
\Cu=\mathrm{span}_{\R}\{1,u\}\subset \Oa.
\label{eq:Cu}
\end{equation}
Here $\Cu$ is a real two-dimensional associative subalgebra of the octonions isomorphic to $\C$; it is not a complexification of the Albert algebra.
The choice of $u$ fixes the ordinary associative readout sector. Different choices are related by the octonionic automorphism group and therefore belong to the same admissible orbit.
There is a canonical associative embedding \cite{McCrimmon2004,Petersson2004,Petersson2019}
\begin{equation}
H_3(\Cu)\hookrightarrow J_3(\Oa),
\label{eq:ordinaryslice}
\end{equation}
inside which the ordinary radiative block
\begin{equation}
\Qu\cong H_2(\C)
\label{eq:Qu}
\end{equation}
appears as a genuine two-level matrix system.
For the canonical primitive idempotent
\begin{equation}
e=\mathrm{diag}(1,0,0),
\label{eq:canonical-e}
\end{equation}
the Peirce blocks admit a concrete matrix realization that will be used later in the relay reduction. A generic interface element $a\in A_1(e)$ and a generic complementary element $q\in A_0(e)$ may be written as
\begin{equation}
a=
\begin{pmatrix}
0 & \bar z & \bar w\\
z & 0 & 0\\
w & 0 & 0
\end{pmatrix},
\qquad
q=
\begin{pmatrix}
0 & 0 & 0\\
0 & \alpha & x\\
0 & \bar x & \beta
\end{pmatrix},
\label{eq:Peirce-explicit}
\end{equation}
with $\alpha,\beta\in\R$ and $w,x,z\in\Oa$. Thus $A_1(e)$ is the explicit Peirce bridge linking the hidden sector to the lower $2\times 2$ block, while $A_0(e)$ is the lower block on which the ordinary readout slice is selected. Choosing a unit imaginary octonion $u$ and the associated complex slice $\Cu\subset\Oa$ restricts the lower block to the ordinary associative copy $H_2(\C)\subset H_2(\Cu)\subset A_0(e)$; this is the readout block $Q_u$, while its orthogonal complement inside $A_0(e)$ is the exceptional remainder $K_u$. Appendix~\ref{app:attractor} records this realization in full and fixes the notation used later in the source-fixed channel reduction.
Within the selected complex slice, this is the ordinary rank-two readout block seen by the asymptotic detector; trace, positivity, and von Neumann entropy survive there without reinstating the full globally associative AMPS stage.
On the controlled neutral-source benchmark fixed in Sec.~\ref{sec:background}, the attractor datum, the adapted frame $(e,u)$, and the electric polarization remain fixed, so these choices do not become additional dynamical variables.

The complementary exceptional sector inside $\Azero$ is denoted by $\Ku$, so that
\begin{equation}
\Azero=\Qu\oplus \Ku,
\qquad 1+16+4+6=27.
\label{eq:refineddecomp}
\end{equation}
Here the four summands count the real dimensions of $A_2(e)$, $A_1(e)$, $Q_u$, and $K_u$ respectively: $\dim_\R A_2(e)=1$, $\dim_\R A_1(e)=16$, $\dim_\R Q_u=4$, and $\dim_\R K_u=6$, adding up to $\dim_\R J_3(\Oa)=27$.
The sector $K_u$ is the genuinely exceptional remainder inside $A_0(e)$: it has no ordinary matrix analogue and is not directly accessible to the asymptotic detector.
The resulting relay architecture is
\begin{equation}
\Ku\longrightarrow \Aone \longrightarrow \Qu.
\label{eq:relaychain}
\end{equation}
The meaning of Eq.~\eqref{eq:relaychain} is physical, not merely diagrammatic: information stored in the complementary exceptional sector can reach the radiative block only through the interface $A_1(e)$, which functions as the relay between the hidden horizon sector and the ordinary outgoing channel.

Figure~\ref{fig:penrose} summarizes the channel-memory architecture rather than a literal causal Penrose diagram.
The upper red region represents the globally exceptional horizon sector governed by $J_3(\Oa)$.
The gold dashed Peirce projection marks the extraction of the minimal sharp image $A_\infty(e)=\R e$.
The full ordinary readout is the rank-two block $Q_u\simeq H_2(\C)$, selected only after the complex slice $u$ is fixed; the interface $A_1(e)$ relays hidden exceptional data from $K_u$ into that radiative block.
The figure is the geometric shorthand for the three ingredients that organize the rest of the paper: an exceptional horizon sector, a Peirce-selected readout, and a relay architecture that turns hidden exceptional data into outgoing channel memory.

\begin{figure}[t!]
    \centering
    \includegraphics[width=1.0\linewidth]{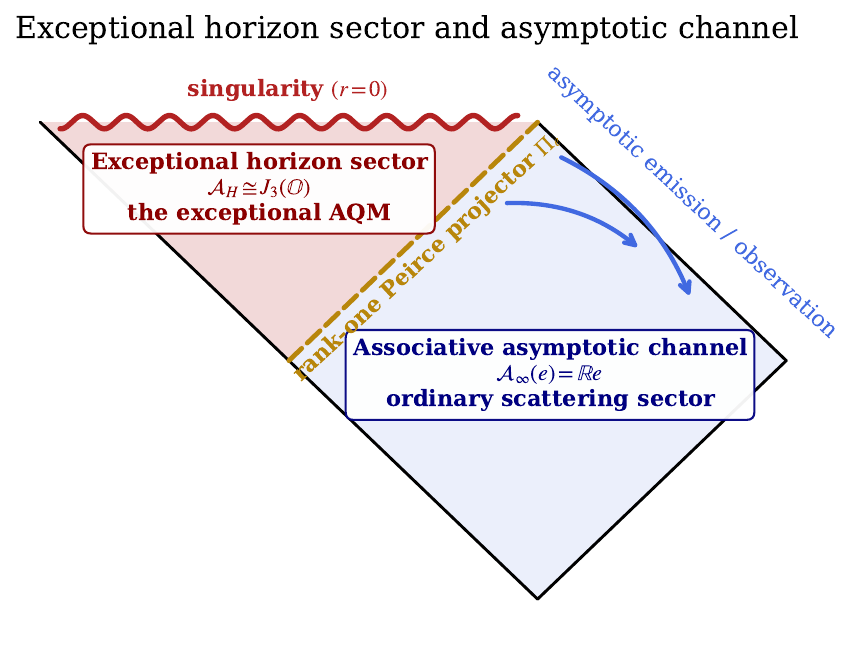}
    \caption{Schematic summary of the channel-memory architecture used throughout the paper; it is not a literal causal Penrose diagram.
The upper red region represents the exceptional horizon sector $\mathcal A_H\cong J_3(\mathbb O)$.
The gold dashed line indicates the Peirce projection onto the minimal sharp image $A_\infty(e)=\mathbb R e$.
The full ordinary readout is the rank-two block $Q_u\cong H_2(\mathbb C)$, selected after the complex slice $u$ is fixed.
The interface $A_1(e)$ relays hidden exceptional data from $K_u$ into that radiative block.
The figure condenses the logic of Section~\ref{sec:Peirce}: local two-generated special sectors remain compatible with no-drama effective reasoning, an ordinary readout survives, and the obstruction concerns the global exceptional stage required by AMPS rather than the detector algebra itself.}
    \label{fig:penrose}
\end{figure}

\begin{proposition}[Albert--Freudenthal architecture]
Assume Eq.~\eqref{eq:H}.
Then the relevant horizon physics is organized by the nested chain
\begin{equation}
\mathcal{A}^{(2)}_{\mathrm{loc}}
\subset H_3(\Cu)
\subset J_3(\Oa)
\subset \FTS\bigl(J_3(\Oa)\bigr),
\label{eq:squeeze}
\end{equation}
with the following properties:
\begin{enumerate}[label=(\roman*)]
\item the local two-generated sectors $\mathcal A_{\mathrm{loc}}^{(2)}$ remain effectively special and therefore compatible with smooth-horizon effective-field-theory reasoning;
\item the internal associative channel $H_3(\Cu)$ supports ordinary trace, positivity, and von Neumann entropy, with radiative block $\Qu\cong H_2(\C)$;
\item the global exceptional sector $J_3(\Oa)$ obstructs the globally associative AMPS tensor-factor stage with ordinary qubit-like radiation;
\item the electric four-dimensional slice sits naturally inside the Freudenthal envelope $\FTS(J_3(\Oa))$, whose quartic invariant reduces on the natural Jordanian slices to the cubic Albert data.
\end{enumerate}
\end{proposition}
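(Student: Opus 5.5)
The plan is to verify the chain of inclusions first and then check the four properties one at a time, each reduced to a standard Jordan-algebra fact already invoked in the paper.

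\emph{The chain in Eq.~\eqref{eq:squeeze}.} The inclusion $H_3(\Cu)\subset J_3(\Oa)$ is the canonical embedding Eq.~\eqref{eq:ordinaryslice}, induced by the subalgebra inclusion $\Cu\subset\Oa$ of Eq.~\eqref{eq:Cu}. The inclusion $J_3(\Oa)\subset\FTS(J_3(\Oa))$ is the embedding of $J_3(\Oa)$ as one of the four components of the Freudenthal triple $(\alpha,\beta,A,B)$ with $\alpha,\beta\in\R$ and $A,B\in J_3(\Oa)$, as in Proposition~\ref{prop:canonical-structure}. For the leftmost inclusion I will define $\mathcal A^{(2)}_{\rm loc}$ as the Jordan subalgebra generated by two local horizon observables. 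Shirshov--Cohn then gives that every two-generated Jordan algebra is special. To place it inside $H_3(\Cu)$, I would apply an $F_4$ automorphism: one generator is diagonalized into the canonical frame by transitivity on frames (the Cayley-plane proposition). The second generator has octonionic off-diagonal entries, which generate an associative (quaternionic) subalgebra by Artin's theorem. This is a weak point: the entries land only in some $\mathbb H\supset\Cu$, so the honest inclusion is into $H_3(\mathbb H)$ up to automorphism. I would state (i) at that level, or restrict $\mathcal A^{(2)}_{\rm loc}$ to pairs whose entries lie in a common complex slice.

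\emph{Property (i).} This follows from Shirshov--Cohn. The link to smooth-horizon effective field theory is interpretive: special means the algebra is realized as self-adjoint operators with the standard Jordan product, which is all that local effective field theory uses.

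\emph{Property (ii).} $H_3(\Cu)\cong H_3(\C)$ is the self-adjoint part of $M_3(\C)$. It therefore inherits the trace, the operator positivity cone (which agrees with the restricted spectral cone of $J_3(\Oa)$), and the entropy $-\Tr\rho\log\rho$. The block $\Qu$ is the lower $2\times2$ corner cut out by the Peirce space $A_0(e)$ of Eq.~\eqref{eq:Peirce-explicit} with entries in $\Cu$, so $\Qu\cong H_2(\C)$.

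\emph{Property (iii).} This is the hard step, because "obstructs the AMPS stage" has to be made precise. I would formalize the AMPS stage as a Jordan embedding $J_3(\Oa)\hookrightarrow B(\mathcal H_R\otimes\mathcal H_B\otimes\mathcal H_A)_{\rm sa}$ that preserves the global Jordan product, with the readout mapped to a qubit factor. Any such embedding would make $J_3(\Oa)$ special. This contradicts the Albert non-speciality theorem \cite{Albert1934,McCrimmon2004}, which is already used after Eq.~\eqref{eq:symbols-observables}. The simplicity of $J_3(\Oa)$ rules out the escape of a nontrivial kernel, because the unit is nonzero. The care needed is to say that only the global product is obstructed, while (i) and (ii) leave the local and readout pieces intact.

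\emph{Property (iv).} I would use Freudenthal's quartic norm
\[
I_4(\alpha,\beta,A,B)=-(\alpha\beta-\Tr(A,B))^2-4\alpha N(A)+4\beta N(B)-4\,\Tr(A^{\#},B^{\#}).
\]
On the electric slice $(0,\beta,0,B)$ it reduces to $4\beta N(B)$, which is cubic Albert data. This matches the descent of Proposition~\ref{prop:canonical-structure}.
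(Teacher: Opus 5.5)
For (i), (ii) and (iv) you follow the paper: two-generator specialness (Shirshov--Cohn), the embedding \eqref{eq:ordinaryslice} with its block \eqref{eq:Qu}, and the slice reduction of $I_4$ behind Proposition~\ref{prop:canonical-structure}.

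\textbf{Property (iii): a different key lemma.} The paper uses Hanche-Olsen, sharpened by Barnum--Graydon--Wilce. Their result is that an exceptional factor admits no JB (Euclidean Jordan) composite with the qubit $M_2(\C)_{\rm sa}$, because such a composite would itself force the factor to be special. You instead model the AMPS stage as a Jordan embedding of $J_3(\Oa)$ into $B(\mathcal H_R\otimes\mathcal H_B\otimes\mathcal H_A)_{\rm sa}$. You then exclude it by Albert's non-speciality together with simplicity. This is correct and elementary; it is essentially the remark the paper makes after Eq.~\eqref{eq:symbols-observables}. Its cost is that it presupposes the ambient stage is already associative, so the qubit clause of (iii) does no work. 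The composite theorem assumes only a Jordan composite and places the obstruction in tensoring with qubit-like radiation itself, which is the literal content of (iii).

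\textbf{A step in the leftmost inclusion fails as written.} After diagonalizing $X$, the second generator $Y$ carries three off-diagonal octonions $y_1,y_2,y_3$. Artin's theorem controls only two elements, and three generic octonions generate all of $\mathbb O$, so no quaternionic subalgebra follows. Diagonalizing $X$ also needs the spectral theorem and transitivity of $F_4$ on Jordan frames, not only transitivity on the Cayley plane. The right tool is the stabilizer of the diagonal frame, $\mathrm{Spin}(8)$, which acts on $(y_1,y_2,y_3)$ by triality:
\begin{enumerate}[label=(\alph*)]
\item rotate $y_1$ into $\R_{\ge0}$;
\item the residual $\mathrm{Spin}(7)$ is transitive on the unit sphere of the second slot, so rotate $y_2$ into $\R_{\ge0}$;
\item the residual $G_2$ acts on $y_3$ by octonion automorphisms and moves $\mathrm{Im}\,y_3$ onto $\R u$.
\end{enumerate}
Then $X,Y\in H_3(\Cu)$, so $g\bigl(\Jor\langle X,Y\rangle\bigr)\subset H_3(\Cu)$ for some $g\in F_4$. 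This gives the stated inclusion up to automorphism without retreating to $H_3(\mathbb H)$, and it also yields speciality directly. The paper's own proof does not derive this inclusion at all.

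\textbf{The quoted $I_4$ has a sign error.} On diagonal $A,B$ (the STU truncation) $I_4$ must reduce to Cayley's hyperdeterminant. That forces the $\Tr(A^{\#},B^{\#})$ term to enter with the sign opposite to the square, and your formula has them with the same sign. This is harmless here: on either slice both $\Tr$-terms and one cubic term vanish. Evaluate on the paper's slice $x_e=(\alpha_{\rm FTS},A;0,0)$ rather than on $(0,\beta,0,B)$; there your formula gives $-4\alpha_{\rm FTS}N(A)$, matching Eq.~\eqref{eq:I4slice}.
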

\begin{proof}
Property (i) follows from two-generated specialness: in the octonionic realization this is Artin's theorem, and in Jordan form it is the Shirshov--Cohn two-generator specialness theorem \cite{Baez2002,Schafer1966,Cohn1954,JacobsonPaige1957}.
Property (ii) follows from the embedding~\eqref{eq:ordinaryslice} and the ordinary block~\eqref{eq:Qu}.
Property (iii) is the exceptional compositional obstruction: Hanche-Olsen excludes the standard qubit composite for an exceptional Jordan factor, and Barnum--Graydon--Wilce sharpen the same conclusion in the language of Euclidean Jordan composites \cite{HancheOlsen1983,BarnumGraydonWilce2020}.
Property (iv) is the content of Proposition~\ref{prop:canonical-structure}, which identifies the canonical Freudenthal envelope and its reduction on the Jordanian slices.
\end{proof}
The compositional obstruction used here is a statement about the retained Jordan-AQM observable sector. It is not asserted as a theorem about the full operator algebra of quantum fields in the black-hole spacetime. Its role is to show that the AMPS Hilbert-factor stage cannot be installed inside the retained Albertian horizon algebra without replacing that sector by an associative composite.

The sectorial Albertian identification isolates the three scales used below: local smooth-horizon reasoning in the two-generated sectors, an ordinary radiative channel in the complex slice, and the global obstruction at the full exceptional level. In this sense Peirce theory does not merely decompose the algebra; it organizes the horizon.
The next sections keep precisely this architecture fixed while adding the sourced background and the reduced channel dynamics.

\section{Sourced background, attractor anchor, and electric descent}
\label{sec:background}

The five-dimensional attractor equations may be written in base-free Jordan form as \cite{FerraraKalloshStrominger1995,FerraraKallosh1996,FerraraGibbonsKallosh1997,MeessenOrtinPerzShahbazi2012d5,CeresoleFerraraMarrani2007}
\begin{equation}
Q=Z_*X_*^{\#},
\qquad
X_*(Q)=\frac{Q^{\#}}{N(Q)^{2/3}}.
\label{eq:basefree}
\end{equation}
Suppose that the sourced charge trajectory preserves a Jordanian charge ray,
\begin{equation}
Q(v)=s(v)Q_0.
\label{eq:charge-ray}
\end{equation}
Because the adjoint is quadratic and the cubic norm is homogeneous of degree three, one obtains the identity
\begin{equation}
X_*(Q(v))=X_*(Q_0).
\label{eq:survivalid}
\end{equation}
The exponent $2/3$ is fixed by cubic homogeneity: under $Q\mapsto sQ$, one has $Q^{\#}\mapsto s^2Q^{\#}$ and $N(Q)^{2/3}\mapsto s^2N(Q)^{2/3}$. Thus the attractor-adapted horizon frame remains frozen on the charge-ray-preserving class. If a non-benchmark fluctuation $\delta Q$ is retained, then to first order
\begin{equation}
\delta X_* = N(Q)^{-2/3}\left[\delta Q^{\#}-\frac{2}{3}Q^{\#}\frac{\delta N}{N}\right]+O(\delta Q^2),
\label{eq:delta-attractor-frame}
\end{equation}
so Peirce-frame transport is suppressed only when $\|\delta Q\|/\|Q\|$ is small and the charge is nondegenerate. Such stochastic charge-ray or frame-transport corrections are outside the controlled neutral-source benchmark. Appendix~\ref{app:attractor} proves Eq.~\eqref{eq:survivalid} line by line.
The reference background used throughout the quantitative part of the paper is the neutral-source fixed-charge member. Here ``neutral-source RN'' means a fixed-charge Reissner--Nordstr{\"o}m background whose null source drains mass but carries no independent charge current. Thus
\begin{equation}
Q(v)=Q_0,
\label{eq:neutralbg}
\end{equation}
for which the excitation above extremality is
\begin{equation}
\eexc(v)=M(v)-M_{\mathrm{BPS}}(Q_0),
\label{eq:Eexc}
\end{equation}
and the effective evaporative clock is
\begin{equation}
\Gamma_{\mathrm{evap}}(v)=-\frac{\dd}{\dd v}\log \eexc(v).
\label{eq:GammaEvap}
\end{equation}
The equivalent emitted-fraction parametrization is
\begin{equation}
r(v)=1-\frac{\eexc(v)}{\eexc(0)},
\qquad
\eexc(v)=\eexc(0)[1-r(v)].
\label{eq:rdef}
\end{equation}
The variables $v$, $\eexc(v)$, and $r(v)$ therefore describe the same history in this setting. Appendix~\ref{app:sourcedclockdetail} derives the normalized sourced evaporative clock used later in the benchmark figures and rate formulas.
At the bosonic level, the sourced problem is governed by a null-fluid sourced Einstein--Maxwell system of Vaidya type \cite{Vaidya1943}
\begin{align}
G_{MN}&=T^{\mathrm{SUGRA}}_{MN}+T^{H(5)}_{MN},
\label{eq:sourcedEinstein}
\\
\nabla_M\bigl(a_{IJ}(\phi)F^{JMN}\bigr)&=J^{H(5)N}_I,
\label{eq:sourcedMaxwell}
\end{align}
with null source
\begin{equation}
\begin{aligned}
T^{H(5)}_{MN}&=\rho_5(v,r)\,\ell_M\ell_N,\\
J^{H(5)N}_I&=j_I(v,r)\,\ell^N,\\
\ell_M\ell^M&=0.
\end{aligned}
\label{eq:nullsource}
\end{equation}
In the neutral background one sets
\begin{equation}
J^{H(5)N}_I=0,
\label{eq:no-current}
\end{equation}
so the null fluid drains the mass while the attractor datum and the scalar couplings remain frozen.
This makes the reference background both controlled and nontrivial.
For the sourced metric
\begin{equation}
\begin{split}
 ds^2 &= -f(u,r)du^2 - 2\,du\,dr + r^2d\Omega_2^2, \\
 f(u,r) &= 1 - \frac{2M(u)}{r} + \frac{Q_0^2}{r^2},
\end{split}
\label{eq:vaidya-rn-main}
\end{equation}
one finds
\begin{equation}
G_{\mu\nu}-8\pi T^{\mathrm{EM}}_{\mu\nu}
= -\frac{2\dot M(u)}{r^2}\,\delta_\mu^u\delta_\nu^u,
\label{eq:extrastress-main}
\end{equation}
so the geometric mass-loss term requires an additional null-fluid source.
The species-resolved photonic calculation is therefore performed on adiabatic RN slices of this sourced background, not by reinterpreting the effective metric itself as a pure-photon Einstein--Maxwell solution. The adiabatic treatment is restricted to modes whose scattering time is short compared with the sourced-background variation time; equivalently, for a representative mode frequency $\omega$ one requires $\omega\tau_{\rm bg}(v)\gg1$, with $\tau_{\rm bg}$ defined explicitly in Appendix~\ref{app:spin1}.
Appendix~\ref{app:spin1} gives the component calculation and the Maxwell uplift.

We use the standard Kaluza--Klein reduction from the five-dimensional Jordanian charge sector to the four-dimensional Freudenthal charge description \cite{GunaydinSierraTownsend1984,FerraraGunaydin2006,Krutelevich2007Jordan}. In this notation $y$ is the compact coordinate, $A^0$ is the Kaluza--Klein vector, $\varphi$ is the reduction scalar, and $\zeta^I$ are the axions descending from the five-dimensional gauge fields. The metric and gauge fields are written as
\begin{align}
ds_5^2&=e^{\varphi/\sqrt3}ds_4^2+e^{-2\varphi/\sqrt3}(\dd y+A^0)^2,
\label{eq:KKmetric}
\\
A^I_{(5)}&=A^I_{(4)}+\zeta^I(\dd y+A^0).
\label{eq:KKgauge}
\end{align}
The four-dimensional charge vector is organized as an element of the Freudenthal triple system (FTS) $\FTS(J_3(\Oa))$. The electric slice is parametrized by the Freudenthal normalization constant $\alpha_{\mathrm{FTS}}$, an overall scale convention in the electric FTS block that is absorbed into the charge normalization $Q_0$ below. On this slice the magnetic Jordan block and the axionic Wilson-line components are set to zero, so one imposes
\begin{equation}
\zeta^I=0,
\qquad
x_e=
\begin{pmatrix}
\alpha_{\mathrm{FTS}} & A\\
0 & 0
\end{pmatrix},
\label{eq:xelectric}
\end{equation}
so that the quartic Freudenthal invariant reduces to
\begin{equation}
I_4(x_e)=-4\alpha_{\mathrm{FTS}}N(A),
\qquad
Q_0=|I_4(x_e)|^{1/4}.
\label{eq:I4slice}
\end{equation}
This electric slice is the one used in the source-fixed closure of the bridge coefficients. It is not a generic four-dimensional Freudenthal charge vector: the complementary FTS block is set to zero in this benchmark, so the four-dimensional realization remains controlled by the cubic Albertian invariant inherited from the five-dimensional Jordanian charge sector. The normalization $\alpha_{\mathrm{FTS}}$ fixes the conventional scale of the FTS singlet and is absorbed into $Q_0$ in the channel calculation.
Appendix~\ref{app:attractor} records the descent in the form used later in the qubit and superstatistical analyses.
Figure~\ref{fig:sourcedbg} shows the sourced background on the common normalized interval $0\le v\le 10$.
The upper panel plots the normalized excitation energy together with the peak-normalized evaporative pulse $\Gamma_{\mathrm{evap}}(v)/\Gamma_{\mathrm{evap}}^{\max}$, so that the decay and rate profiles can be compared on the same scale.
The lower panel uses the exact identity $\eexc(v)=M(v)-M_{\mathrm{BPS}}(Q_0)$ and displays the frozen BPS anchor together with the sourced excitation above it on the same interval.
The benchmark is controlled by the decay of $\eexc(v)$, not by an externally imposed pulse.
\begin{figure}[t!]
    \centering
    \includegraphics[width=1.00\linewidth]{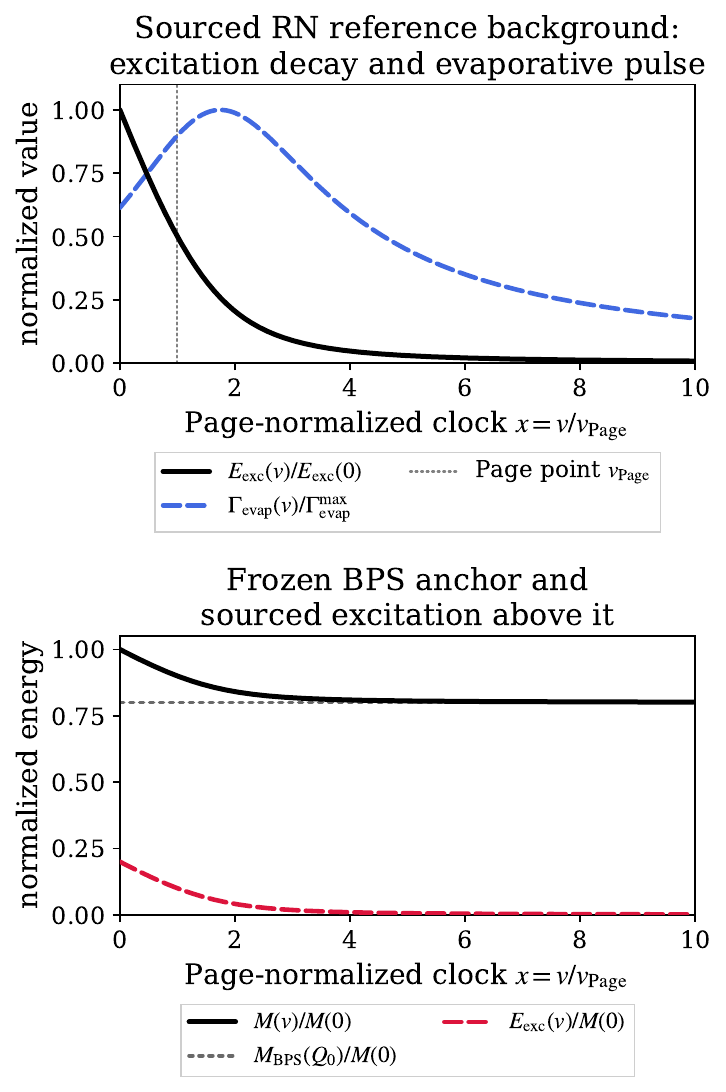}
    \caption{Neutral-source fixed-charge Reissner--Nordstr{\"o}m background. Top: normalized excitation energy and peak-normalized evaporative pulse $\Gamma_{\mathrm{evap}}(v)/\Gamma_{\mathrm{evap}}^{\max}$; the vertical marker is the Page point defined by $\eexc(v_{\mathrm{Page}})=\eexc(0)/2$. Bottom: exact benchmark mass decomposition in units of $M(0)$, with $M(v)=M_{\mathrm{BPS}}(Q_0)+\eexc(v)$, the frozen BPS anchor, and the sourced excitation above it.}
    \label{fig:sourcedbg}
\end{figure}

\section{Emitted-history algebra and pair-polarized Volterra law}
\label{sec:channel}

Projecting the sectorial Albertian evolution onto the Peirce-selected readout gives the reduced history law and the entropy attached to the accumulated channel algebra.
The entropy used below is the entropy of the state restricted to the outgoing ordinary channel selected by the Peirce organization, not of the full exceptional horizon algebra.

The accumulated channel algebra is the algebra of observables that have reached the asymptotic readout by time $v$.
Fix a primitive idempotent $e$ and a slice $u$.
The symbol $\alpha_s$ denotes Heisenberg evolution in the horizon algebra, $\Pi_u$ projects to the ordinary slice selected by $u$, $\pi_Q$ projects from that slice to the radiative block $\Qu$, $\Aone$ is the relay sector that links the hidden complement $\Ku$ to the readout block $\Qu$, and $\Jor\langle\cdots\rangle$ denotes the Jordan algebra generated by the enclosed observables.
With this notation, the minimal accumulated radiation algebra is
\begin{equation}
R_v^{\min}=
\Jor\Big\langle
\pi_Q\Pi_u\alpha_s(X): X\in \Aone,\ 0\le s\le v
\Big\rangle.
\label{eq:Rmin}
\end{equation}
It collects the outgoing observables that have actually reached the ordinary channel by time $v$.

Let $\varpi_v$ denote the evolved horizon state at time $v$.
We define the reduced-channel entropy of this primitive accumulated channel by
\begin{equation}
S_A(v)=S\bigl(\varpi_v\!\upharpoonright R_v^{\min}\bigr).
\label{eq:SAdef}
\end{equation}
This is the entropy later plotted in Fig.~\ref{fig:entropy}: it is the ordinary von Neumann entropy of the state restricted to the associative readout algebra generated inside $\Qu\cong H_2(\C)$, not an entropy of the full non-associative Albert algebra or of the full exterior quantum field theory (QFT).

Let $\mathfrak A_{\mathrm{out}}([0,v])$ denote the Jordan algebra generated by all asymptotic observables that have reached the selected exterior readout sector during the interval $[0,v]$.
On the neutral-source fixed-charge Reissner--Nordstr{\"o}m benchmark, we fix that sector to be the gauge-invariant transverse spin-1 readout carried by the two Maxwell helicities of the ordinary block $\Qu$ \cite{moncrief1974a,moncrief1975,zerilli1974,chandrasekhar1983}.
In this benchmark the readout sector is complete and carries no independent leakage: every gauge-invariant asymptotic observable in that sector is generated by the ordinary channel, while the vanishing of $J_I^{H(5)N}$ together with the fixed-support photonic closure established in Sec.~\ref{sec:state} and Appendix~\ref{app:fixedsupport} excludes an additional gauge-invariant outgoing algebra independent of that channel.
This yields the following benchmark statement.
\begin{proposition}[Outgoing-history algebra of the selected exterior sector]
\label{prop:AoutEqualsRmin}
On the neutral-source fixed-charge Reissner--Nordstr{\"o}m benchmark, the emitted-history algebra of the complete selected exterior readout sector up to time $v$ coincides with the accumulated ordinary channel algebra,
\begin{equation}
\mathfrak A_{\mathrm{out}}([0,v])=R_v^{\min}.
\label{eq:AoutEqualsRmin}
\end{equation}
Consequently,
\begin{equation}
S_A(v)=S\bigl(\varpi_v\!\upharpoonright\mathfrak A_{\mathrm{out}}([0,v])\bigr),
\label{eq:SAout}
\end{equation}
so this reduced-channel entropy is the fine-grained emitted-radiation entropy of that selected exterior readout sector, although not the entropy of the full exceptional horizon system, the full exterior QFT, or a microscopic count of all black-hole states.
Appendix~\ref{app:outgoinghistory} supplies the proof. If additional independent leakage channels are retained, Eq.~\eqref{eq:AoutEqualsRmin} is replaced by the inclusion $R_v^{\min}\subset\mathfrak A_{\mathrm{out}}([0,v])$; equality is the complete selected-readout benchmark.
\end{proposition}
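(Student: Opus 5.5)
The equality $\mathfrak A_{\mathrm{out}}([0,v])=R_v^{\min}$ will follow from two inclusions; Eq.~\eqref{eq:SAout} is then immediate by substituting into the definition~\eqref{eq:SAdef}.

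\emph{First inclusion, $R_v^{\min}\subset\mathfrak A_{\mathrm{out}}([0,v])$.} Take a generator $\pi_Q\Pi_u\alpha_s(X)$ with $X\in\Aone$ and $0\le s\le v$. The automorphism $\alpha_s$ transports $X$ along the Heisenberg flow~\eqref{eq:heis}. The map $\Pi_u$ restricts to the ordinary slice $H_3(\Cu)$, and $\pi_Q$ lands in $\Qu\cong H_2(\C)$. By the benchmark identification, $\Qu$ is the block carried by the two transverse Maxwell helicities. So this generator is a gauge-invariant asymptotic observable of the selected exterior sector, and it arrives at the readout at time $s\le v$. Hence it lies in $\mathfrak A_{\mathrm{out}}([0,v])$. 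That set is a Jordan algebra by definition, so it also contains the Jordan algebra the generators produce. This direction holds with or without leakage, which explains the fallback inclusion stated at the end of the proposition.

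\emph{Second inclusion, $\mathfrak A_{\mathrm{out}}([0,v])\subset R_v^{\min}$.} I would proceed in three steps.
\begin{enumerate}[label=(\alph*)]
\item Every gauge-invariant asymptotic observable of the selected spin-$1$ sector is a windowed functional of the two helicity amplitudes. On adiabatic RN slices, the fixed-support photonic closure of Appendix~\ref{app:fixedsupport} writes these amplitudes as greybody-filtered images of the $\Qu$-block data.
\item Inside the retained algebra, the $\Qu$ data can be fed only through the relay chain~\eqref{eq:relaychain}. By the Peirce multiplication rules, $A_0(e)\circ A_0(e)\subset A_0(e)$ and $A_1(e)\circ A_0(e)\subset A_1(e)$. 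So the only route from the hidden complement $\Ku$ to $\Qu$ passes through $\Aone$. The Freudenthal/attractor data in $\Atwo$ are frozen by Eq.~\eqref{eq:survivalid} and do not evolve.
\item The neutral-source condition~\eqref{eq:no-current} removes any charged null-fluid source. Therefore no outgoing gauge-invariant algebra arises that is independent of the Peirce channel.
\end{enumerate}
Together these show that each outgoing observable up to time $v$ is a Jordan-polynomial (or limit) expression in elements $\pi_Q\Pi_u\alpha_s(X)$ with $s\le v$. Finite dimensionality of $\Qu$ makes closure automatic.

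\emph{Main obstacle.} The hardest part is step (a): turning "complete, no independent leakage" into a genuine generation statement. Exterior Maxwell observables are field-theoretic, while $R_v^{\min}$ sits inside the $4$-dimensional block $\Qu$. I would first show that the fixed-support kernel maps the relay history onto the full helicity/Stokes algebra at each window. Concretely, I would prove that the induced Stokes observables span $H_2(\C)$ using the projected $\Aone\to\Qu$ bridge coefficients from Appendix~\ref{app:projectedblocks}, and check that this span is nondegenerate for a generic relay element.

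Once that spanning result is established, the two algebras share a generating set, which gives equality. Restricting $\varpi_v$ to equal algebras then gives equal entropies, which is Eq.~\eqref{eq:SAout}.
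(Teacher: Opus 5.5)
Your first inclusion, $R_v^{\min}\subseteq\mathfrak A_{\mathrm{out}}([0,v])$, is exactly the paper's ``by construction'' step. Your remark that it survives leakage correctly explains the fallback inclusion.

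The reverse inclusion is where your route does not close. The paper does not derive it. In Appendix~\ref{app:outgoinghistory}, two properties define the complete selected-readout benchmark: readout completeness (every gauge-invariant asymptotic observable of the selected two-helicity sector is generated by the ordinary channel), and no independent leakage ($J_I^{H(5)N}=0$ together with closure of the fixed-support photonic realization on the same support). The inclusion $\mathfrak A_{\mathrm{out}}([0,v])\subseteq R_v^{\min}$ then follows from them in one line.

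You instead try to prove completeness. However, your step (a) already presupposes that exterior Maxwell observables are $\Qu$-block data. That is precisely what the paper stipulates; it says explicitly that it does not derive the asymptotic Fock algebra from $J_3(\Oa)$. The spanning lemma cannot replace that input: even if the generators filled $H_2(\C)$, you would only learn $R_v^{\min}=\Qu$, not that field-theoretic observables lie in $\Qu$. The lemma also cannot hold as posed. The bridge coefficients you want to use act on the two-dimensional active support $Q_{\mathrm{act}}\cong\Cu$, where $D_{Q1}=a\,\idtwo$. Now $H_2(\C)\cong\R\idtwo\oplus\R^3$ is a spin factor, and the Jordan algebra generated by any set is contained in $\R\idtwo$ plus the linear span of the traceless parts. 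Images confined to a two-dimensional subspace therefore generate at most a three-dimensional subalgebra, not $H_2(\C)$.

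Step (b) is also a wrong inference, although your argument does not actually need it. The rule $\Azero\circ\Azero\subset\Azero$ does not force hidden data through $\Aone$; it permits direct coupling inside $\Azero\cong H_2(\Oa)$. Two facts show this:
\begin{enumerate}[label=(\roman*)]
\item For $k\in\Ku$ with off-diagonal entry $x\perp\Cu$, one finds $k\circ k=|x|^2\,\mathrm{diag}(0,1,1)\in\Qu$.
\item The stabilizer $\mathfrak{spin}(9)\subset\mathfrak f_4$ of $e$ acts on the nine traceless directions of $\Azero$ as $\mathfrak{so}(9)$, so it rotates $\Ku$ directions directly into $\Qu$ directions.
\end{enumerate}
In the paper, the one-way relay $\Ku\to\Aone\to\Qu$ is imposed by the benchmark block choice, not derived from Peirce multiplication. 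That choice sets $D_{KQ}=D_{1Q}=0$ and lets the hidden sector reach $\Qu$ only through the relay kernel $K_{QK}$.

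The repair is to drop (b) and the spanning lemma, and to invoke readout completeness and no leakage as the benchmark hypotheses they are. The equality and Eq.~\eqref{eq:SAout} are then immediate, exactly as in the paper.
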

If $v_1\le v_2$, then every generator entering $R_{v_1}^{\min}$ also enters $R_{v_2}^{\min}$, so
\begin{equation}
R_{v_1}^{\min}\subset R_{v_2}^{\min}.
\label{eq:accumulated}
\end{equation}
This inclusion becomes concrete in Sec.~\ref{sec:state}, where the restricted channel state is represented explicitly and its entropy is evaluated.

The reduced law itself is obtained by eliminating the interface sector rather than by postulating a qubit dynamics from the start.
We now pass from the Heisenberg/AQM description of Sec.~\ref{sec:Albertian} to its dual channel-coordinate form.
The Heisenberg law in Eq.~\eqref{eq:heis} evolves observables, while the dual state law in Eq.~\eqref{eq:dualstate} evolves the positive functional $\varpi_v$.

After a Peirce frame has been fixed, we represent the restricted positive functional by Peirce channel coordinates, using the finite trace pairing between the retained Jordan sector and its dual.
Thus $\Psi(v)$ is a state-side density representative of the projected dual evolution; it is not a Heisenberg observable and not a global Hilbert-space wavefunction.
Relative to the refined decomposition
\begin{equation}
J_3(\Oa)=\Atwo\oplus \Aone\oplus \Qu\oplus \Ku,
\label{eq:decomp}
\end{equation}
write this Peirce-coordinate representative as
\begin{equation}
\Psi(v)=\Psi_K(v)\oplus\Psi_1(v)\oplus\Psi_Q(v).
\label{eq:Psidecomp}
\end{equation}
The component $\Psi_K(v)$ represents the hidden-sector part of the restricted functional, $\Psi_1(v)$ represents the relay/interface part, and $\Psi_Q(v)$ represents the ordinary radiative readout part.
The corresponding readout functional is later represented on $\Qu\simeq H_2(\C)$ by the $2\times2$ density matrix $\rho_Q(v)$.

Eliminating $\Psi_1$ in favour of $\Psi_K$ and $\Psi_Q$ by retarded substitution --- the Peirce-adapted analogue of a Nakajima--Zwanzig reduction in open-system language \cite{Nakajima1958,Zwanzig1960,BreuerPetruccione2002} --- gives a Volterra structure \cite{Gripenberg1990}. The full block elimination, displayed in Appendix~\ref{app:hiddenclosure}, also contains self-memory and backflow kernels when $D_{1Q}$ or $D_{KQ}$ are retained. The benchmark studied here is the one-way active-support closure, for which the hidden-to-readout kernel is
\begin{equation}
K_{QK}(v,s)=D_{Q1}(v)\,U_{11}(v,s)\,D_{1K}(s),
\label{eq:kernel}
\end{equation}
with bridge propagator
\begin{equation}
\partial_v U_{11}(v,s)=D_{11}(v)U_{11}(v,s),
\qquad U_{11}(s,s)=1_{\Aone}.
\label{eq:u11eq}
\end{equation}
Here $D_{1K}$ injects the state-side hidden representative from $\Ku$ into the interface, $D_{Q1}$ maps the interface representative into the ordinary radiative block $\Qu$, and $D_{11}$ governs the intrinsic relay propagation through $U_{11}$.
The pair $(v,s)$ therefore carries the memory of the reduced channel problem: the law is Volterra because the outgoing readout representative at time $v$ depends on the earlier hidden and interface history.
Any Markovian or short-memory form is secondary and effective \cite{Feshbach1958,Pazy1983,Gripenberg1990}.
On the neutral-source background, the attractor datum, the channel frame $(e,u)$, and the electric polarization are frozen.
The active bridge therefore collapses to a two-generator support.
Introduce scalar opening and closing amplitudes $a(s)$ and $b(v)$ on that support, a relay frequency $\omega(v)$, and fixed support maps $M_{1K}$ and $M_{Q1}$ that orient the active relay between the hidden block and the radiative block.
With these ingredients the reduced law takes the following pair-polarized form.

\begin{theorem}[Benchmark one-way active-support Volterra closure]
In the source-fixed active-support closure on the neutral-source background, the active bridge closes on the span of $\idtwo$ and a fixed antisymmetric matrix $J$ with $J^2=-\idtwo$.
Equivalently,
\begin{equation}
D_{11}(v)=\omega(v)J,
\label{eq:D11}
\end{equation}
and
\begin{equation}
\begin{aligned}
U_{11}(v,s)&=
\cos\Theta(v,s)\,\idtwo+
\sin\Theta(v,s)\,J,\\
\Theta(v,s)&=\int_s^v\omega(u)\,\dd u.
\end{aligned}
\label{eq:U11}
\end{equation}
If
\begin{equation}
D_{1K}(s)=a(s)M_{1K},
\qquad
D_{Q1}(v)=b(v)M_{Q1},
\label{eq:blockform}
\end{equation}
in the active frame, then
\begin{equation}
K_{QK}(v,s)=a(s)b(v)
\bigl[
\cos\Theta(v,s)\,I+
\sin\Theta(v,s)\,J_{\mathrm{rad}}
\bigr],
\label{eq:pairkernel-general}
\end{equation}
where
\begin{equation}
I:=M_{Q1}M_{1K},
\qquad
J_{\mathrm{rad}}:=M_{Q1}JM_{1K}.
\label{eq:IJdef}
\end{equation}
The induced map $I$ is the identity on the active radiative support, while $J_{\mathrm{rad}}$ is the corresponding fixed rotation acting on that same support.
\end{theorem}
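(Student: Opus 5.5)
The plan splits into a structural step and two computational steps.

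\emph{Step 1: the closure claim (main obstacle).} I would start from the frozen data of the neutral-source benchmark. By Eq.~\eqref{eq:survivalid} together with Eq.~\eqref{eq:neutralbg}, the attractor frame, the Peirce pair $(e,u)$ and the electric polarization are all $v$-independent. The only $v$-dependence of the projected derivation blocks is therefore carried by scalar clock functions. The relay block $D_{11}(v)$ is a projected derivation, since it descends from $\delta_v\in\mathrm{Der}(J_3(\Oa))\cong\mathfrak f_4$. Being a derivation, it is skew with respect to the trace form, so on the active interface support it is an antisymmetric real operator.

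The active support is the image of $D_{1K}$ in $\Aone$. With the frame and polarization frozen, I would argue it is two-dimensional, spanned by the injected direction and its image under the fixed frame generator. On a real two-plane, every antisymmetric operator is a multiple of a single $J$ with $J^2=-\idtwo$. The fixed frame then forces $J$ to be $v$-independent, which gives $D_{11}(v)=\omega(v)J$.

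The delicate point is justifying that the support is exactly two-dimensional and invariant under $D_{11}$ rather than leaking into the rest of the 16-dimensional $\Aone$. This is where the one-way active-support hypothesis enters. I would take the bridge to be the smallest $D_{11}$-invariant subspace containing the image of $D_{1K}$, and then use the frozen-polarization condition to show that this subspace is generated by the injected direction together with its $J$-image. Checking that it does not close on a larger subspace is the step I expect to need the most care.

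\emph{Step 2: the propagator.} Because $[D_{11}(v),D_{11}(v')]=\omega(v)\omega(v')[J,J]=0$, the time-ordered exponential solving Eq.~\eqref{eq:u11eq} reduces to an ordinary exponential:
\[
U_{11}(v,s)=\exp\bigl(\Theta(v,s)J\bigr).
\]
Expanding this series with $J^2=-\idtwo$ gives Eq.~\eqref{eq:U11}. As a direct check, $\partial_v U_{11}=\omega(v)J\,U_{11}$ and $U_{11}(s,s)=\idtwo$.

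\emph{Step 3: the kernel.} Substituting Eq.~\eqref{eq:blockform} and Eq.~\eqref{eq:U11} into Eq.~\eqref{eq:kernel} and factoring out the scalars gives Eq.~\eqref{eq:pairkernel-general}, with $I$ and $J_{\mathrm{rad}}$ as defined in Eq.~\eqref{eq:IJdef}.

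For the final assertion, I would normalize the fixed support maps in the active frame so that $M_{1K}$ is an isometry onto the bridge plane and $M_{Q1}$ is its inverse on the radiative support; this normalization absorbs all scale into $a$ and $b$. Then $I$ is the identity on that support. $J_{\mathrm{rad}}=M_{Q1}JM_{1K}$ is conjugate to $J$, so it is antisymmetric and satisfies $J_{\mathrm{rad}}^2=-1$, which makes it a fixed quarter-turn rotation on the same support.
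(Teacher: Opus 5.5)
Your Steps~2 and~3 match the paper's computation. Since $D_{11}(v)=\omega(v)J$ commutes with itself at different times, the propagator is $\exp[\Theta(v,s)J]=\cos\Theta\,\idtwo+\sin\Theta\,J$. Substituting Eq.~\eqref{eq:blockform} into Eq.~\eqref{eq:kernel} then gives Eq.~\eqref{eq:pairkernel-general}.

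The gap is in Step~1. The minimal-invariant-subspace argument cannot produce a two-dimensional bridge from the data you invoke. For a skew operator $D$ on the $16$-dimensional $\Aone$, the cyclic subspace generated by the injected data is a single plane only if that data lies in one eigenspace of $D^2$ (that is, $D^2x=-\theta^2x$ for a single $\theta$). For a $v$-dependent $D_{11}(v)$, that plane would also have to be invariant for every $v$. Freezing the attractor frame, the pair $(e,u)$ and the polarization makes the relevant subspace $v$-independent, but it imposes no such spectral condition. Your phrase ``the injected direction together with its $J$-image'' also presupposes the $J$ you are trying to construct.

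The paper does not obtain the plane from an invariance argument; it is built into the benchmark active-support closure. One identifies $K_{\rm act}\cong A_{1,\rm act}\cong Q_{\rm act}\cong\Cu=\mathrm{span}_{\R}\{1,u\}$, the complex slice fixed by the two-helicity detector, so every active state is labelled by a single $z\in\Cu$. Then $J$ is simply the complex structure of $\Cu$: multiplication by $u$ in the basis $\{1,u\}$, Eq.~\eqref{eq:fixedsupportmatrices}. So $J^2=-\idtwo$ is automatic, and $D_{11}=\omega J$ is the canonical bridge choice on that support. This is why the theorem states Eq.~\eqref{eq:D11} as \emph{equivalent} to the closure.

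Your ``projected derivation'' premise is also extra input the paper does not use. The benchmark blocks are WKB-factorized matrix elements of the linearized sourced generator. The one-way system with $D_{1Q}=0\neq D_{Q1}$ is not skew, so skewness is not inherited from the full generator; once the support is $\Cu$ you do not need it.

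The same missing identification undermines your final paragraph. Rescaling by the scalars $a,b$ turns $M_{1K}$ and $M_{Q1}$ into isometries only if they are already conformal, i.e.\ multiples of orthogonal maps. For a general invertible $2\times2$ map this fails, and then $M_{Q1}JM_{Q1}^{-1}$ is a complex structure but need not be an orthogonal rotation. Moreover, $J_{\rm rad}=M_{Q1}JM_{1K}$ maps $K_{\rm act}$ to $Q_{\rm act}$. So ``$J_{\rm rad}^2=-1$'' and ``$I$ is the identity'' only make sense after these two supports are identified.

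The paper supplies exactly this identification through the adapted maps $\Psi_K(z)\mapsto\Psi_1(z)\mapsto\Psi_Q(z)$ for the same $z\in\Cu$, Eq.~\eqref{eq:mapsbg}. These give $M_{1K}=M_{Q1}=\idtwo$, hence $I=\idtwo$ and $J_{\rm rad}=J$. If you take this $\Cu$-identification as the defining hypothesis of the active-support closure, your observation becomes a valid, slightly more intrinsic reason for $D_{11}=\omega J$: an antisymmetric operator on a fixed oriented plane is a multiple of a single $J$. The rest of your proof then goes through.
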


Appendices~\ref{app:projectedblocks}--\ref{app:kernel_source} prove the closure and then make the choice of source-fixed coefficients explicit.
The scalar coefficients have a direct physical meaning in this benchmark.
The amplitudes $a(s)$ and $b(v)$ in Eq.~\eqref{eq:blockform} control how strongly the hidden sector opens into the relay and how strongly the relay closes into the radiative block.
The frequency $\omega(v)$ controls the relay precession on the active support, and the phase
\begin{equation}
\Theta(v,s)=\int_s^v\omega(u)\,\dd u
\label{eq:Theta-again}
\end{equation}
measures the accumulated relay rotation between the emission time $s$ and the readout time $v$.
The fixed maps $M_{1K}$ and $M_{Q1}$ encode the internal orientation of the active relay support; in the neutral-source frame they become the identity, so the relay transports the active two-component support without additional internal mixing.
The remaining active-support identifications now fix the source-fixed representative explicitly.
On the active support one identifies
\begin{equation}
K_{\mathrm{act}}\cong A_{1,\mathrm{act}}\cong Q_{\mathrm{act}}\cong \Cu,
\label{eq:active-support}
\end{equation}
with real basis $\{1,u\}$, and the adapted relay maps are
\begin{equation}
M_{1K}\bigl(\Psi_K(z)\bigr)=\Psi_1(z),
\qquad
M_{Q1}\bigl(\Psi_1(z)\bigr)=\Psi_Q(z).
\label{eq:mapsbg}
\end{equation}
In that basis,
\begin{equation}
M_{1K}=M_{Q1}=\idtwo,
\qquad
J=
\begin{pmatrix}
0 & -1\\[3pt]
1 & 0
\end{pmatrix}.
\label{eq:fixedsupportmatrices}
\end{equation}
Hence $I=\idtwo$ and $J_{\mathrm{rad}}=J$.
Along the neutral-source fixed-charge Reissner--Nordstr{\"o}m branch, the scalar coefficients are fixed as follows. Let
\begin{equation}
E:=\eexc(v),
\qquad
r_+(E)=Q_0+E+\sqrt{E(E+2Q_0)},
\label{eq:rplus}
\end{equation}
and
\begin{equation}
\SBH(E)=\pi r_+(E)^2.
\label{eq:SBH}
\end{equation}
The overlap factorization is derived at WKB level in Appendix~\ref{app:wkb}, while Appendix~\ref{app:bridgeclock} evaluates the corresponding source-fixed Parikh--Wilczek barrier action on the sourced benchmark \cite{ParikhWilczek2000}. The WKB factor is used slice by slice on the adiabatic source-fixed RN trajectory. Within this source-fixed representative, the WKB overlap gives the explicit scalar opening factor used in the Volterra kernel,
\begin{equation}
S_*(E)=\frac12\bigl(\SBH(E)-\SBH(0)\bigr),
\label{eq:Sstar}
\end{equation}
so that
\begin{equation}
\begin{aligned}
a(E)=b(E)
&=e^{-S_*(E)}\\
&=\exp\!\left[-\frac12\bigl(\SBH(E)-\SBH(0)\bigr)\right].
\end{aligned}
\label{eq:abstar}
\end{equation}
The same source-fixed RN entropy fixes the relay frequency used in the active-support kernel:
\begin{equation}
\begin{aligned}
\omega(E)=T_H(E)
&=\left(\frac{\partial\SBH}{\partial E}\right)^{-1}\\
&=\frac{\sqrt{E(E+2Q_0)}}{2\pi r_+(E)^2}.
\end{aligned}
\label{eq:omega}
\end{equation}
This is the controlled benchmark layer. Corrections from $\dot M$, stochastic transport of the Peirce frame, or higher-derivative terms would deform the transfer kernel and define the next benchmark layer; they are not included in the source-fixed representative analyzed here.
Therefore the source-fixed projected blocks are
\begin{align}
D_{1K}(v)&=a\bigl(\eexc(v)\bigr)\,\idtwo,
\label{eq:D1Kexplicit}
\\
D_{Q1}(v)&=a\bigl(\eexc(v)\bigr)\,\idtwo,
\label{eq:DQ1explicit}
\\
D_{11}(v)&=\omega\bigl(\eexc(v)\bigr)\,J.
\label{eq:D11explicit}
\end{align}
These source-fixed expressions are the active-support specialization of the factorized overlaps derived in Appendix~\ref{app:projectedblocks}--\ref{app:wkb}; on that support the identity matrices are the adapted canonical choice recorded in Eq.~\eqref{eq:explicitM} and fixed explicitly in Appendix~\ref{app:actsupport}, with any overall scale absorbed into $a$ and $b$.

Using Eqs.~\eqref{eq:U11}, \eqref{eq:abstar}, and \eqref{eq:D11explicit}, the kernel becomes
\begin{equation}
\begin{aligned}
K_{QK}(v,s)
&=a\bigl(\eexc(s)\bigr)a\bigl(\eexc(v)\bigr)\\
&\quad\times\bigl[\cos\Theta(v,s)\,\idtwo+\sin\Theta(v,s)\,J\bigr].
\end{aligned}
\label{eq:kernelmain}
\end{equation}
This is the explicit kernel used throughout the reduced-state analysis.

\section{Reduced qubit representative, spectral gap, and photonic observables}
\label{sec:state}

The reduced history law of Sec.~\ref{sec:channel} now becomes the state seen by the selected ordinary readout, after restriction to the associative block $\Qu$.
The qubit below is not postulated as a fundamental Hawking qubit. It is the ordinary readout representative obtained after the exceptional complement and the relay have been projected and reduced.
The key outputs are the density-matrix representative $\rho_Q(v)$ of the restricted channel state, its spectral gap $\Delta(v)$, and the reduced-channel entropy $S_A(v)$ plotted in Fig.~\ref{fig:entropy}.
The algebraic state is the restricted positive functional $\varpi_v\!\upharpoonright\Qu$; $\rho_Q(v)$ represents it on $\Qu\simeq H_2(\C)$ by
\begin{equation}
\varpi_v(A_Q)=\Tr[\rho_Q(v)A_Q],
\qquad A_Q\in\Qu .
\label{eq:rhoQrepresentative}
\end{equation}

The ordinary radiative block identified in Sec.~\ref{sec:Peirce} is $\Qu\cong H_2(\C)$, so the selected readout is an ordinary two-level system even though the ambient horizon algebra remains exceptional.
The leading emitted-fraction variable $r(v)$ comes from the sourced clock of Sec.~\ref{sec:background}; $\delta r(v)$ is the memory-induced population shift relative to that leading bookkeeping; $c(v)$ is the accumulated off-diagonal scalar coherence of the reduced channel representative; $x(v),y(v),z(v)$ are the Bloch components of the same representative; and $\Delta(v)$ is the spectral gap that controls the entropy.
The binary arc $H_2(r)$ is useful as a dephased comparator, but it is not the entropy of the reduced channel state unless the coherence and population-shift corrections vanish.

The restricted radiative state is therefore represented as a qubit density matrix on $\Qu\cong H_2(\C)$, first in Bloch form and then in the channel basis.
The two presentations describe the same representative: the Bloch variables $x,y,z$ are convenient for the spectral analysis, while the channel variables $r,\delta r,c$ separate the leading emitted fraction from the memory-induced corrections.
The density-matrix representative is written in Bloch form as
\begin{equation}
\rho_Q(v)=\frac12\Bigl(\idtwo+x(v)\sigma_x+y(v)\sigma_y+z(v)\sigma_z\Bigr),
\label{eq:Bloch}
\end{equation}
and in operational form as
\begin{equation}
\rho_Q(v)=
\begin{pmatrix}
1-r(v)-\delta r(v) & c(v)\\[4pt]
c(v)^* & r(v)+\delta r(v)
\end{pmatrix}.
\label{eq:rhoQ}
\end{equation}
The restricted state represented by $\rho_Q(v)$ is induced by the dual Albertian evolution of Eq.~\eqref{eq:dualstate}, written in the Peirce channel coordinates introduced in Sec.~\ref{sec:channel}, after retarded elimination of the interface.
Appendix~\ref{app:hiddenclosure} derives the explicit formulas from the Volterra kernel and closes the hidden-sector history on the sourced background, so that the projected hidden data below are fixed by the sourced clock together with a finite initial hidden datum rather than by an external history prescription.
Let $\Xi_K(s)$ denote the hidden-sector history variable obtained after eliminating the interface.
It packages the part of the evolving state that remains in the exceptional complement $\Ku$ and therefore is not seen directly by the asymptotic detector, even though it continues to influence the ordinary channel through the relay.
To read that influence on the qubit, we project $\Xi_K(s)$ along the three Pauli directions and along the two active relay directions $\idtwo$ and $J$.
This gives six real functions,
\begin{align}
X_I(s)&:=\Tr\!\left[\sigma_x\,\Xi_K(s)\right],
&X_J(s)&:=\Tr\!\left[\sigma_x\,J\,\Xi_K(s)\right],
\nonumber\\
Y_I(s)&:=\Tr\!\left[\sigma_y\,\Xi_K(s)\right],
&Y_J(s)&:=\Tr\!\left[\sigma_y\,J\,\Xi_K(s)\right],
\nonumber\\
Z_I(s)&:=\Tr\!\left[\sigma_z\,\Xi_K(s)\right],
&Z_J(s)&:=\Tr\!\left[\sigma_z\,J\,\Xi_K(s)\right].
\label{eq:hiddenprojections}
\end{align}
These six functions are projected components of the same hidden history. In the source-fixed benchmark, Appendix~\ref{app:hiddenclosure} reduces them to finite initial hidden data transported by the two source-fixed quadratures and subject to the positivity criterion of Appendix~\ref{app:admissibility}.
The pair $(X_I,X_J)$ controls the hidden contribution to the $x$-quadrature, $(Y_I,Y_J)$ controls the hidden contribution to the $y$-quadrature, and $(Z_I,Z_J)$ controls the hidden contribution to the population imbalance.
On the sourced background, these six functions are not free inputs: Appendix~\ref{app:hiddenclosure} shows that they are generated by the hidden-sector Volterra equation and reduce to explicit source-fixed quadratures.
Then the Bloch components are
\begin{align}
x(v)&=a\bigl(\eexc(v)\bigr)
\int_0^v a\bigl(\eexc(s)\bigr)
\nonumber\\
&\quad\times\Bigl[\cos\Theta(v,s)\,X_I(s)+\sin\Theta(v,s)\,X_J(s)\Bigr]\,\dd s,
\label{eq:xv}
\\
y(v)&=a\bigl(\eexc(v)\bigr)
\int_0^v a\bigl(\eexc(s)\bigr)
\nonumber\\
&\quad\times\Bigl[\cos\Theta(v,s)\,Y_I(s)+\sin\Theta(v,s)\,Y_J(s)\Bigr]\,\dd s,
\label{eq:yv}
\\
z(v)&=1-2r(v)+a\bigl(\eexc(v)\bigr)
\int_0^v a\bigl(\eexc(s)\bigr)
\nonumber\\
&\quad\times\Bigl[\cos\Theta(v,s)\,Z_I(s)+\sin\Theta(v,s)\,Z_J(s)\Bigr]\,\dd s.
\label{eq:zv}
\end{align}
These formulas are the nonperturbative Volterra solution for the reduced radiative state.
Each Bloch component at the readout time $v$ is therefore a memory integral over the full earlier history $0\le s\le v$: the factors $a(\eexc(v))a(\eexc(s))$ weight how strongly the barrier is opened at the emission and readout endpoints, while $\cos\Theta(v,s)$ and $\sin\Theta(v,s)$ encode the accumulated relay interference between those two times. We now translate the Bloch presentation~\eqref{eq:Bloch} into the operational variables $(r,\delta r,c)$ of Eq.~\eqref{eq:rhoQ}. The translation is algebraic: the diagonal of~\eqref{eq:Bloch} fixes $z$ and hence $\delta r$, while the off-diagonal fixes the coherence $c$.
\begin{theorem}[Canonical readout theorem]
For the qubit state \eqref{eq:rhoQ}, the entropically relevant corrected data are determined by
\begin{equation}
\begin{aligned}
\delta r(v)&=\frac{1-2r(v)-z(v)}{2},\\
c(v)&=\frac{x(v)-iy(v)}{2},\\
|c(v)|&=\frac12\sqrt{x(v)^2+y(v)^2}.
\end{aligned}
\label{eq:readout}
\end{equation}
\end{theorem}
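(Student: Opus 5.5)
The plan is to compare the two presentations of the same $2\times2$ density matrix entry by entry. Eq.~\eqref{eq:Bloch} and Eq.~\eqref{eq:rhoQ} both represent $\varpi_v\!\upharpoonright\Qu$ through Eq.~\eqref{eq:rhoQrepresentative}. The trace pairing on $H_2(\C)$ is nondegenerate, so the representative is unique and the two matrices must coincide.

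First I expand Eq.~\eqref{eq:Bloch} with the standard Pauli matrices. This gives diagonal entries $(1\pm z)/2$, upper off-diagonal entry $(x-iy)/2$, and lower off-diagonal entry $(x+iy)/2$.

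Next I match entries with Eq.~\eqref{eq:rhoQ}. The $(1,1)$ entry gives $1-r-\delta r=(1+z)/2$, hence $\delta r=(1-2r-z)/2$. The $(2,2)$ entry gives the same relation, so it is consistent because both matrices have unit trace. The $(1,2)$ entry gives $c=(x-iy)/2$. The $(2,1)$ entry gives $c^*=(x+iy)/2$, which is automatically consistent because $x$ and $y$ are real. Reality holds since Eqs.~\eqref{eq:xv}--\eqref{eq:yv} are integrals of real functions: $a$, $\Theta$, and the projections \eqref{eq:hiddenprojections} of the Hermitian $\Xi_K$ against the real matrices $\sigma_x$, $J$ and the Hermitian $\sigma_y$ are all real. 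The modulus $|c|=\tfrac12\sqrt{x^2+y^2}$ then follows immediately.

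To justify the phrase ``entropically relevant'', I will note that the eigenvalues of $\rho_Q$ are $\tfrac12(1\pm\Delta)$, with $\Delta=\sqrt{x^2+y^2+z^2}=\sqrt{(1-2r-2\delta r)^2+4|c|^2}$. The spectrum, and hence $S_A$, therefore depends only on $\delta r$ and $|c|$ once $r$ is given; the phase of $c$ plays no role.

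The main obstacle is not computational but one of conventions. The sign of the $y$ component in $c$ depends on the ordering of the channel basis and on the convention $\sigma_y=\bigl(\begin{smallmatrix}0&-i\\ i&0\end{smallmatrix}\bigr)$. I would fix these conventions explicitly so that they agree with Eq.~\eqref{eq:rhoQ}. I would also check that, with $\delta r$ and $c$ given by the matching above, the matrix in Eq.~\eqref{eq:rhoQ} is Hermitian with unit trace, so that it is a genuine representative in the sense of Eq.~\eqref{eq:rhoQrepresentative}. Positivity, meaning $\Delta\le1$, is a separate admissibility condition deferred to Appendix~\ref{app:admissibility} and is not needed for these identities.
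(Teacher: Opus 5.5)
Your proposal is correct and follows the paper's proof, which is the one-line observation that direct comparison of Eq.~\eqref{eq:Bloch} with Eq.~\eqref{eq:rhoQ} gives Eq.~\eqref{eq:readout}; your entrywise matching, the unit-trace consistency check, and the remark that the spectrum depends only on $\delta r$ and $|c|$ spell that comparison out. One side justification should be replaced: take the reality of $x$ and $y$ from the Hermiticity of the representative $\rho_Q$ (matching both off-diagonal entries $c$ and $c^*$ already forces $\mathrm{Im}\,x=\mathrm{Im}\,y=0$), not from the Volterra integrals, because your argument there fails: $\sigma_y J=-i\idtwo$ makes $Y_J=-i\,\Tr\,\Xi_K$ imaginary for any Hermitian $\Xi_K$ with nonzero trace, and nothing in the paper makes $\Xi_K$ Hermitian, since its closed form carries the rotation $e^{\Theta J}$.
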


Direct comparison of Eqs.~\eqref{eq:Bloch} and \eqref{eq:rhoQ} gives Eqs.~\eqref{eq:readout}. Substituting Eqs.~\eqref{eq:xv}--\eqref{eq:zv} then gives the long-memory formulas below.
Substituting Eqs.~\eqref{eq:xv}--\eqref{eq:zv} into Eq.~\eqref{eq:readout} yields the Volterra formulas \eqref{eq:deltar}, \eqref{eq:cI}, and \eqref{eq:cJ}:
\begin{equation}
\begin{aligned}
\delta r(v)
&=-\frac{a\bigl(\eexc(v)\bigr)}{2}
\int_0^v a\bigl(\eexc(s)\bigr)\,\dd s \\
&\qquad\times\Bigl[\cos\Theta(v,s)\,Z_I(s)
+\sin\Theta(v,s)\,Z_J(s)\Bigr].
\end{aligned}
\label{eq:deltar}
\end{equation}
The coherence splits as
\begin{equation}
c(v)=c_I(v)+c_J(v).
\label{eq:cv}
\end{equation}
\begin{equation}
\begin{aligned}
c_I(v)
&=\frac{a\bigl(\eexc(v)\bigr)}{2}
\int_0^v \dd s\,
 a\bigl(\eexc(s)\bigr)\cos\Theta(v,s) \\
&\qquad\times [X_I(s)-iY_I(s)] .
\end{aligned}
\label{eq:cI}
\end{equation}
\begin{equation}
\begin{aligned}
c_J(v)
&=\frac{a\bigl(\eexc(v)\bigr)}{2}
\int_0^v \dd s\,
 a\bigl(\eexc(s)\bigr)\sin\Theta(v,s) \\
&\qquad\times [X_J(s)-iY_J(s)] .
\end{aligned}
\label{eq:cJ}
\end{equation}
Thus the population correction is sourced by the $Z$-projection of the hidden sector, while the coherence is sourced by the $X$ and $Y$ quadratures of that same hidden history. The spectral gap $\Delta(v)$ now plays the role the Bloch-vector length plays for an ordinary qubit: it controls the eigenvalues of $\rho_Q(v)$ and therefore its von Neumann entropy.
The spectral gap of the reduced qubit is
\begin{equation}
\Delta(v)=\sqrt{\bigl(1-2r(v)-2\delta r(v)\bigr)^2+4|c(v)|^2},
\label{eq:gap}
\end{equation}
It measures how strongly polarized the reduced channel state remains after the hidden-sector memory has been folded into the readout: $\Delta(v)=1$ corresponds to a pure channel state, whereas smaller values indicate stronger mixing and therefore larger channel entropy.
The corresponding eigenvalues are
\begin{equation}
\lambda_{\pm}(v)=\frac{1\pm\Delta(v)}{2}.
\label{eq:lambdas}
\end{equation}
Define the binary entropy
\begin{equation}
H_2(p):=-p\log_2 p-(1-p)\log_2(1-p),
\qquad 0\le p\le1.
\label{eq:H2}
\end{equation}
Then the entropy defined in Sec.~\ref{sec:channel} is
\begin{equation}
S_A(v)=S\bigl(\rho_Q(v)\bigr)
=H_2\!\left(\frac{1-\Delta(v)}{2}\right).
\label{eq:SAchannel}
\end{equation}
In the strictly dephased limit $\delta r=c=0$, Eq.~\eqref{eq:SAchannel} reduces to the familiar binary arc $H_2(r)$. Appendix~\ref{app:admissibility} gives the full necessary-and-sufficient positivity criterion for the reduced channel and proves that the benchmark photonic representative used later lies inside that admissible domain.
The near-coincidence of $S_A(v)$ with the dephased surrogate $H_2(r(v))$ in Fig.~\ref{fig:entropy} has a simple channel interpretation. On the natural transverse photonic representative one has $\delta r(v)=0$, so the diagonal population bookkeeping of the reduced qubit is tied directly to the sourced clock through $r(v)$. The accumulated coherence $c(v)$ then enters the spectral gap only through $|c(v)|^2$, and its effect on the scalar entropy is correspondingly mild over most of the displayed window. This does not make the coherence trivial. It means that the operational imprint of the exceptional horizon sector sits in the accumulated coherence $c(v)$ and in the two-time covariance that realizes it, not in a large instantaneous distortion of the diagonal entropy curve. In this sense $c(v)$ is closer to a multi-time correlation diagnostic than to a correction of the instantaneous occupation number \cite{AnastopoulosSavvidou2020MultiTime}.
Figure~\ref{fig:entropy} therefore isolates the microscopic channel statement: the reduced-state entropy is close to, but not identical with, the dephased qubit surrogate. 
\begin{figure}[t!]
    \centering
    \includegraphics[width=1.00\linewidth]{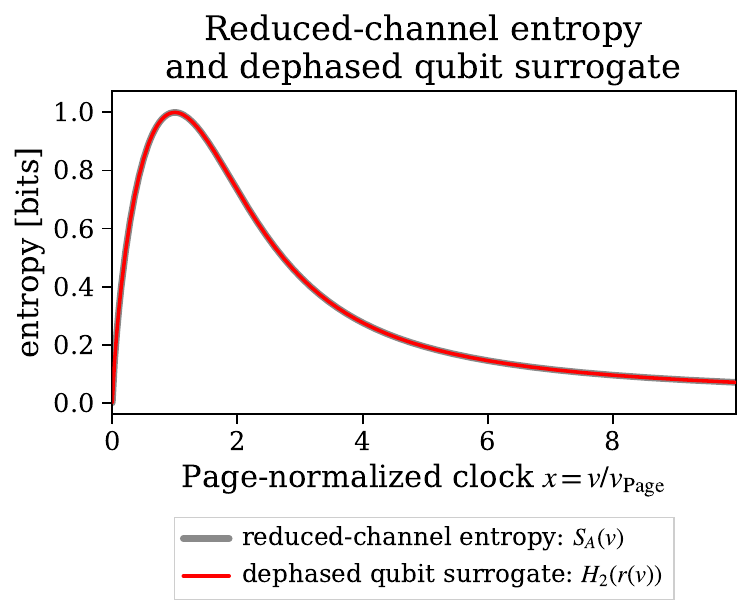}
    \caption{Reduced-channel entropy and dephased qubit surrogate on the neutral-source fixed-charge Reissner--Nordstr{\"o}m benchmark. The solid curve shows $S_A(v)=H_2((1-\Delta(v))/2)$ evaluated on the natural transverse photonic representative of Appendix~\ref{app:fixedsupport}. The red curve shows the dephased surrogate $H_2(r(v))$. Their near-coincidence shows that, on this representative, the scalar entropy is only mildly shifted by coherence over the displayed window even though the emitted-history coherence $c(v)$ remains physically nontrivial.}
    \label{fig:entropy}
\end{figure}

\subsection{Photonic realization of the ordinary channel}

The same two-component support also admits a controlled species-resolved interpretation.
The physical spin-1 sector on adiabatic Reissner--Nordstr{\"o}m slices realizes the ordinary radiative channel $\Qu$ by supplying the transverse helicity support on which the relay kernel acts \cite{moncrief1974a,moncrief1975,zerilli1974,chandrasekhar1983}.
This does not turn the neutral-source background into a pure-photon spacetime.
It identifies the channel support with the physical transverse Maxwell support while keeping the geometry in the effective Einstein--Maxwell-plus-null-fluid class described in Sec.~\ref{sec:background}. The Maxwell problem fixes the physical transverse two-helicity support of the exterior detector. The Peirce relay supplies the memory law carried on that support; the construction does not derive the asymptotic Fock algebra from $J_3(\Oa)$.

Instantaneous polarization and accumulated channel coherence are different observables.
In a parity-even spherically symmetric background the single-mode polarization density matrix of an emitted photon is unpolarized on average, but that is a single-time, single-mode statement.
The channel coherence $c(v)$ entering Eq.~\eqref{eq:rhoQ} is instead the accumulated Volterra-memory quantity attached to the full emitted-history kernel.
Instantaneous mode-resolved polarization data therefore do not compute $c(v)$.
Gauge invariance and transversality fix the physical two-helicity support on which the relay acts; the emitted-history state is then specified by the fixed-support representative recorded in Appendix~\ref{app:fixedsupport}. That representative shows that the natural transverse photonic specialization obeys $\delta r(v)=0$.
The kernel below is the minimal fixed-support representative used in this benchmark. The greybody-filtered occupation fixes the equal-time modulus, the spin-$1$ scattering phase fixes the propagation phase, and the Peirce relay supplies the internal memory rotation. This representative therefore adds no independent diagonal population correction. More general history kernels populate the diagonal correction already present in Eqs.~\eqref{eq:deltar}--\eqref{eq:cJ}, subject to the admissibility criterion of Appendix~\ref{app:admissibility}.
Writing the greybody-filtered Hawking spin-$1$ occupation on each adiabatic RN slice as
\begin{equation}
 n_{\ell\omega}(s)=\frac{\Gamma^{(1)}_{\ell}(\omega;s)}{e^{\omega/T_H(s)}-1},
 \label{eq:nlow-main}
\end{equation}
where \(\Gamma^{(1)}_{\ell}(\omega;s)\) is the spin-$1$ transmission factor and the denominator is the Bose factor at the instantaneous Hawking temperature \(T_H(s)\) \cite{Hawking1975,Page1976I,chandrasekhar1983,hod2016,ngampitipan2013}, the fixed-support history kernel is
\begin{equation}
\begin{aligned}
 g_{\ell\omega m;\lambda\lambda'}(s,s')
 &=\frac12\sqrt{n_{\ell\omega}(s)n_{\ell\omega}(s')}
 e^{i[\delta_{\ell\omega}(s)-\delta_{\ell\omega}(s')]}\\
 &\quad\times
 \left[\cos\Theta(s,s')\,\idtwo+
 \sin\Theta(s,s')\,J\right]_{\lambda\lambda'} .
\end{aligned}
\label{eq:fixed-support-history-kernel-main}
\end{equation}
Here \(\delta_{\ell\omega}(s)\) is the outgoing scattering phase of the same photonic mode, while \(\Theta(s,s')\) is the relay phase already defined in Sec.~\ref{sec:channel}; the difference is that \(\delta_{\ell\omega}\) encodes propagation in the physical spin-$1$ sector, whereas \(\Theta\) encodes the internal channel relay accumulated between the two emission times.  The positive spectral overlap induced by Eq.~\eqref{eq:fixed-support-history-kernel-main} is
\begin{equation}
 W(s,s'):=\sum_{\ell,m}\int_0^\infty d\omega\,
 \sqrt{n_{\ell\omega}(s)n_{\ell\omega}(s')} .
\label{eq:spectral-overlap-main}
\end{equation}
It is the quantitative emitted-history weight used in the photonic forecast. As a one-window scale check, the corresponding photon number flux is obtained by integrating the same greybody-filtered occupation over frequency and angular degeneracy. In the dominant low-frequency electromagnetic channel,
\begin{equation}
\begin{aligned}
 \dot N_\gamma(s)&=\sum_{\ell,m}\int_0^\infty\frac{d\omega}{2\pi}\,
 \frac{\Gamma^{(1)}_\ell(\omega;s)}{e^{\omega/T_H(s)}-1}\\
 &=\frac{36\zeta(5)}{\pi}\,a_1 r_+^4(s)T_H^5(s),
\end{aligned}
\label{eq:photon-number-flux-main}
\end{equation}
Here \(\dot N_\gamma(s)\) is the number of photons emitted per unit time on the adiabatic RN slice, and \(\Gamma_\ell^{(1)}\) is the spin-$1$ greybody transmission factor. The closed form uses the low-frequency law
\(\Gamma_1^{(1)}(\omega;s)=a_1[\omega r_+(s)]^4+O(\omega^6)\),
with the \(\ell=1\) degeneracy included; \(a_1\) fixes the leading greybody normalization. The separable envelope
\(g_\gamma(s)=[\dot N_\gamma(s)/\dot N_\gamma(0)]^{1/2}\)
is useful for scale estimates, while the forecasts in
Figs.~\ref{fig:entropy}--\ref{fig:twotime} use the nonseparable two-time overlap \(W(s,s')\).

After the spectral contraction used for the forecast, the polarimetric cross-helicity covariance is the phase-sensitive observable of this kernel. Up to the common normalization fixed by the selected time windows, its connected part has the structure
\begin{equation}
 C_{+-}(s,s')\propto
 W(s,s')\exp\{2i[\Theta(s)-\Theta(s')]\},
\label{eq:Cpm-main}
\end{equation}
with the scattering phase retained in the mode-resolved form of Eq.~\eqref{eq:fixed-support-history-kernel-main}.  Thermal shot noise can contribute to intensity covariances, but it is not phase-locked to this relay angle.  The falsifiable fixed-support prediction is therefore the joint appearance of the spectral envelope \(W(s,s')\) and the deterministic polarimetric phase \(2[\Theta(s)-\Theta(s')]\).

The accumulated scalar coherence entering the reduced qubit is the prefix-normalized dephasing component.
Equation~\eqref{eq:levelC-c-main} is the fixed-support electromagnetic realization of the same modulus $|c(V)|$ that appears in the Bloch/readout relation of Eq.~\eqref{eq:readout}. It does not introduce an independent coherence variable: $c(V)$ remains the off-diagonal complex scalar coherence of the reduced readout matrix, while the photonic two-time kernel fixes its accumulated magnitude.
\begin{equation}
\begin{aligned}
 |c(V)|&=\frac{1}{4\mathcal N(V)}
 \int_0^V\!ds\int_0^V\!ds'\\
&\quad\times W(s,s')\{1-\cos[2\Delta\Theta(s,s')]\}.
\end{aligned}
\label{eq:levelC-c-main}
\end{equation}
where \(\Delta\Theta(s,s')=\Theta(s)-\Theta(s')\) and
\begin{equation}
 \mathcal N(V)=\int_0^V\!ds\int_0^V\!ds'\,W(s,s').
\label{eq:levelC-normalization-main}
\end{equation}
This fixes the modulus \(|c(V)|\) of the off-diagonal entry used in \(\rho_Q(V)\). Consequently \(\Delta(V)\), \(S_A(V)\), and the coherence-transport rate depend on this accumulated magnitude, while phase-sensitive information is retained in the two-time helicity covariance. The double time integral measures accumulated coherence across the entire emitted history rather than an instantaneous polarization density matrix.
\subsection{Vector-polarization reading and fixed-support photonic channel rates}

The active support determined in Sec.~\ref{sec:channel} carries an immediate vector reading familiar from coherency-matrix and Stokes descriptions of two-component polarization sectors \cite{ParisRehacek2004,Alonso2023PolarizationGeometry,GilOssikovski2022}.
Since Eq.~\eqref{eq:active-support} identifies the active bridge with $\Cu\cong\R^2$, and Eq.~\eqref{eq:fixedsupportmatrices} fixes $J$ as the generator of planar rotations, the reduced qubit representative may also be read as the coherency matrix of an effective two-component radiative mode.
Appendix~\ref{app:blochstokes} records the explicit Bloch--Stokes dictionary, the ladder-operator translation, and the associated polarization observables.
Introduce the ladder operators and number operator on the channel basis,
\begin{equation}
\sigma_+:=|1\rangle\langle 0|,
\qquad
\sigma_-:=|0\rangle\langle 1|,
\qquad
n:=|1\rangle\langle 1|=\sigma_+\sigma_-.
\label{eq:laddermain}
\end{equation}
Then
\begin{equation}
\begin{aligned}
\langle \sigma_-\rangle_v&=c(v),\\
\langle \sigma_+\rangle_v&=c(v)^*,\\
\langle n\rangle_v&=r(v)+\delta r(v).
\end{aligned}
\label{eq:ladderexpectmain}
\end{equation}
Hence $\delta r(v)$ is the memory-induced population shift relative to the leading flux variable $r(v)$, while $|c(v)|$ is the coherence magnitude of the effective two-component mode.
A purely thermal reduced state in the channel basis would be diagonal, so the departures from strict thermality in this setup are carried precisely by the population shift $\delta r(v)$ and the coherence sector $c(v)$.
Appendix~\ref{app:blochstokes} also proves that the effective degree of polarization is the spectral gap itself,
\begin{equation}
P(v)=\Delta(v).
\label{eq:polgap}
\end{equation}
so that the same entropy may equivalently be rewritten as the polarization-entropy identity
\begin{equation}
S_A(v)=H_2\!\left(\frac{1-P(v)}{2}\right).
\label{eq:polentropy}
\end{equation}
This vector and ladder-operator reading is not a second derivation of the species-resolved Maxwell problem.
Its role is narrower: once the photonic realization of $\Qu$ is fixed, it translates the density-matrix representative of the restricted channel state into detector-level observables on the effective two-component support.
The asymptotic readout is therefore organized by the effective occupation $\langle n\rangle_v$, the coherence magnitude $|c(v)|$, and the polarization/spectral-gap variable $P(v)=\Delta(v)$.
The same background data also determine a family of reduced-channel observables that go beyond the entropy curve itself.
Appendix~\ref{app:blochstokes} formulates them in the same source-fixed gauge and places them in the open-system language of reduced channel transport \cite{Nakajima1958,Zwanzig1960,BreuerPetruccione2002}.
From the sourced trajectory one obtains the energy-drain rate
\begin{equation}
\mathcal{R}_E(v):=-\dot{\eexc}(v)=\Gamma_{\mathrm{evap}}(v)\,\eexc(v),
\label{eq:REmain}
\end{equation}
while the source-fixed bridge data determine the relay-opening rate
\begin{equation}
\begin{aligned}
\mathcal{R}_a(v)
&:=\frac{\dd}{\dd v}\log a\bigl(\eexc(v)\bigr)\\
&=\frac{\Gamma_{\mathrm{evap}}(v)\,\eexc(v)}{2\,\omega\bigl(\eexc(v)\bigr)}.
\end{aligned}
\label{eq:Ramain}
\end{equation}
The cumulative memory intensity is
\begin{equation}
\mathcal{M}(v)=a\bigl(\eexc(v)\bigr)\int_0^v a\bigl(\eexc(s)\bigr)\,\dd s,
\label{eq:memorymain}
\end{equation}
The associated rate formulas--effective occupation transfer, coherence transport, spectral-gap flow, and reduced-channel entropy production--are derived in Appendices~\ref{app:ratesclock} and \ref{app:ratesentropy}.

\begin{figure}[t!]
    \centering
    \includegraphics[width=1.00\linewidth]{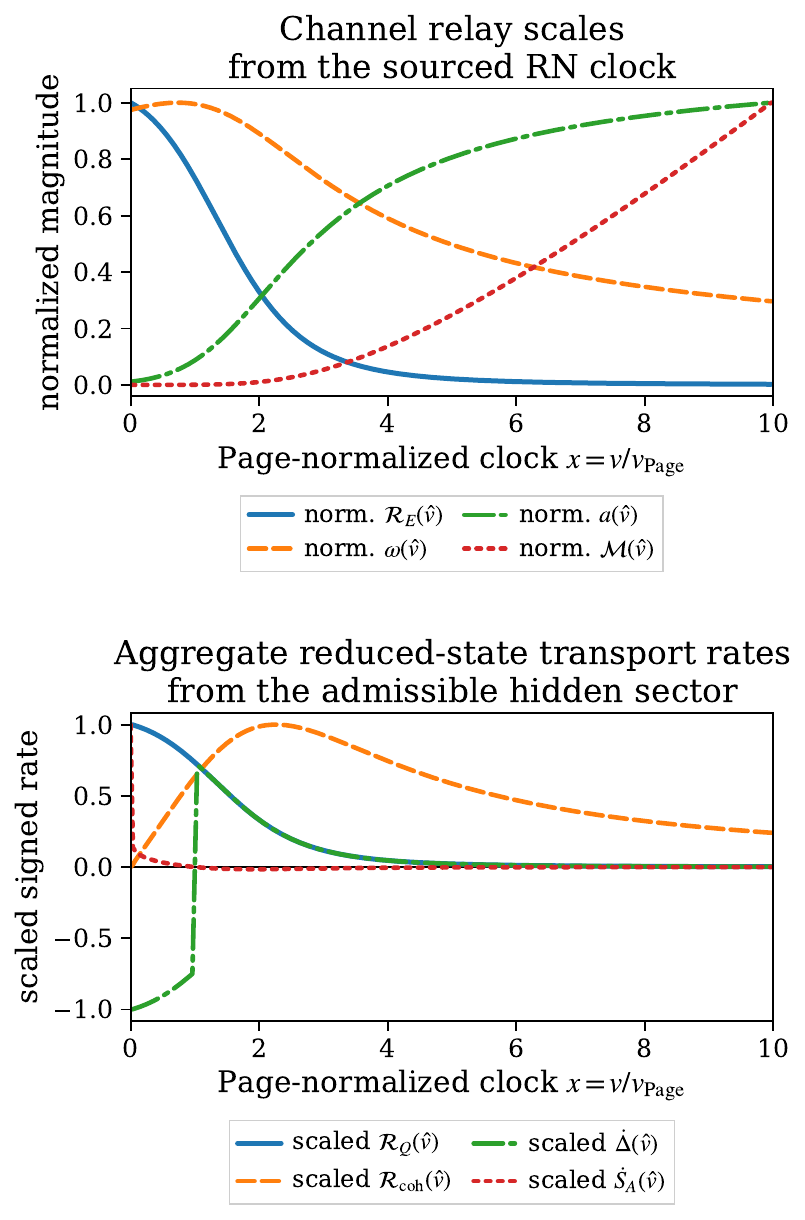}
    \caption{Channel relay scales and reduced-state transport rates on the neutral-source RN background.
Top: display-normalized relay scalars extracted from the same sourced trajectory, namely the energy-drain rate $\mathcal{R}_E(\hat v)=-\dot{\eexc}(\hat v)$, the bridge frequency $\omega(\hat v)$, the relay amplitude $a(\hat v)$, and the cumulative memory variable $\mathcal{M}(\hat v)=a(\hat v)\int_0^{\hat v} a(s)\,\dd s$.
Bottom: display-normalized reduced-state rates evaluated on the natural transverse photonic representative, namely $\mathcal{R}_Q(\hat v)=\frac{\dd}{\dd \hat v}[r(\hat v)+\delta r(\hat v)]$, $\mathcal{R}_{\mathrm{coh}}(\hat v)=\frac{\dd}{\dd \hat v}|c(\hat v)|$, the spectral-gap flow $\dot\Delta(\hat v)$, and the reduced-channel entropy production rate $\dot S_A(\hat v)$. On that representative $\delta r(\hat v)=0$, so $\mathcal{R}_Q(\hat v)=\dot r(\hat v)$. The normalization is for display only.}
    \label{fig:rates}
\end{figure}

The separation in Fig.~\ref{fig:rates} shows that the sourced clock fixes the relay scales, while the fixed-support spectral-overlap kernel controls the coherence-sensitive transport rates. In the natural transverse photonic representative $\delta r(v)=0$, so $\langle n\rangle_v=r(v)$ and $\mathcal R_Q(v)=\dot r(v)$; the nontrivial memory imprint is carried by the emitted-history coherence $c(v)$.

\subsection{Historical coherence and the two-time readout observable}
\label{subsec:historicalcoherence}

The scalar coherence $c(V)$ is the prefix integral of a two-time readout density. The corresponding detector observable is ordinary: it lives in the selected transverse readout block $\Qu$, not in the full retained Albertian sector. Let $\mathcal O_Q^{(\chi)}(v)$ denote a time-windowed quadrature of the two-helicity readout around the retarded time $v$, with analyzer phase $\chi$. In the effective two-state channel it is
\begin{equation}
\mathcal O_Q^{(\chi)}(v)=
 e^{-i\chi}\sigma_-(v)+e^{i\chi}\sigma_+(v),
\label{eq:OQchi}
\end{equation}
and in the transverse photonic realization it is the corresponding helicity coherency observable
\begin{equation}
\begin{aligned}
\mathcal O_Q^{(\chi)}(v)
\longleftrightarrow{}&
 e^{-i\chi}a_+^\dagger(v)a_-(v)\\
&+e^{i\chi}a_-^\dagger(v)a_+(v).
\end{aligned}
\label{eq:OQphoton}
\end{equation}
Here $a_\pm(v)$ are windowed annihilation operators for the two physical transverse helicities. The symbols $\sigma_\pm(v)$ denote the same $Q_u$ ladder/Stokes components used in the one-time Bloch representation of Sec.~\ref{sec:state}, now evaluated as readout observables in two emission windows. They should not be read as new photon creation and annihilation operators; the Maxwell field provides the transverse two-helicity support on which these readout components are realized. Thus $\mathcal O_Q^{(\chi)}(v)$ is the Stokes/coherency quadrature measured by a polarization analyzer in a finite emission window. Its connected two-time readout correlator is
\begin{equation}
\begin{aligned}
G_{\rm conn}^{(\chi)}(v,s)
&=\bigl\langle
\mathcal O_Q^{(\chi)}(v)\mathcal O_Q^{(\chi)}(s)
\bigr\rangle\\
&\quad-
\bigl\langle\mathcal O_Q^{(\chi)}(v)\bigr\rangle
\bigl\langle\mathcal O_Q^{(\chi)}(s)\bigr\rangle .
\end{aligned}
\label{eq:Gconnmain}
\end{equation}
This covariance is the operational two-time observable of the selected readout: it measures the part of the emitted history carried by correlations between two time windows rather than by their separate one-time means. The complex cross-helicity component underlying this real Stokes quadrature is
\begin{equation}
C_{+-}(v,s):=
\bigl\langle\sigma_+(v)\sigma_-(s)\bigr\rangle_{\rm conn}.
\label{eq:CpmGbridge}
\end{equation}
The analyzer phase $\chi$ selects a real quadrature of this complex component. Up to the corresponding one-point subtractions and same-helicity/intensity pieces fixed by the chosen Stokes observable, $G_{\rm conn}^{(\chi)}(v,s)$ contains the projection
\begin{equation}
G_{\rm conn}^{(\chi)}(v,s)\supset
2\,\mathrm{Re}\!\left[e^{2i\chi}C_{+-}(v,s)\right].
\label{eq:GchiCpmProjection}
\end{equation}
Thus $C_{+-}$ carries the phase-sensitive helicity memory, while $G_{\rm conn}^{(\chi)}$ is its directly measured Stokes-quadrature readout. At one time this is consistent with the Bloch--Stokes relation $\mathrm{Tr}[\rho_Q(v)\sigma_-]=c(v)$; at two times it records the emitted-history covariance that is compressed into the accumulated scalar coherence.

This connects the construction with the multi-time Hawking-correlation viewpoint: the one-time spectrum can remain close to its greybody-filtered form, while the emitted history carries information in temporal covariances \cite{AnastopoulosSavvidou2020MultiTime}. Here that covariance is not left abstract; it is the helicity/Stokes correlator selected by $Q_u$ and generated by the Peirce relay. The single-window occupation follows the sourced clock in the natural representative, $\langle n\rangle_v=r(v)$, while the retained-sector imprint appears in the phase-sensitive correlator of Eq.~\eqref{eq:Gconnmain}. This is the correlation-level diagnostic anticipated by the Page/Mathur information-flow lesson \cite{Page1993,Mathur2009SmallCorrections}.

The full Maxwell correlator carries its frequency, angular-momentum, helicity, greybody, and scattering-phase labels.  The fixed-support forecast keeps their positive spectral contraction through the overlap \(W(v,s)\) of Eq.~\eqref{eq:spectral-overlap-main}.  Define
\begin{equation}
\Delta\Theta(v,s):=\Theta(v)-\Theta(s).
\label{eq:DeltaThetaMain}
\end{equation}
The two-time readout separates into the spectral envelope, the phase-locked polarimetric covariance, and the connected dephasing density,
\begin{equation}
\begin{aligned}
D_0(v,s)&=W(v,s),\\
D_\phi(v,s)&=W(v,s)\cos[2\Delta\Theta(v,s)],\\
D_{\rm conn}(v,s)&=W(v,s)\{1-\cos[2\Delta\Theta(v,s)]\}.
\end{aligned}
\label{eq:twotimedensities}
\end{equation}
The envelope \(D_0\) gives the spectral pair weight of two emission windows. The phase density \(D_\phi\) records the relay-induced polarimetric phase alignment of those windows. The connected density \(D_{\rm conn}=D_0-D_\phi\) records the positive dephasing part associated with separated windows; it vanishes on the equal-time diagonal and grows where the relay phase distinguishes two portions of the emitted history. With the spectral-overlap normalization \(\mathcal N(V)\) of Eq.~\eqref{eq:levelC-normalization-main}, the connected density fixes the real positive magnitude of the connected contribution,
\begin{equation}
\begin{aligned}
|c_{\rm conn}(V)|&=
 \frac{1}{4\mathcal N(V)}
 \int_0^V\!\dd s\int_0^V\!\dd s'\,
 D_{\rm conn}(s,s').
\end{aligned}
\label{eq:cconnmain}
\end{equation}
In the helicity basis used for the fixed-support representative, this contribution is inserted with the conventional phase of that basis. The scalar readout quantities depend on \(|c(V)|\) and on the spectral gap. Appendix~\ref{app:fixedsupport} gives the fixed-support emitted-history kernel from which \(W(v,s)\) and the polarimetric forecast are obtained.

\begin{figure*}[t!]
    \centering
    \includegraphics[width=0.98\textwidth]{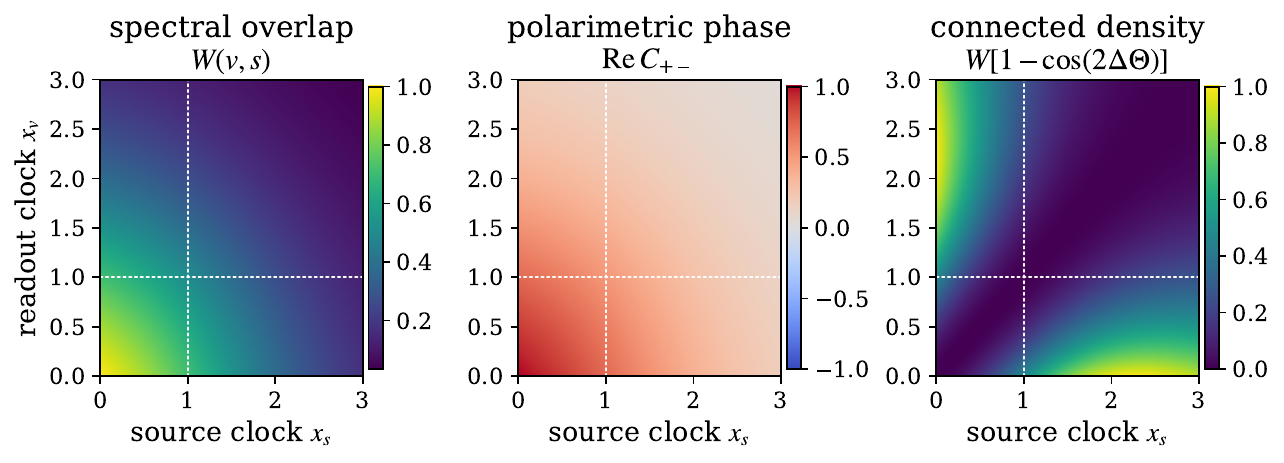}
    \caption{Fixed-support photonic two-time forecast on the sourced RN clock. Left: the spectral overlap $W(v,s)$ induced by the mode-resolved history kernel. Middle: the phase-locked polarimetric covariance proportional to $\mathrm{Re}\,C_{+-}(v,s)=W(v,s)\cos[2(\Theta(v)-\Theta(s))]$. Right: the connected density $D_{\rm conn}(v,s)=W(v,s)\{1-\cos[2(\Theta(v)-\Theta(s))]\}$, whose prefix double integral gives the connected contribution to the accumulated coherence magnitude $|c(V)|$. Dotted lines mark the Page time in both source and readout directions. The figure displays the Lorentzian memory carried by correlations between separated emission windows of the selected transverse readout.}
    \label{fig:twotime}
\end{figure*}

Figure~\ref{fig:twotime} unfolds the scalar coherence into the two-time plane. The left panel shows the spectrally integrated support of the emitted history; the middle panel shows the phase-locked polarimetric covariance; the right panel shows the connected density whose prefix integral contributes to the off-diagonal readout coherence. The selected transverse channel therefore carries the retained-sector imprint as historical coherence: the leading occupation is tied to the sourced clock, and the phase-sensitive two-time sector records the relay memory between separated emission windows.

\section{Euclidean transfer and superstatistical branch canonization of the Page envelope}
\label{sec:superstatistics}

Up to this point the reduced Volterra kernel has been used in real time \cite{Gripenberg1990}. That Lorentzian reading gives the exterior detector state, the accumulated photonic coherence, the transport observables, and the two-time density displayed in Fig.~\ref{fig:twotime}. The Peirce--Volterra kernel also has a Euclidean branch reading. Once the RN clock writes the retarded memory law as an ordered energy transfer, Euclidean continuation turns the relay phase into a branch-transfer weight. Equivalently, one diagonalizes the active two-dimensional relay support into its circular components and keeps the decaying Euclidean transfer component on each branch. The full Lorentzian rotation is not itself a positive measure; positivity belongs to the branchwise Euclidean transfer representation. Summing this transfer over an endpoint-selected branch-history domain canonizes that branch measure and produces a positive superstatistical Laplace form \cite{BeckCohen2003,Beck2007,Bernstein1929,Widder1941}. The construction is therefore branchwise from the outset. The regular-opening endpoint and the near-extremal endpoint do not define two terms of one global positive Laplace transform; they define two positive transfer representations on two physical branch-history domains. We now derive the two endpoint branches used in Fig.~\ref{fig:samebgbranch}: regular opening as $r\to0$, which gives the Tsallis/Lomax scale ensemble, and near-extremal Reissner--Nordstr{\"o}m residence as $r\to1$, which gives the shifted-L{\'e}vy residence ensemble.

The retarded elimination of the relay, derived in Appendix~\ref{app:hiddenclosure}, gives the source-fixed active kernel
\begin{equation}
K_{\rm Lor}(v,s)=a(E_v)a(E_s)\exp[J\Theta(v,s)],
\label{eq:KLorVII}
\end{equation}
where $E_v=\eexc(v)$, $a(E)=\exp[-S_*(E)]$, and $J^2=-\idtwo$ on the active two-dimensional support.  The WKB barrier action $S_*(E)$ and the source-fixed clock are derived in Appendices~\ref{app:wkb}--\ref{app:sourcedclockdetail}, while the explicit source-fixed kernel is collected in Appendix~\ref{app:kernel_source}.  Introduce the monotone sourced energy clock
\begin{equation}
E=\eexc(v),\qquad \Gamma(E)=-\frac{\dd E}{\dd v}>0 .
\label{eq:GammaVII}
\end{equation}
A Volterra segment from an earlier point $s$ to a later point $v$ becomes an ordered energy-history segment
\begin{equation}
h=(E,E'),\qquad 0\le E\le E'\le E_0,
\end{equation}
with
\begin{equation}
\begin{aligned}
\dd s&=-\frac{\dd E'}{\Gamma(E')},\\
\Theta(v,s)&=\int_E^{E'}\frac{\omega(\xi)}{\Gamma(\xi)}\,\dd\xi .
\end{aligned}
\label{eq:ThetaEnergyVII}
\end{equation}
Thus the sourced Reissner--Nordstr{\"o}m clock turns the retarded Volterra memory into an energy-ordered transfer problem.

The active relay has the two circular components selected by the ordinary two-helicity readout.  Since $J^2=-\idtwo$,
\begin{equation}
\begin{aligned}
\exp[J\Theta]&=e^{+i\Theta}P_+ + e^{-i\Theta}P_-,\\
P_\pm&=\frac12(\idtwo\mp iJ).
\end{aligned}
\end{equation}
Euclidean continuation of the stable transfer component gives the positive relay weight. The growing Euclidean component is excluded by the bounded-transfer/admissibility prescription on each branch; after projection to the positive branch weight, the two circular components contribute the same scalar factor.
\begin{equation}
\begin{aligned}
e^{\pm i\Theta}&\longmapsto e^{-\tau(E,E')},\\
\tau(E,E')&=\int_E^{E'}\frac{\omega(\xi)}{\Gamma(\xi)}\,\dd\xi\ge0 .
\end{aligned}
\label{eq:tauVII}
\end{equation}
The scalar Euclidean transfer kernel is therefore
\begin{equation}
K_E(E,E')=
\frac{
\exp[-S_*(E)-S_*(E')-\tau(E,E')]
}{\Gamma(E')}.
\label{eq:KEVII}
\end{equation}
Equivalently,
\begin{equation}
\begin{aligned}
K_E(E,E')\,\dd E'
&=W(E)W(E')e^{-\tau(E,E')}\,
\dd\nu_{\rm clock}(E'),\\
W(E)&=e^{-S_*(E)},\qquad
\dd\nu_{\rm clock}(E')=\frac{\dd E'}{\Gamma(E')} .
\end{aligned}
\label{eq:KEfactorVII}
\end{equation}
Equation~\eqref{eq:KEfactorVII} is the physical content of the Euclidean transfer: barrier weight at the two endpoints, Euclidean relay time between them, and the measure induced by the RN clock.  The transfer kernel is the analogue of a Euclidean matrix element.  It becomes a branch partition weight after being summed over a branch domain of histories.

Let $b$ denote one of the two endpoint branches.  A branch history $h=(E,E')\in H_b$ carries the positive barrier-clock measure $W(E)W(E')\dd\nu_{\rm clock}(E,E')$.  The branch scale map
\begin{equation}
\lambda_b:H_b\longrightarrow\mathbb R_+
\end{equation}
assigns to each history its effective intensive transfer scale.  The variable $\lambda_b$ is conjugate to the macroscopic branch coordinate $y_b$, just as inverse temperature is conjugate to energy in the canonical ensemble.  For the opening endpoint $y_H$ is the emitted fraction $r$, expressed below on the branch clock used in Fig.~\ref{fig:samebgbranch}; for the late endpoint $y_I$ is the distance from extremality $1-r$, again expressed on its branch clock.  The branch transfer function is
\begin{equation}
\begin{aligned}
G_b(y_b)=\int_{H_b}& W(E)W(E')
\exp[-\lambda_b(E,E')y_b] \\
&\times\dd\nu_{\rm clock}(E,E').
\end{aligned}
\label{eq:GbHistVII}
\end{equation}
Pushing the positive history measure forward by $\lambda_b$ gives
\begin{equation}
\dd\mu_b(\lambda)=(\lambda_b)_*
\left[W(E)W(E')\dd\nu_{\rm clock}(E,E')\right],
\label{eq:pushVII}
\end{equation}
so that
\begin{equation}
G_b(y_b)=\int_0^\infty e^{-\lambda y_b}\,\dd\mu_b(\lambda).
\label{eq:superstatVII}
\end{equation}
This is the canonical branch sum.  The measure $\dd\mu_b$ counts Euclidean transfer histories by their effective scale $\lambda$, while $e^{-\lambda y_b}$ is the ordinary canonical weight conjugate to the branch coordinate.  In this precise sense, the superstatistical representation used here is a positive Laplace superposition of local exponential transfer weights \cite{BeckCohen2003,Bernstein1929,Widder1941}.  If locally $\dd\mu_b(\lambda)=e^{\Sigma_b(\lambda)}\dd\lambda$, then the branch free action
\begin{equation}
A_b(y_b):=-\log G_b(y_b)
\end{equation}
has the saddle form
\begin{equation}
A_b(y_b)\simeq
\min_\lambda\{\lambda y_b-\Sigma_b(\lambda)\}.
\label{eq:branchfreeVII}
\end{equation}
The branch measure also carries a replica/Mellin reading. Since $e^{-\lambda y_b}$ is the positive local transfer weight, replica moments are its multiplicative moments: the shared-scale moment is $G_b(ny_b)$, whereas statistically independent replicas give $G_b(y_b)^n$. The resulting non-factorizing ratio is the channel analogue of the connected replica saddle in Island calculations. Appendix~\ref{app:replica_mellin} gives the corresponding Mellin construction and connected branch action.

Thus the Laplace representation canonizes the scale-resolved entropy of Euclidean branch histories.  The subscript $b$ is essential. The opening and near-extremal measures live on different endpoint-selected history domains. They are not restrictions of one positive global measure, and the branch exchange below is not a soft global partition sum. Appendix~\ref{app:branch_canonization} records the pushforward construction and the canonical-action statement, while Appendix~\ref{app:noglobal} proves that a single positive global Laplace reconstruction is generically incompatible with a genuine analytic branch switch.

We now evaluate the two endpoint limits of the same transfer kernel.  Define
\begin{equation}
\begin{gathered}
r=1-\frac{E}{E_0},\qquad E=E_0(1-r),\\
E'=E_0(1-u),\qquad 0\le u\le r .
\end{gathered}
\label{eq:rEndpointDefsVII}
\end{equation}
At the opening endpoint $r\to0$, the sourced clock and barrier are regular:
\begin{equation}
\begin{aligned}
S_*(E)&=S_*(E_0)+O(r),\\
\Gamma(E)&=\Gamma_0+O(r),\\
\tau(E,E')&=O(r).
\end{aligned}
\label{eq:openingRegularVII}
\end{equation}
Consequently
\begin{equation}
K_E(r,u)=C_0[1+O(r)].
\label{eq:KEregularVII}
\end{equation}
For a regular branch preparation $X_H(u)=X_0+O(u)$, the opening weight satisfies
\begin{equation}
\begin{aligned}
Z_H(r)&\propto\int_0^rK_E(r,u)X_H(u)\,\dd u\\
&=C_0X_0r+O(r^2).
\end{aligned}
\label{eq:ZHregularVII}
\end{equation}
Here $Z_H(r)$ is the opening-domain weight. The endpoint action is the binary resolution entropy of this opening event, so $Z_H(r)=\kappa r+O(r^2)$ gives the universal onset after branch normalization:
\begin{equation}
A_H(r)=-r\log_2 r+O(r).
\label{eq:AHonsetVII}
\end{equation}
A positive canonical scale measure producing this onset has high-scale tail $w_H(\lambda)\sim\lambda^{-2}$.  The canonical one-scale representative is
\begin{equation}
w_H^{\rm can}(\lambda)=
\frac{\lambda_c}{(\lambda+\lambda_c)^2},
\qquad \lambda_c=\frac{1}{\ln2}.
\label{eq:LomaxVII}
\end{equation}
Thus the regular-opening endpoint canonizes to the Lomax/Pareto-II scale ensemble, equivalently the one-sided Tsallis/Lomax branch \cite{Tsallis1988,Tsallis2009,Shalizi2007,Lomax1954}.  Appendix~\ref{app:opening_lomax} gives the endpoint expansion and the scale-density calculation.

For the display comparator in Fig.~\ref{fig:samebgbranch}, the Bekenstein branch is the normalized source-fixed RN entropy difference
\begin{equation}
\begin{aligned}
s_B(x)&=\frac{\Delta S_{\rm BH}^{\rm RN}(E(x))}
{\Delta S_{\rm BH}^{\rm RN}(E(0))},\\
\Delta S_{\rm BH}^{\rm RN}(E)&:=S_{\rm BH}^{\rm RN}(E)-S_{\rm BH}^{\rm RN}(0).
\end{aligned}
\end{equation}
the Hawking branch is $s_H(x)=1-s_B(x)$, and the Page comparator is $s_{\rm Page}(x)=\min\{s_H(x),s_B(x)\}$. These comparator branches fix the common display target; the endpoint classes themselves are the Tsallis/Lomax and shifted-L{\'e}vy classes derived in this section.

\begin{figure}[!t]
    \centering
    \includegraphics[width=0.98\columnwidth]{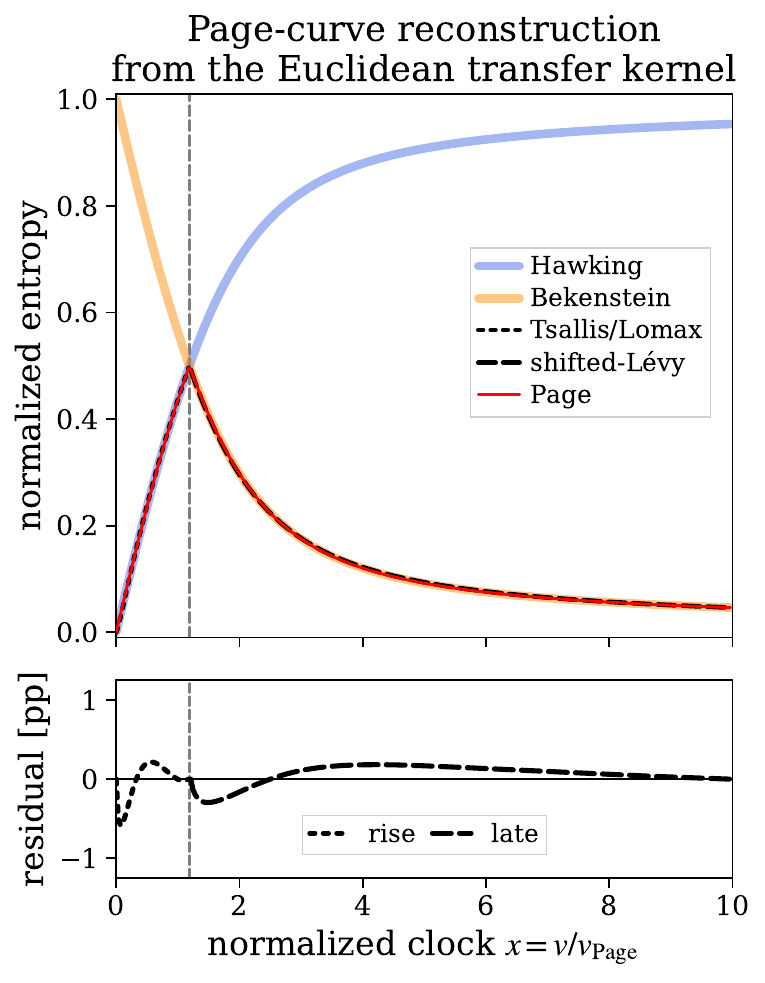}
    \caption{Source-fixed branch-envelope reconstruction from the Euclidean endpoint actions. The upper panel shows the Hawking and Bekenstein branch actions, their branch-admissible Page envelope, and the two endpoint-canonized branches: Tsallis/Lomax for regular opening and shifted-L{\'e}vy for near-extremal residence. The lower panel shows the branch residuals in percentage points on the common Page-normalized clock. The envelope is a hard branch selection, not a global soft two-branch partition sum; Appendix~\ref{app:noglobal} gives the no-global-positive-transform argument.}
    \label{fig:samebgbranch}
\end{figure}

The late endpoint is controlled by the near-extremal Reissner--Nordstr{\"o}m barrier.  Set
\begin{equation}
\epsilon=1-r=\frac{E}{E_0}\to0.
\end{equation}
The source-fixed action has the expansion
\begin{equation}
S_*(E)=\mu E+\nu\sqrt E+O(E^{3/2}).
\label{eq:SstarLateVII}
\end{equation}
Hence the late branch weight has the representative form
\begin{equation}
\begin{aligned}
G_I(r)&=\exp[-\mu_e(1-r)-\nu_e\sqrt{1-r}],\\
\mu_e&=\mu E_0,
\qquad \nu_e=\nu\sqrt{E_0}.
\end{aligned}
\label{eq:GIlateVII}
\end{equation}
It admits the shifted one-sided L{\'e}vy representation
\begin{equation}
G_I(r)=\int_{\mu_e}^{\infty}
e^{-\beta(1-r)}\,\dd\mu_I(\beta),
\label{eq:GILevyVII}
\end{equation}
with density
\begin{equation}
\begin{aligned}
\frac{\dd\mu_I}{\dd\beta}
&=\frac{\nu_e}{2\sqrt\pi}
(\beta-\mu_e)^{-3/2} \\
&\quad\times
\exp\!\left[-\frac{\nu_e^2}{4(\beta-\mu_e)}\right].
\end{aligned}
\label{eq:LevyDensityVII}
\end{equation}
The square-root term in Eq.~\eqref{eq:SstarLateVII} is the near-extremal residence signature.  Its positive scale measure is the shifted-L{\'e}vy residence ensemble, the standard canonical form associated with broad residence and first-passage statistics \cite{Feller1971,Redner2001}.  Appendix~\ref{app:late_levy} derives the expansion of $S_*(E)$ and the shifted-L{\'e}vy identity.

The endpoint families are fixed; only the monotone branch-clock placements are calibrated to the common RN/Page display clock. The endpoint actions are now placed on $x=v/v_{\rm Page}$ through their branch clocks $y_H(x)$ and $y_I(x)$:
\begin{equation}
A_H(x):=A_H(y_H(x)),\qquad
A_I(x):=A_I(y_I(x)).
\label{eq:branchclockactionsVII}
\end{equation}
The Page reconstruction is the branch-admissible lower envelope
\begin{equation}
A_{\rm br}(x)=\min\{A_H(x),A_I(x)\}.
\label{eq:hardEnvelopeVII}
\end{equation}
Equivalently, on the two physical branch domains,
\begin{equation}
A_{\rm br}(x)=
\begin{cases}
A_H(x), & x\in I_H,\\
A_I(x), & x\in I_I,
\end{cases}
\label{eq:hardEnvelopePiecewiseVII}
\end{equation}
with the exchange point fixed by the common RN clock normalization. The hard-envelope equation is not an unannounced approximation to a soft partition sum. A soft expression $-\log(e^{-A_H}+e^{-A_I})$ would require an additional global two-branch ensemble in which both endpoint measures coexist on one common history space. That ensemble is not part of the source-fixed branch construction. Appendix~\ref{app:noglobal} proves that such a single positive global reconstruction is generically incompatible with a genuine analytic branch switch.

Figure~\ref{fig:samebgbranch} displays the two endpoint waiting-time regimes selected by the Euclidean transfer kernel on the common source-fixed RN display clock. The endpoint classes are fixed: the regular-opening endpoint canonizes to the Tsallis/Lomax class with $q=2$, while the near-extremal residence endpoint canonizes to the shifted-L{\'e}vy class with square-root exponent $\gamma=1/2$. The common-clock placement fixes only the branch clocks $y_H(x)$ and $y_I(x)$ used to display both endpoint actions on $x=v/v_{\rm Page}$. It changes the parametrization of the endpoint variables, not the endpoint scale ensembles. The Page-curve envelope is the lower admissible branch envelope selected by these two branch actions on the common RN clock, in analogy with saddle/envelope dominance but without invoking an Island prescription \cite{Page1993,Almheiri2020,Penington2020,AlmheiriRMP2021}.

\section{Conclusion}
\label{sec:conclusion}

The question posed in the Introduction was whether the globally associative Hilbert-factor stage used by AMPS is the correct observable stage for the relevant horizon degrees of freedom. In the retained octonionic-magical sector studied here, the answer is sectorial. The retained horizon-side observable algebra is Albertian, while the exterior detector remains associative because the Peirce/readout projection selects the two-helicity block $Q_u\simeq H_2(\C)$. Local two-generated subalgebras remain special, as guaranteed by the Artin--Cohn/Jacobson--Paige two-generator specialness results \cite{Cohn1954,JacobsonPaige1957}. The infalling description is therefore governed by the usual local effective reasoning on those local algebras, while the full retained horizon algebra is exceptional. The AMPS tripartite tensor-product stage is replaced at the retained-horizon level by this Albertian observable algebra. The compositional obstruction is a statement about the retained Jordan-AQM sector, not a theorem about the full operator algebra of quantum fields in the spacetime \cite{AMPS2013,Harlow2016,HancheOlsen1983,BarnumGraydonWilce2020}.

This reframes several lessons from the information-problem literature. From AMPS it keeps the centrality of the factorization question: the conflict is sharpened only after one assumes an ordinary global tensor-product stage \cite{AMPS2013,Harlow2016}. From Page and the Island developments it keeps the central role of the fine-grained radiation history and of saddle/envelope dominance \cite{Page1993,Almheiri2020,Penington2020,AlmheiriRMP2021}. From Mathur's theorem it keeps the lesson that purification is carried by a structured mechanism in the radiation history, not by arbitrary small local changes to individual Hawking pairs \cite{Mathur2009SmallCorrections}. The present construction implements these lessons inside a selected RN readout: the exterior detector is ordinary, while the retained horizon algebra that feeds it is Albertian.

In Lorentzian time, the closest operational analogue is the multi-time viewpoint on Hawking radiation. That literature identifies history-sensitive temporal correlators as the natural diagnostic when one-time spectra remain close to thermal \cite{AnastopoulosSavvidou2020MultiTime}. Here that diagnostic is realized in a specific channel. The observable is the connected covariance $G_{\rm conn}^{(\chi)}(v,s)$ of windowed helicity/Stokes observables in the selected two-helicity readout. The fixed-support kernel supplies a spectral overlap $W(v,s)$ and a polarimetric phase locked to $2[\Theta(v)-\Theta(s)]$; its connected density prefix-integrates to the scalar coherence $c(V)$ entering the reduced qubit. Thus the paper supplies an Albertian-channel origin for the multi-time diagnostic: the hidden Peirce complement and interface relay determine the long-range temporal structure seen by the ordinary exterior readout.

The same Peirce--Volterra memory has a Euclidean transfer reading \cite{Gripenberg1990}. Its branchwise Euclidean continuation gives a positive branch-transfer kernel. Its regular-opening endpoint gives the Tsallis/Lomax branch, the waiting-time class associated with activation of the readout channel, while its near-extremal endpoint gives the shifted-L{\'e}vy residence branch, the waiting-time class associated with long residence and escape near extremality \cite{BeckCohen2003,Feller1971,Redner2001}. These positive endpoint measures live on distinct branch-history domains. When placed on the common Page-normalized RN clock, their lower admissible envelope reconstructs the Page-curve envelope of the selected source-fixed readout, Eqs.~\eqref{eq:hardEnvelopeVII}.

The Euclidean branch measures also have a replica/Mellin reading. The positive branch measure that produces the superstatistical transfer function also carries a canonical Mellin tower: the shared-scale replica moment is $G_b(ny_b)$, whereas statistically independent replicas give $G_b(y_b)^n$. Their ratio defines the connected branch action $A_b(ny_b)-nA_b(y_b)$. Thus the Peirce--Volterra branch measure supplies the channel counterpart of the replica non-factorization: the replicated histories are correlated by a common transfer scale, and their dominant effective actions organize the same saddle/envelope structure used in Island calculations \cite{Almheiri2020,Penington2020,AlmheiriRMP2021}.

The no-global-positive-transform statement follows from the analyticity properties of positive Laplace transforms \cite{Bernstein1929,Widder1941} and is recorded in Appendix~\ref{app:noglobal}. The comparison is therefore a channel-level reconstruction of the Page envelope for the selected source-fixed readout, with the saddle/envelope logic made familiar by Island calculations but without identifying the mechanism with an Island prescription \cite{Almheiri2020,Penington2020,AlmheiriRMP2021}.

The open-system aspect also has a definite role. The relay elimination is a retarded memory reduction of the Nakajima--Zwanzig type \cite{Nakajima1958,Zwanzig1960,BreuerPetruccione2002}, but the eliminated variables are not a generic bath: they are the hidden Peirce complement and interface relay of the retained Albertian horizon algebra. In real time this reduction produces the measurable two-time readout covariance. In Euclidean time it produces the two endpoint waiting-time ensembles whose admissible lower envelope reconstructs the Page-curve envelope for the selected source-fixed readout. The retained horizon algebra is Albertian; the exterior detector is ordinary; the bridge between them is memory.
\begin{acknowledgments}

R.B.F. acknowledges partial financial support from the Coordena\c{c}\~ao de Aperfei\c{c}oamento de Pessoal de N\'ivel Superior (CAPES), Brazil, Finance Code 001.
\end{acknowledgments}

\bibliographystyle{apsrev4-2}
\bibliography{bibliography}

@article{Hawking1975,
  author  = {Hawking, S. W.},
  title   = {Particle Creation by Black Holes},
  journal = {Communications in Mathematical Physics},
  volume  = {43},
  number  = {3},
  pages   = {199--220},
  year    = {1975},
  doi     = {10.1007/BF02345020}
}

@article{Page1993,
  author  = {Page, Don N.},
  title   = {Information in Black Hole Radiation},
  journal = {Physical Review Letters},
  volume  = {71},
  number  = {23},
  pages   = {3743--3746},
  year    = {1993},
  doi     = {10.1103/PhysRevLett.71.3743}
}

@article{AMPS2013,
  author  = {Almheiri, Ahmed and Marolf, Donald and Polchinski, Joseph and Sully, James},
  title   = {Black Holes: Complementarity or Firewalls?},
  journal = {JHEP},
  volume  = {2013},
  number  = {2},
  pages   = {062},
  year    = {2013},
  doi     = {10.1007/JHEP02(2013)062}
}

@article{Harlow2016,
  author  = {Harlow, Daniel},
  title   = {Jerusalem Lectures on Black Holes and Quantum Information},
  journal = {Reviews of Modern Physics},
  volume  = {88},
  number  = {1},
  pages   = {015002},
  year    = {2016},
  doi     = {10.1103/RevModPhys.88.015002}
}

@article{Jordan1934,
  author  = {Jordan, Pascual and von Neumann, John and Wigner, Eugene P.},
  title   = {On an Algebraic Generalization of the Quantum Mechanical Formalism},
  journal = {Annals of Mathematics},
  volume  = {35},
  number  = {1},
  pages   = {29--64},
  year    = {1934},
  doi     = {10.2307/1968117}
}

@book{McCrimmon2004,
  author    = {McCrimmon, Kevin},
  title     = {A Taste of Jordan Algebras},
  publisher = {Springer},
  address   = {New York},
  year      = {2004},
  doi       = {10.1007/b97489}
}

@article{Cohn1954,
  author  = {Cohn, P. M.},
  title   = {On Homomorphic Images of Special Jordan Algebras},
  journal = {Canadian Journal of Mathematics},
  volume  = {6},
  pages   = {253--264},
  year    = {1954},
  doi     = {10.4153/CJM-1954-026-9}
}

@article{JacobsonPaige1957,
  author  = {Jacobson, Nathan and Paige, Lowell J.},
  title   = {On Jordan Algebras with Two Generators},
  journal = {Journal of Mathematics and Mechanics},
  volume  = {6},
  number  = {6},
  pages   = {895--906},
  year    = {1957}
}

@article{HancheOlsen1983,
  author  = {Hanche-Olsen, Harald},
  title   = {On the Structure and Tensor Products of {JC}-Algebras},
  journal = {Canadian Journal of Mathematics},
  volume  = {35},
  number  = {6},
  pages   = {1059--1074},
  year    = {1983},
  doi     = {10.4153/CJM-1983-059-8}
}

@article{BarnumGraydonWilce2020,
  author  = {Barnum, Howard and Graydon, Matthew A. and Wilce, Alexander},
  title   = {Composites and Categories of Euclidean Jordan Algebras},
  journal = {Quantum},
  volume  = {4},
  pages   = {359},
  year    = {2020},
  doi     = {10.22331/q-2020-11-08-359}
}

@article{GunaydinSierraTownsend1983,
  author  = {G{\"u}naydin, Murat and Sierra, G. and Townsend, P. K.},
  title   = {Exceptional Supergravity Theories and the Magic Square},
  journal = {Physics Letters B},
  volume  = {133},
  number  = {1-2},
  pages   = {72--76},
  year    = {1983},
  doi     = {10.1016/0370-2693(83)90108-9}
}

@article{GunaydinSierraTownsend1984,
  author  = {G{\"u}naydin, Murat and Sierra, G. and Townsend, P. K.},
  title   = {The Geometry of {$N=2$} Maxwell--Einstein Supergravity and Jordan Algebras},
  journal = {Nuclear Physics B},
  volume  = {242},
  number  = {1},
  pages   = {244--268},
  year    = {1984},
  doi     = {10.1016/0550-3213(84)90142-1}
}

@article{FerraraGunaydin2006,
  author  = {Ferrara, Sergio and G{\"u}naydin, Murat},
  title   = {Orbits and Attractors for {$N=2$} Maxwell--Einstein Supergravity Theories in Five Dimensions},
  journal = {Nuclear Physics B},
  volume  = {759},
  number  = {1-2},
  pages   = {1--19},
  year    = {2006},
  doi     = {10.1016/j.nuclphysb.2006.09.016}
}

@article{BianchiFerrara2008,
  author  = {Bianchi, Massimo and Ferrara, Sergio},
  title   = {Enriques and Octonionic Magic Supergravity Models},
  journal = {JHEP},
  volume  = {2008},
  number  = {2},
  pages   = {054},
  year    = {2008},
  doi     = {10.1088/1126-6708/2008/02/054}
}

@article{FerraraKalloshStrominger1995,
  author  = {Ferrara, Sergio and Kallosh, Renata and Strominger, Andrew},
  title   = {{$N=2$} Extremal Black Holes},
  journal = {Physical Review D},
  volume  = {52},
  number  = {10},
  pages   = {R5412--R5416},
  year    = {1995},
  doi     = {10.1103/PhysRevD.52.R5412}
}

@article{FerraraKallosh1996,
  author  = {Ferrara, Sergio and Kallosh, Renata},
  title   = {Supersymmetry and Attractors},
  journal = {Physical Review D},
  volume  = {54},
  number  = {2},
  pages   = {1514--1524},
  year    = {1996},
  doi     = {10.1103/PhysRevD.54.1514}
}

@article{FerraraGibbonsKallosh1997,
  author  = {Ferrara, Sergio and Gibbons, Gary W. and Kallosh, Renata},
  title   = {Black Holes and Critical Points in Moduli Space},
  journal = {Nuclear Physics B},
  volume  = {500},
  number  = {1-3},
  pages   = {75--93},
  year    = {1997},
  doi     = {10.1016/S0550-3213(97)00324-6}
}

@article{MeessenOrtinPerzShahbazi2012d5,
  author  = {Meessen, Patrick and Ort{\'i}n, Tom{\'a}s and Perz, Jan and Shahbazi, C. S.},
  title   = {Black Holes and Black Strings of {$N=2$, $D=5$} Supergravity in the H-FGK Formalism},
  journal = {JHEP},
  volume  = {2012},
  number  = {9},
  pages   = {001},
  year    = {2012},
  doi     = {10.1007/JHEP09(2012)001}
}

@article{Petersson2019,
  author = {Petersson, Holger P.},
  title = {A Survey on Albert Algebras},
  journal = {Transform. Groups},
  volume = {24},
  number = {1},
  pages = {219--278},
  year = {2019},
  doi = {10.1007/s00031-017-9471-4},
  url = {https://doi.org/10.1007/s00031-017-9471-4}
}

@article{GunaydinKidambi2022,
  author = {G{\"u}naydin, Murat and Kidambi, Abhiram},
  title = {Octonionic Magical Supergravity, Niemeier Lattices, and Exceptional \& Hilbert Modular Forms},
  journal = {Fortsch. Phys.},
  volume = {72},
  number = {2},
  pages = {2300242},
  year = {2024},
  doi = {10.1002/prop.202300242},
  eprint = {2209.05004},
  archivePrefix = {arXiv},
  primaryClass = {hep-th}
}

@article{DonnellyFreidel2016,
  author = {Donnelly, William and Freidel, Laurent},
  title = {Local subsystems in gauge theory and gravity},
  journal = {JHEP},
  volume = {2016},
  number = {09},
  pages = {102},
  year = {2016},
  doi = {10.1007/JHEP09(2016)102},
  eprint = {1601.04744},
  archivePrefix = {arXiv},
  primaryClass = {hep-th}
}

@article{DonnellyGiddings2017,
  author = {Donnelly, William and Giddings, Steven B.},
  title = {How is quantum information localized in gravity?},
  journal = {Phys. Rev. D},
  volume = {96},
  number = {8},
  pages = {086013},
  year = {2017},
  doi = {10.1103/PhysRevD.96.086013},
  eprint = {1706.03104},
  archivePrefix = {arXiv},
  primaryClass = {hep-th}
}

@article{Raju2022,
  author = {Raju, Suvrat},
  title = {Lessons from the Information Paradox},
  journal = {Phys. Rept.},
  volume = {943},
  pages = {1--80},
  year = {2022},
  doi = {10.1016/j.physrep.2021.10.001},
  url = {https://doi.org/10.1016/j.physrep.2021.10.001},
  eprint = {2012.05770},
  archivePrefix = {arXiv},
  primaryClass = {hep-th}
}

@article{Almheiri2020,
  author = {Almheiri, Ahmed and Hartman, Thomas and Maldacena, Juan and Shaghoulian, Edgar and Tajdini, Amirhossein},
  title = {Replica Wormholes and the Entropy of Hawking Radiation},
  journal = {JHEP},
  volume = {2020},
  number = {05},
  pages = {013},
  year = {2020},
  doi = {10.1007/JHEP05(2020)013},
  url = {https://doi.org/10.1007/JHEP05(2020)013},
  eprint = {1911.12333},
  archivePrefix = {arXiv},
  primaryClass = {hep-th}
}

@book{Schafer1966,
  author = {Schafer, Richard D.},
  title = {An Introduction to Nonassociative Algebras},
  publisher = {Academic Press},
  address = {New York},
  year = {1966}
}

@article{Baez2002,
  author = {Baez, John C.},
  title = {The Octonions},
  journal = {Bull. Amer. Math. Soc.},
  volume = {39},
  pages = {145--205},
  year = {2002},
  doi = {10.1090/S0273-0979-01-00934-X},
  url = {https://doi.org/10.1090/S0273-0979-01-00934-X}
}

@book{Haag1992,
  author = {Haag, Rudolf},
  title = {Local Quantum Physics: Fields, Particles, Algebras},
  publisher = {Springer},
  address = {Berlin},
  year = {1992},
  doi = {10.1007/978-3-642-61458-3},
  url = {https://doi.org/10.1007/978-3-642-61458-3}
}

@article{Barnum2014,
  author = {Barnum, Howard and Wilce, Alexander},
  title = {Local Tomography and the Jordan Structure of Quantum Theory},
  journal = {Found. Phys.},
  volume = {44},
  pages = {192--212},
  year = {2014},
  doi = {10.1007/s10701-014-9777-1},
  url = {https://doi.org/10.1007/s10701-014-9777-1}
}

@article{Petersson2004,
  author = {Petersson, Holger P.},
  title = {Structure Theorems for Jordan Algebras of Degree Three over Fields of Arbitrary Characteristic},
  journal = {Comm. Algebra},
  volume = {32},
  number = {3},
  pages = {1019--1049},
  year = {2004},
  doi = {10.1081/AGB-120027965},
  url = {https://doi.org/10.1081/AGB-120027965}
}

@article{Niestegge2015,
  author = {Niestegge, Gerd},
  title = {Dynamical Correspondence in a Generalized Quantum Theory},
  journal = {Found. Phys.},
  volume = {45},
  number = {5},
  pages = {525--534},
  year = {2015},
  doi = {10.1007/s10701-015-9881-x},
  url = {https://doi.org/10.1007/s10701-015-9881-x},
  eprint = {1402.0158},
  archivePrefix = {arXiv},
  primaryClass = {math-ph}
}

@article{Feshbach1958,
  author  = {Feshbach, Herman},
  title   = {Unified Theory of Nuclear Reactions},
  journal = {Annals of Physics},
  volume  = {5},
  number  = {4},
  pages   = {357--390},
  year    = {1958},
  doi     = {10.1016/0003-4916(58)90007-1}
}

@book{Pazy1983,
  author    = {Pazy, Amnon},
  title     = {Semigroups of Linear Operators and Applications to Partial Differential Equations},
  publisher = {Springer},
  address   = {New York},
  year      = {1983},
  doi       = {10.1007/978-1-4612-5561-1}
}

@book{Gripenberg1990,
  author    = {Gripenberg, Gustaf and Londen, Stig-Olof and Staffans, Olof},
  title     = {Volterra Integral and Functional Equations},
  publisher = {Cambridge University Press},
  address   = {Cambridge},
  year      = {1990}  
}

@article{Nakajima1958,
  author  = {Nakajima, Sadao},
  title   = {On Quantum Theory of Transport Phenomena: Steady Diffusion},
  journal = {Progress of Theoretical Physics},
  volume  = {20},
  number  = {6},
  pages   = {948--959},
  year    = {1958},
  doi     = {10.1143/PTP.20.948}
}

@article{Zwanzig1960,
  author  = {Zwanzig, Robert},
  title   = {Ensemble Method in the Theory of Irreversibility},
  journal = {The Journal of Chemical Physics},
  volume  = {33},
  number  = {5},
  pages   = {1338--1341},
  year    = {1960},
  doi     = {10.1063/1.1731409}
}

@book{BreuerPetruccione2002,
  author    = {Breuer, Heinz-Peter and Petruccione, Francesco},
  title     = {The Theory of Open Quantum Systems},
  publisher = {Oxford University Press},
  address   = {Oxford},
  year      = {2002},
  doi       = {10.1093/acprof:oso/9780199213900.001.0001}
}

@article{HiscockWeems1990,
  author  = {Hiscock, William A. and Weems, Larissa D.},
  title   = {Evolution of Charged Evaporating Black Holes},
  journal = {Physical Review D},
  volume  = {41},
  number  = {4},
  pages   = {1142--1151},
  year    = {1990},
  doi     = {10.1103/PhysRevD.41.1142}
}

@article{ParisRehacek2004,
  author  = {Paris, Matteo G. A. and Reh{\'a}{\v{c}}ek, Jaroslav},
  title   = {Quantum State Estimation},
  journal = {Lecture Notes in Physics},
  volume  = {649},
  pages   = {1--455},
  year    = {2004},
  doi     = {10.1007/b98673}
}

@article{AlmheiriRMP2021,
  author  = {Almheiri, Ahmed and Hartman, Thomas and Maldacena, Juan and Shaghoulian, Edgar and Tajdini, Amirhossein},
  title   = {The Entropy of Hawking Radiation},
  journal = {Reviews of Modern Physics},
  volume  = {93},
  number  = {3},
  pages   = {035002},
  year    = {2021},
  doi     = {10.1103/RevModPhys.93.035002}
}

@article{Alonso2023PolarizationGeometry,
  author  = {Alonso, Miguel A.},
  title   = {Geometric Descriptions for the Polarization of Nonparaxial Light: A Tutorial},
  journal = {Advances in Optics and Photonics},
  volume  = {15},
  number  = {1},
  pages   = {176--235},
  year    = {2023},
  doi     = {10.1364/AOP.475491}
}

@article{Beck2007,
  author  = {Beck, Christian},
  title   = {Superstatistics: Theory and Applications},
  journal = {Continuum Mechanics and Thermodynamics},
  volume  = {16},
  number  = {3},
  pages   = {293--304},
  year    = {2004},
  doi     = {10.1007/s00161-003-0145-1}
}

@article{Beck2009HEP,
  author  = {Beck, Christian},
  title   = {Generalised Information and Entropy Measures in Physics},
  journal = {Contemporary Physics},
  volume  = {50},
  number  = {4},
  pages   = {495--510},
  year    = {2009},
  doi     = {10.1080/00107510902823517}
}

@article{BeckCohen2003,
  author  = {Beck, Christian and Cohen, E. G. D.},
  title   = {Superstatistics},
  journal = {Physica A: Statistical Mechanics and its Applications},
  volume  = {322},
  pages   = {267--275},
  year    = {2003},
  doi     = {10.1016/S0378-4371(03)00019-0}
}

@article{Bernstein1929,
  author  = {Bernstein, Serge},
  title   = {Sur les fonctions absolument monotones},
  journal = {Acta Mathematica},
  volume  = {52},
  pages   = {1--66},
  year    = {1929},
  doi     = {10.1007/BF02592679}
}

@article{CeresoleFerraraMarrani2007,
  author  = {Ceresole, Anna and Ferrara, Sergio and Marrani, Alessio},
  title   = {4d/5d Correspondence for the Black Hole Potential and its Critical Points},
  journal = {Classical and Quantum Gravity},
  volume  = {24},
  number  = {22},
  pages   = {5651--5666},
  year    = {2007},
  doi     = {10.1088/0264-9381/24/22/023},
  eprint  = {0707.0964},
  archivePrefix = {arXiv},
  primaryClass  = {hep-th}
}

@article{FaulknerSperanza2024,
  author       = {Thomas Faulkner and Antony J. Speranza},
  title        = {Gravitational algebras and the generalized second law},
  journal      = {J. High Energy Phys.},
  year         = {2024},
  volume       = {2024},
  number       = {11},
  pages        = {099},
  doi          = {10.1007/JHEP11(2024)099},
  eprint       = {2405.00847},
  archivePrefix= {arXiv},
  primaryClass = {hep-th}
}

@article{Frigori2014NonextensiveLGT,
  title={Nonextensive lattice gauge theories: Algorithms and methods},
  author={Frigori, Rafael B},
  journal={Computer Physics Communications},
  volume={185},
  number={8},
  pages={2232--2239},
  year={2014},
  publisher={Elsevier},
  doi     = {10.1016/j.cpc.2014.04.016}
}

@book{GilOssikovski2022,
  author    = {Gil, Jos{\'e} Jorge and Ossikovski, Razvigor},
  title     = {Polarized Light and the Mueller Matrix Approach},
  edition   = {2},
  publisher = {CRC Press},
  address   = {Boca Raton},
  year      = {2022},
  doi       = {10.1201/9780367815578}
}

@article{Harlow2014PR,
  author       = {Daniel Harlow},
  title        = {Aspects of the {P}apadodimas--{R}aju proposal for the black hole interior},
  journal      = {J. High Energy Phys.},
  year         = {2014},
  volume       = {2014},
  number       = {11},
  pages        = {055},
  doi          = {10.1007/JHEP11(2014)055},
  eprint       = {1405.1995},
  archivePrefix= {arXiv},
  primaryClass = {hep-th}
}

@article{JensenSorceSperanza2023,
  author       = {Kristan Jensen and Jonathan Sorce and Antony J. Speranza},
  title        = {Generalized entropy for general subregions in quantum gravity},
  journal      = {J. High Energy Phys.},
  year         = {2023},
  volume       = {2023},
  number       = {12},
  pages        = {020},
  doi          = {10.1007/JHEP12(2023)020},
  eprint       = {2306.01837},
  archivePrefix= {arXiv},
  primaryClass = {hep-th}
}

@article{Page1976I,
  author  = {Page, Don N.},
  title   = {Particle Emission Rates from a Black Hole: Massless Particles from an Uncharged, Nonrotating Hole},
  journal = {Physical Review D},
  volume  = {13},
  number  = {2},
  pages   = {198--206},
  year    = {1976},
  doi     = {10.1103/PhysRevD.13.198}
}

@article{Page1976II,
  author  = {Page, Don N.},
  title   = {Particle Emission Rates from a Black Hole. II. Massless Particles from a Rotating Hole},
  journal = {Physical Review D},
  volume  = {14},
  number  = {12},
  pages   = {3260--3273},
  year    = {1976},
  doi     = {10.1103/PhysRevD.14.3260}
}

@article{PapadodimasRaju2013,
  author       = {Kyriakos Papadodimas and Suvrat Raju},
  title        = {An infalling observer in {AdS}/{CFT}},
  journal      = {J. High Energy Phys.},
  year         = {2013},
  volume       = {2013},
  number       = {10},
  pages        = {212},
  doi          = {10.1007/JHEP10(2013)212},
  eprint       = {1211.6767},
  archivePrefix= {arXiv},
  primaryClass = {hep-th}
}

@article{PapadodimasRaju2016,
  author       = {Kyriakos Papadodimas and Suvrat Raju},
  title        = {Remarks on the necessity and implications of state-dependence in the black hole interior},
  journal      = {Phys. Rev. D},
  year         = {2016},
  volume       = {93},
  pages        = {084049},
  doi          = {10.1103/PhysRevD.93.084049},
  eprint       = {1503.08825},
  archivePrefix= {arXiv},
  primaryClass = {hep-th}
}

@article{ParikhWilczek2000,
  author  = {Parikh, Maulik K. and Wilczek, Frank},
  title   = {Hawking Radiation as Tunneling},
  journal = {Physical Review Letters},
  volume  = {85},
  pages   = {5042--5045},
  year    = {2000},
  doi     = {10.1103/PhysRevLett.85.5042},
  eprint  = {hep-th/9907001},
  archivePrefix = {arXiv},
  primaryClass = {hep-th}
}

@article{Penington2020,
  author  = {Penington, Geoffrey},
  title   = {Entanglement Wedge Reconstruction and the Information Paradox},
  journal = {JHEP},
  volume  = {2020},
  number  = {9},
  pages   = {002},
  year    = {2020},
  doi     = {10.1007/JHEP09(2020)002}
}

@article{Shalizi2007,
  author  = {Shalizi, Cosma Rohilla},
  title   = {Maximum Likelihood Estimation for q-Exponential (Tsallis) Distributions},
  journal = {arXiv e-prints},
  eprint  = {math/0701854},
  archivePrefix = {arXiv},
  primaryClass  = {math.ST},
  year    = {2007}
}

@article{Tsallis1988,
  author  = {Tsallis, Constantino},
  title   = {Possible Generalization of Boltzmann--Gibbs Statistics},
  journal = {Journal of Statistical Physics},
  volume  = {52},
  pages   = {479--487},
  year    = {1988},
  doi     = {10.1007/BF01016429}
}

@book{Tsallis2009,
  author    = {Tsallis, Constantino},
  title     = {Introduction to Nonextensive Statistical Mechanics: Approaching a Complex World},
  publisher = {Springer},
  address   = {New York},
  year      = {2009},
  doi       = {10.1007/978-0-387-85359-8}
}

@article{Vaidya1943,
  author  = {Vaidya, P. C.},
  title   = {The External Field of a Radiating Star in General Relativity},
  journal = {Current Science},
  volume  = {12},
  pages   = {183--184},
  year    = {1943}
}

@book{Widder1941,
  author    = {Widder, David Vernon},
  title     = {The Laplace Transform},
  publisher = {Princeton University Press},
  address   = {Princeton},
  year      = {1941}
}

@article{Witten2022Crossed,
  author       = {Edward Witten},
  title        = {Gravity and the crossed product},
  journal      = {J. High Energy Phys.},
  year         = {2022},
  volume       = {2022},
  number       = {10},
  pages        = {008},
  doi          = {10.1007/JHEP10(2022)008},
  eprint       = {2112.12828},
  archivePrefix= {arXiv},
  primaryClass = {hep-th}
}

@article{ZaburdaevDenisovKlafter2015,
  author  = {Zaburdaev, V. and Denisov, S. and Klafter, J.},
  title   = {L{\'e}vy Walks},
  journal = {Reviews of Modern Physics},
  volume  = {87},
  number  = {2},
  pages   = {483--530},
  year    = {2015},
  doi     = {10.1103/RevModPhys.87.483}
}

@article{bonnorvaidya1970,
  author = {Bonnor, W. B. and Vaidya, P. C.},
  title = {Spherically symmetric radiation of charge in Einstein-Maxwell theory},
  journal = {Gen. Relativ. Gravit.},
  volume = {1},
  pages = {127--130},
  year = {1970},
  doi = {10.1007/BF00756891},
  url = {https://doi.org/10.1007/BF00756891}
}

@book{chandrasekhar1983,
  author = {Chandrasekhar, S.},
  title = {The Mathematical Theory of Black Holes},
  publisher = {Oxford University Press},
  address = {Oxford},
  year = {1983},
  isbn = {9780198503705}
}

@article{hod2016,
  author = {Hod, Shahar},
  title = {Entropy emission properties of near-extremal Reissner-Nordstr{\"o}m black holes},
  journal = {Phys. Rev. D},
  volume = {93},
  pages = {104027},
  year = {2016},
  doi = {10.1103/PhysRevD.93.104027},
  url = {https://doi.org/10.1103/PhysRevD.93.104027}
}

@article{lindquist1965,
  author = {Lindquist, R. W. and Schwartz, R. A. and Misner, C. W.},
  title = {Vaidya's Radiating Schwarzschild Metric},
  journal = {Phys. Rev.},
  volume = {137},
  pages = {B1364--B1368},
  year = {1965},
  doi = {10.1103/PhysRev.137.B1364},
  url = {https://doi.org/10.1103/PhysRev.137.B1364}
}

@article{moncrief1974a,
  author = {Moncrief, Vincent},
  title = {Stability of Reissner-Nordstr{\"o}m black holes},
  journal = {Phys. Rev. D},
  volume = {10},
  pages = {1057--1059},
  year = {1974},
  doi = {10.1103/PhysRevD.10.1057},
  url = {https://doi.org/10.1103/PhysRevD.10.1057}
}

@article{moncrief1975,
  author = {Moncrief, Vincent},
  title = {Gauge-Invariant Perturbations of Reissner-Nordstr{\"o}m Black Holes. II},
  journal = {Phys. Rev. D},
  volume = {12},
  pages = {1526--1537},
  year = {1975},
  doi = {10.1103/PhysRevD.12.1526},
  url = {https://doi.org/10.1103/PhysRevD.12.1526}
}

@article{ngampitipan2013,
  author = {Ngampitipan, Tritos and Boonserm, Petarpa},
  title = {Bounding the greybody factors for the Reissner-Nordstr{\"o}m black holes},
  journal = {J. Phys.: Conf. Ser.},
  volume = {435},
  pages = {012027},
  year = {2013},
  doi = {10.1088/1742-6596/435/1/012027},
  url = {https://doi.org/10.1088/1742-6596/435/1/012027}
}

@article{zerilli1974,
  author = {Zerilli, Frank J.},
  title = {Perturbation analysis for gravitational and electromagnetic radiation in a Reissner-Nordstr{\"o}m geometry},
  journal = {Phys. Rev. D},
  volume = {9},
  pages = {860--868},
  year = {1974},
  doi = {10.1103/PhysRevD.9.860},
  url = {https://doi.org/10.1103/PhysRevD.9.860}
}

@article{Segal1947,
  author = {Segal, Irving E.},
  title = {Postulates for General Quantum Mechanics},
  journal = {Annals of Mathematics},
  series = {Second Series},
  volume = {48},
  number = {4},
  pages = {930--948},
  year = {1947},
  doi = {10.2307/1969387},
  url = {https://doi.org/10.2307/1969387}
}

@article{HaagKastler1964,
  author = {Haag, Rudolf and Kastler, Daniel},
  title = {An Algebraic Approach to Quantum Field Theory},
  journal = {Journal of Mathematical Physics},
  volume = {5},
  number = {7},
  pages = {848--861},
  year = {1964},
  doi = {10.1063/1.1704187},
  url = {https://doi.org/10.1063/1.1704187}
}

@article{GunaydinPironRuegg1978,
  author = {G{\"u}naydin, Murat and Piron, Constantin and Ruegg, Henri},
  title = {Moufang plane and octonionic Quantum Mechanics},
  journal = {Communications in Mathematical Physics},
  volume = {61},
  number = {1},
  pages = {69--85},
  year = {1978},
  doi = {10.1007/BF01609468},
  url = {https://doi.org/10.1007/BF01609468}
}

@article{Albert1934,
  author  = {Albert, A. Adrian},
  title   = {On a Certain Algebra of Quantum Mechanics},
  journal = {Annals of Mathematics},
  volume  = {35},
  number  = {1},
  pages   = {65--73},
  year    = {1934},
  doi     = {10.2307/1968118}
}

@book{Feller1971,
  author    = {Feller, William},
  title     = {An Introduction to Probability Theory and Its Applications, Volume II},
  edition   = {2},
  publisher = {Wiley},
  address   = {New York},
  year      = {1971},
  isbn      = {9780471257097}
}

@book{Redner2001,
  author    = {Redner, Sidney},
  title     = {A Guide to First-Passage Processes},
  publisher = {Cambridge University Press},
  address   = {Cambridge},
  year      = {2001},
  doi       = {10.1017/CBO9780511606014},
  isbn      = {9780521652483}
}

@book{FarautKoranyi1994SymmetricCones,
  author    = {Faraut, Jacques and Kor{\'a}nyi, Adam},
  title     = {Analysis on Symmetric Cones},
  publisher = {Oxford University Press},
  address   = {Oxford},
  year      = {1994}
}

@article{Krutelevich2007Jordan,
  author  = {Krutelevich, Sergei},
  title   = {Jordan Algebras, Exceptional Groups, and {Bhargava} Composition},
  journal = {Journal of Algebra},
  volume  = {314},
  number  = {2},
  pages   = {924--977},
  year    = {2007},
  doi     = {10.1016/j.jalgebra.2007.02.060}
}

@article{Mathur2009SmallCorrections,
  author  = {Mathur, Samir D.},
  title   = {The Information Paradox: A Pedagogical Introduction},
  journal = {Classical and Quantum Gravity},
  volume  = {26},
  pages   = {224001},
  year    = {2009},
  doi     = {10.1088/0264-9381/26/22/224001},
  eprint  = {0909.1038},
  archivePrefix = {arXiv},
  primaryClass = {hep-th}
}

@article{AnastopoulosSavvidou2020MultiTime,
  author  = {Anastopoulos, Charis and Savvidou, Ntina},
  title   = {Multi-Time Measurements in Hawking Radiation: Information at Higher-Order Correlations},
  journal = {Classical and Quantum Gravity},
  volume  = {37},
  pages   = {025015},
  year    = {2020},
  doi     = {10.1088/1361-6382/ab5eb2},
  eprint  = {1909.00438},
  archivePrefix = {arXiv},
  primaryClass = {gr-qc}
}

@article{Giddings2012Nonviolent,
  author  = {Giddings, Steven B.},
  title   = {Nonviolent Nonlocality},
  journal = {Physical Review D},
  volume  = {88},
  pages   = {064023},
  year    = {2013},
  doi     = {10.1103/PhysRevD.88.064023},
  eprint  = {1211.7070},
  archivePrefix = {arXiv},
  primaryClass = {hep-th}
}

@article{HawkingPerryStrominger2016SoftHair,
  author  = {Hawking, Stephen W. and Perry, Malcolm J. and Strominger, Andrew},
  title   = {Soft Hair on Black Holes},
  journal = {Physical Review Letters},
  volume  = {116},
  pages   = {231301},
  year    = {2016},
  doi     = {10.1103/PhysRevLett.116.231301},
  eprint  = {1601.00921},
  archivePrefix = {arXiv},
  primaryClass = {hep-th}
}

@article{Koecher1957Positivitaetsbereiche,
  author  = {Koecher, Max},
  title   = {Positivit{\"a}tsbereiche im \(R^n\)},
  journal = {American Journal of Mathematics},
  volume  = {79},
  number  = {3},
  pages   = {575--596},
  year    = {1957},
  doi     = {10.2307/2372563}
}

@article{Vinberg1963HomogeneousCones,
  author  = {Vinberg, E. B.},
  title   = {The theory of convex homogeneous cones},
  journal = {Transactions of the Moscow Mathematical Society},
  volume  = {12},
  pages   = {340--403},
  year    = {1963},
  note    = {English translation of Trudy Moskov. Mat. Obshch. 12, 303--358}
}

@article{BellissardIochum1978,
  author  = {Bellissard, Jean and Iochum, B.},
  title   = {Homogeneous self dual cones versus {Jordan} algebras. {The} theory revisited},
  journal = {Annales de l'Institut Fourier},
  volume  = {28},
  number  = {1},
  pages   = {27--67},
  year    = {1978},
  doi     = {10.5802/aif.680}
}

@article{Lomax1954,
  author  = {Lomax, K. S.},
  title   = {Business Failures: Another Example of the Analysis of Failure Data},
  journal = {Journal of the American Statistical Association},
  volume  = {49},
  number  = {268},
  pages   = {847--852},
  year    = {1954},
  doi     = {10.1080/01621459.1954.10501239}
}

\appendix
\small

\section{Albertian preliminaries and sourced benchmark-canonical kernel}
\label{app:kernel}

This appendix first fixes the Albertian conventions used throughout the paper and then collects the sourced benchmark derivations needed to turn the near-horizon structure into the benchmark kernel used in the reduced-channel problem.

\subsection{Albertian AQM preliminaries}
\label{app:preliminaries}

The Albert algebra is the Jordan algebra of $3\times 3$ Hermitian octonionic matrices,
\begin{equation}
X=
\begin{pmatrix}
\xi_1 & x_3 & \bar x_2\\
\bar x_3 & \xi_2 & x_1\\
x_2 & \bar x_1 & \xi_3
\end{pmatrix},
\qquad
\xi_i\in\R,
\quad x_i\in\Oa,
\label{eq:AlbertMatrix}
\end{equation}
with Jordan product
\begin{equation}
X\circ Y=\frac12(XY+YX).
\label{eq:JordanProduct}
\end{equation}
It is simple, formally real, and exceptional \cite{Jordan1934,Albert1934,McCrimmon2004,GunaydinPironRuegg1978}. Its cubic norm is
\begin{equation}
N(X)=\xi_1\xi_2\xi_3-\sum_i \xi_i |x_i|^2+2\,\mathrm{Re}(x_1x_2x_3),
\label{eq:AlbertNorm}
\end{equation}
and its quadratic adjoint $X^\#$ is defined by
\begin{equation}
X\circ X^\#=N(X)\,1,
\qquad
(X^\#)^\#=N(X)\,X.
\label{eq:AlbertAdjoint}
\end{equation}
In the notation used later in the paper, the cubic invariant is also written as
\begin{equation}
I_3=N.
\label{eq:I3equalsN}
\end{equation}
Primitive idempotents form the Cayley plane,
\begin{equation}
\Oa P^2\cong F_4/\mathrm{Spin}(9),
\label{eq:appendixCayley}
\end{equation}
and the derivation algebra is
\begin{equation}
\mathrm{Der}(J_3(\Oa))\cong \mathfrak f_4.
\label{eq:appendixf4}
\end{equation}
The Albertian state space is the compact convex set of normalized positive linear functionals,
\begin{equation}
\Omega(J_3(\Oa))=
\bigl\{\omega:J_3(\Oa)\to\R\ \big|\ \omega(X\circ X)\ge 0,\ \omega(1)=1\bigr\}.
\label{eq:AlbertStates}
\end{equation}
The quartic invariant $I_4$ used later in the paper belongs to the Freudenthal envelope of $J_3(\Oa)$,
\begin{equation}
\bigl(\Omega(J_3(\Oa)),\,I_3=N,\,\FTS(J_3(\Oa)),\,I_4\bigr),
\label{eq:appendixPackage}
\end{equation}
and enters explicitly only after the $5$D$\to4$D electric descent. Thus $N$ and $I_3$ denote the same cubic Albertian invariant, while $I_4$ is the associated quartic Freudenthal invariant used in the four-dimensional slice.

\noindent\textit{Symmetric-cone and symbol viewpoint.}
The ordered sector used in the main text is a retained real symbol sector, not the full Hilbert-space operator algebra of an ordinary geometric quantization. On the electric Freudenthal slice \(x_e=(\alpha_{\rm FTS},A;0,0)\), the retained block is \(A\in J_3(\Oa)\), and \(I_4(x_e)=-4\alpha_{\rm FTS}N(A)\). The admissible BPS orientation selects the exceptional positive cone; the Koecher--Vinberg reconstruction of that cone gives the Albert product \cite{FarautKoranyi1994SymmetricCones,Koecher1957Positivitaetsbereiche,Vinberg1963HomogeneousCones,BellissardIochum1978,Krutelevich2007Jordan}. This paragraph fixes the notation used below; the scope of the retained-symbol claim is established in Sec.~\ref{sec:Albertian}.

For later use in the five-dimensional attractor equations, we write the attractor polarization as
\begin{equation}
X_*=h^I e_I\in J_3(\Oa),
\label{eq:appendixXstar}
\end{equation}
where $e_I$ is a Jordan basis (or Jordan frame basis), $h^I$ are the very-special scalar coordinates, $h_I$ are their cubic duals, $q_I$ are the electric charge components in that basis, and $Z_*$ is the attractor value of the BPS central charge. The normalization
\begin{equation}
N(X_*)=1
\label{eq:appendixNXstar}
\end{equation}
means that $X_*$ lies on the unit cubic slice of the Albert cone. Here $\alpha_{\mathrm{FTS}}$ denotes the Freudenthal Triple System normalization constant used in the electric slice of the four-dimensional descent.

\subsection{Attractor anchor, ordinary slice, and source-fixed electric descent}
\label{app:attractor}

For a five-dimensional BPS black hole with charge element $Q\in J_3(\Oa)$, the attractor equations may be written in very-special form as
\begin{equation}
q_I=Z_* h_I,
\label{eq:app_qI}
\end{equation}
where $q_I$ are the electric charge components, $h^I$ are the very-special scalar coordinates, $h_I$ are their cubic duals, and $Z_*$ is the attractor value of the BPS central charge. Writing the attractor polarization as
\begin{equation}
X_*=h^I e_I\in J_3(\Oa),
\qquad
N(X_*)=1,
\label{eq:app_Xstar_basis}
\end{equation}
with $e_I$ a Jordan basis, one obtains the equivalent base-free relation
\begin{equation}
Q=Z_* X_*^\#.
\label{eq:app_basefree}
\end{equation}
Applying the adjoint again and using Eq.~\eqref{eq:AlbertAdjoint} gives
\begin{equation}
Q^\#=Z_*^{\,2} X_*,
\qquad
N(Q)=Z_*^{\,3},
\label{eq:app_Qsharp}
\end{equation}
hence
\begin{equation}
Z_*=N(Q)^{1/3},
\qquad
X_*(Q)=\frac{Q^\#}{N(Q)^{2/3}}.
\label{eq:app_exact_attractor}
\end{equation}

On the charge-ray-preserving sourced class,
\begin{equation}
Q(v)=s(v)Q_0,
\label{eq:app_charge_ray}
\end{equation}
one has
\begin{equation}
(s(v)Q_0)^\#=s(v)^2Q_0^\#,
\qquad
N(s(v)Q_0)=s(v)^3N(Q_0).
\label{eq:app_ray_homogeneity}
\end{equation}
Hence
\begin{align}
X_*(Q(v))
&=
\frac{(s(v)Q_0)^\#}{N(s(v)Q_0)^{2/3}}
=
\frac{s(v)^2Q_0^\#}{(s(v)^3N(Q_0))^{2/3}}
\nonumber\\
&=
\frac{Q_0^\#}{N(Q_0)^{2/3}}
=
X_*(Q_0),
\label{eq:app_ray_freeze}
\end{align}
which proves Eq.~\eqref{eq:survivalid}. The attractor-adapted horizon frame therefore remains frozen on the exact charge-ray-preserving class.

For the canonical primitive idempotent \(e=\mathrm{diag}(1,0,0)\), Peirce theory gives \(J_3(\Oa)=\Atwo\oplus\Aone\oplus\Azero\), with \(\Atwo=\R e\). The explicit matrix realization and multiplication rules are stated in Sec.~\ref{sec:Peirce}; here we only record the support used later. Choosing a unit imaginary octonion \(u\) and the associated associative copy \(\Cu\subset\Oa\) selects the ordinary detector block
\begin{equation}
H_3(\Cu)\hookrightarrow J_3(\Oa),
\qquad
\Qu\cong H_2(\C),
\qquad
\Azero=\Qu\oplus\Ku.
\label{eq:app_Cslice}
\end{equation}
Thus \(Q_u\) is the \(\Cu\)-valued Hermitian lower block inside \(A_0(e)\), while \(K_u\) is the orthogonal exceptional remainder. This is the fixed ordinary support used in the channel reduction.

For the electric descent, start from Eqs.~\eqref{eq:KKmetric}--\eqref{eq:KKgauge} and impose the electric truncation
\begin{equation}
\zeta^I=0,
\qquad J_0^H=0,
\qquad
x_e=
\begin{pmatrix}
\alpha_{\mathrm{FTS}} & A\\
0 & 0
\end{pmatrix}.
\end{equation}
Then the quartic Freudenthal invariant reduces to
\begin{equation}
I_4(x_e)=-4\alpha_{\mathrm{FTS}}N(A),
\end{equation}
so the charge scale used on the reference background is
\begin{equation}
Q_0=|I_4(x_e)|^{1/4}.
\end{equation}
The neutral-source background corresponds to $Q(v)=Q_0$ and $J_I^{H(5)N}=0$, so the source only drains the mass and
\begin{equation}
\eexc(v)=M(v)-M_{\mathrm{BPS}}(Q_0).
\end{equation}

\subsection{Projected blocks and WKB barrier factorization}
\label{app:projectedblocks}

Let $L(v,r)$ denote the linearized sourced generator in the attractor-adapted frame.
Choose active internal bases $\{k_a\}\subset K_{\mathrm{act}}$, $\{a_\alpha\}\subset A_{1,\mathrm{act}}$, and $\{q_i\}\subset Q_{\mathrm{act}}$, and write the active modes as
\begin{equation}
\begin{aligned}
|K,a\rangle&=k_a\otimes \phi_K(r;v),\\
|1,\alpha\rangle&=a_\alpha\otimes \phi_1(r;v),\\
|Q,i\rangle&=q_i\otimes \phi_Q(r;v).
\end{aligned}
\label{eq:E54}
\end{equation}
The tensor notation in Eq.~\eqref{eq:E54} is only the standard separation of the active finite internal label from the radial WKB profile. It does not introduce the global Hilbert-space tensor product whose availability is questioned in the AMPS factorization argument, nor a subsystem factorization of the retained Albertian horizon algebra.The opening block matrix elements are
\begin{equation}
(D_{1K})_{\alpha a}(v)=
\int \dd r\,
\phi_1(r;v)^*\,
\langle a_\alpha|L(v,r)|k_a\rangle\,
\phi_K(r;v).
\label{eq:E55}
\end{equation}
On the neutral-source background we impose two structural assumptions, both standard in a leading WKB barrier problem.

\begin{hypothesis}[Radial--internal separation in the active sourced sector]
After fixing the frozen attractor-adapted frame and restricting to the active pair-polarized support, the linearized operator separates into an internal matrix element times a radial scalar profile at leading WKB order.
\end{hypothesis}

\begin{hypothesis}[Single-barrier WKB regime]
The projected radial equation has a classically forbidden interval bounded by two turning points and admits standard WKB matching.
\end{hypothesis}

Under the first hypothesis,
\begin{equation}
\langle a_\alpha|L(v,r)|k_a\rangle=(M_{1K})_{\alpha a}\,\chi_{1K}(r;v),
\label{eq:E56}
\end{equation}
so Eq.~\eqref{eq:E55} becomes
\begin{equation}
(D_{1K})_{\alpha a}(v)=
(M_{1K})_{\alpha a}
\int \dd r\,
\phi_1(r;v)^*\chi_{1K}(r;v)\phi_K(r;v).
\label{eq:E57}
\end{equation}
This separation is the leading adiabatic WKB analogue of the usual
channel factorization in radial barrier problems: the finite active
Peirce labels carry the internal matrix element, while the RN radial
profile supplies the scalar tunnelling factor. If the internal frame
varies across the barrier, or if several coupled radial channels are
retained, the scalar factor is replaced by a matrix WKB transport
operator.

Define the scalar overlap
\begin{equation}
T_{1K}(v):=\int \dd r\,\phi_1(r;v)^*\chi_{1K}(r;v)\phi_K(r;v).
\label{eq:E58}
\end{equation}
Then
\begin{equation}
D_{1K}(v)=M_{1K}\,T_{1K}(v).
\label{eq:E59}
\end{equation}
The same argument yields
\begin{equation}
D_{Q1}(v)=M_{Q1}\,T_{Q1}(v).
\label{eq:E60}
\end{equation}

\paragraph{WKB barrier evaluation.}
\label{app:wkb}
Assume the radial equation in the forbidden region takes the scalar form
\begin{equation}
\bigl[-\partial_r^2+V_{\mathrm{eff}}(r;v)\bigr]\phi(r;v)=0,
\label{eq:E61}
\end{equation}
with turning points $r_-(v)<r_+(v)$ and positive Euclidean momentum
\begin{equation}
\kappa(r;v)=\sqrt{V_{\mathrm{eff}}(r;v)}.
\label{eq:E62}
\end{equation}
The evanescent WKB solutions are
\begin{align}
\phi_K(r;v)&\sim
\frac{C_K(v)}{\sqrt{\kappa(r;v)}}
\exp\!\left[-\int_{r_-(v)}^r \kappa(\rho;v)\,\dd\rho\right],
\label{eq:E63}
\\
\phi_1(r;v)&\sim
\frac{C_1(v)}{\sqrt{\kappa(r;v)}}
\exp\!\left[-\int_r^{r_+(v)} \kappa(\rho;v)\,\dd\rho\right].
\label{eq:E64}
\end{align}
Multiplying these expressions gives
\begin{equation}
\begin{aligned}
\phi_1(r;v)^*\phi_K(r;v)
&\sim
\frac{C_1(v)C_K(v)}{\kappa(r;v)}\\
&\quad\times\exp\!\left[-\int_{r_-(v)}^{r_+(v)} \kappa(\rho;v)\,\dd\rho\right].
\end{aligned}
\label{eq:E65}
\end{equation}
The exponent is independent of $r$. Define
\begin{equation}
S_{1K}(v):=\int_{r_-(v)}^{r_+(v)} \kappa(\rho;v)\,\dd\rho.
\label{eq:E67}
\end{equation}
Then Eq.~\eqref{eq:E58} becomes
\begin{equation}
T_{1K}(v)\sim e^{-S_{1K}(v)}
\int_{r_-(v)}^{r_+(v)}
\frac{C_1(v)C_K(v)}{\kappa(r;v)}\chi_{1K}(r;v)\,\dd r.
\label{eq:E66}
\end{equation}
Absorbing the finite integral into a redefined internal matrix element gives the factorized form
\begin{equation}
\begin{aligned}
D_{1K}(v)&=M_{1K}e^{-S_{1K}(v)},\\
D_{Q1}(v)&=M_{Q1}e^{-S_{Q1}(v)}.
\end{aligned}
\label{eq:factorizedD}
\end{equation}
This is the meaning of the notation used in the body.

\paragraph{Active-support matrices.}
\label{app:actsupport}

On the active support every state is parameterized by a single $z\in\Cu$. Write the active relay maps as
\begin{equation}
M_{1K}\bigl(\Psi_K(z)\bigr)=\Psi_1(z),
\qquad
M_{Q1}\bigl(\Psi_1(z)\bigr)=\Psi_Q(z).
\end{equation}
In the real basis $\{1,u\}$ of $\Cu$, this means
\begin{equation}
\begin{aligned}
M_{1K}&=\idtwo,\\
M_{Q1}&=\idtwo,\\
J&=
\begin{pmatrix}
0 & -1\\
1 & 0
\end{pmatrix},\\
J^2&=-\idtwo.
\end{aligned}
\label{eq:explicitM}
\end{equation}
An overall scale could be absorbed into $a(v)$ and $b(v)$, so Eq.~\eqref{eq:explicitM} is the adapted canonical choice. The matrices \(M_{1K}\) and \(M_{Q1}\) fix the canonical opening and closing identifications of the active support. The matrix \(J\) is not an additional opening or closing map; it is the complex-structure generator on \(\mathbb C_u\) and enters the intrinsic relay block \(D_{11}(v)=\omega(v)J\) and the propagator \(U_{11}\).

\subsection{Benchmark-canonical bridge data and sourced clock}
\label{app:bridgeclock}

For the neutral-source background, the metric in ingoing
Eddington--Finkelstein form is \cite{ParikhWilczek2000}
\begin{equation}
\dd s^2=-f(r;M,Q_0)\,\dd v^2+2\,\dd v\,\dd r+r^2\dd\Omega^2,
\label{eq:E83}
\end{equation}
with
\begin{equation}
f(r;M,Q_0)=1-\frac{2M}{r}+\frac{Q_0^2}{r^2},
\qquad M=Q_0+E.
\label{eq:E83b}
\end{equation}
Here the charge is held fixed and \(E=M-Q_0\) is the
excitation energy above the fixed-charge extremal endpoint. The
calculation supplies the scalar barrier factor entering the bridge
amplitudes; it is not a derivation of the full semiclassical stress
tensor.

For an outgoing radial null geodesic,
\begin{equation}
0=-f\,\dd v^2+2\,\dd v\,\dd r
\qquad\Longrightarrow\qquad
\dot r:=\frac{\dd r}{\dd v}=\frac{f(r;M,Q_0)}{2}.
\label{eq:E84}
\end{equation}
The tunneling action is
\begin{equation}
S=\int p_r\,\dd r.
\end{equation}
Using Hamilton's equation \(\dd p_r=\dd H/\dot r\) gives
\begin{equation}
\mathrm{Im}\,S
=\mathrm{Im}\int\!\!\int\frac{\dd H}{\dot r}\,\dd r.
\label{eq:E87}
\end{equation}
Backreaction is implemented in the Parikh--Wilczek shell variable
\(\epsilon\) by \(H=M-\epsilon\) and \(\dd H=-\dd\epsilon\). This
\(\epsilon\) is only the tunneling-energy variable; it is not the
relay frequency \(\omega(v)\) used in the reduced channel.

Near the outer horizon,
\begin{equation}
f(r;M,Q_0)\approx f'(r_+)(r-r_+),
\qquad
\dot r\approx \frac{f'(r_+)}{2}(r-r_+).
\label{eq:E90}
\end{equation}
With
\begin{equation}
\kappa_{\rm sg}=\frac{f'(r_+)}{2},
\qquad
T_H=\frac{\kappa_{\rm sg}}{2\pi}=\frac{f'(r_+)}{4\pi},
\end{equation}
the pole gives
\begin{equation}
\mathrm{Im}\int\frac{\dd r}{\dot r}=
\frac{\pi}{\kappa_{\rm sg}}=
\frac{1}{2T_H}.
\end{equation}
Therefore
\begin{equation}
\mathrm{Im}\,S
=\frac12\int\frac{\dd E'}{T_H(E')}.
\label{eq:E100}
\end{equation}
Using the first law at fixed charge,
\begin{equation}
\dd E'=T_H(E')\,\dd\SBH(E'),
\end{equation}
one finds
\begin{equation}
\mathrm{Im}\,S
=\frac12\bigl(\SBH(E)-\SBH(0)\bigr).
\end{equation}
Thus the source-fixed barrier action is Eq.~\eqref{eq:Sstar}, and the
scalar bridge amplitudes are
\begin{equation}
a(E)=b(E)=e^{-S_*(E)}.
\end{equation}
Here \(S_*(E)\) is the amplitude-level barrier action. The
corresponding tunneling probability would carry the square of this
amplitude, while the Volterra bridge uses the amplitude factor
\(e^{-S_*(E)}\).

Differentiating Eq.~\eqref{eq:SBH} yields the thermodynamic slice
scale
\begin{equation}
T_H(E)=\left(\frac{\partial\SBH}{\partial E}\right)^{-1}.
\end{equation}
The benchmark uses this same source-fixed slice scale as the intrinsic
relay frequency. Thus \(\omega(E)\) is not an additional independent
frequency; it is the canonical bridge choice
\begin{equation}
\omega(E)=T_H(E),
\end{equation}
used in \(D_{11}(v)=\omega(E_{\rm exc}(v))J\), equivalently
Eq.~\eqref{eq:omega}.

The near-extremal expansion used later follows from
\begin{equation}
r_+(E)=Q_0+\sqrt{2Q_0}\sqrt{E}+E+O(E^{3/2}),
\end{equation}
which gives
\begin{equation}
\SBH(E)-\SBH(0)
=2\sqrt2\,\pi Q_0^{3/2}\sqrt E+4\pi Q_0 E+O(E^{3/2}),
\end{equation}
so that
\begin{equation}
\begin{aligned}
a(E)&=e^{-\mu E-\nu\sqrt E+O(E^{3/2})},\\
\mu&=2\pi Q_0,\\
\nu&=\sqrt2\,\pi Q_0^{3/2}.
\end{aligned}
\label{eq:munu}
\end{equation}

\paragraph{Sourced evaporative clock.}
\label{app:sourcedclockdetail}

The neutral-source reference background is not only a geometric
background but also a dynamical clock. The sourced Einstein equation
relates the mass-loss law to the null flux. To obtain an explicit
source-fixed evaporation clock, we close the null-flux sector by the
minimal Hawking-temperature constitutive law compatible with the
neutral-source background. Concretely, one writes
\cite{Vaidya1943,Hawking1975,Page1993,Page1976I,Page1976II}
\begin{equation}
L(v):=-\dot M(v)=\sigma_{\mathrm{eff}}A_+(E)T_H(E)^4,
\label{eq:StefanLike}
\end{equation}
with
\begin{equation}
\begin{aligned}
A_+(E)&=4\pi r_+(E)^2,\\
T_H(E)&=\left(\frac{\partial\SBH}{\partial E}\right)^{-1}
      =\frac{\sqrt{E(E+2Q_0)}}{2\pi r_+(E)^2}.
\end{aligned}
\end{equation}
Here \(L(v)\) is the effective luminosity, or mass-loss rate, of the
source-fixed benchmark. The coefficient \(\sigma_{\rm eff}\) is an
effective Stefan coefficient for this closure; it absorbs the species,
greybody, and unit normalizations not resolved by the sourced clock
itself.

Since the neutral background keeps \(Q(v)=Q_0\) fixed, one has
\(\dot E_{\mathrm{exc}}(v)=\dot M(v)\) and therefore
\begin{equation}
\dot E_{\mathrm{exc}}(v)
=-\frac{\sigma_{\mathrm{eff}}}{4\pi^3}
\frac{[E(E+2Q_0)]^2}{r_+(E)^6}.
\label{eq:E_dot_dimful}
\end{equation}
Introduce the dimensionless excitation variable
\begin{equation}
e(v):=\frac{E_{\mathrm{exc}}(v)}{Q_0},
\end{equation}
so that \(E_{\mathrm{exc}}=Q_0e\) and
\[
r_+(E)=Q_0\bigl(1+e+\sqrt{e(e+2)}\bigr).
\]
Then Eq.~\eqref{eq:E_dot_dimful} becomes
\begin{equation}
\dot e(v)=-\kappa_{\rm evap} F(e),
\qquad
\kappa_{\rm evap}:=\frac{\sigma_{\mathrm{eff}}}{4\pi^3Q_0^3},
\end{equation}
with
\begin{equation}
F(e):=\frac{[e(e+2)]^2}{\bigl(1+e+\sqrt{e(e+2)}\bigr)^6}.
\end{equation}
Finally normalize time by the initial evaporative rate,
\begin{equation}
\hat v:=\Gamma_{\mathrm{evap}}(0)v,
\qquad
\Gamma_{\mathrm{evap}}(0)=\kappa_{\rm evap}\frac{F(e_0)}{e_0},
\qquad e_0=e(0).
\end{equation}
Then the source-fixed clock closes as
\begin{equation}
\boxed{\frac{de}{d\hat v}=-\frac{e_0}{F(e_0)}F(e).}
\label{eq:sourcedclock}
\end{equation}
Equivalently,
\begin{equation}
\hat v(e)=\frac{F(e_0)}{e_0}\int_e^{e_0}\frac{\dd\varepsilon}{F(\varepsilon)}.
\end{equation}
This is the source-fixed evaporative clock used throughout the
quantitative benchmark and in the hidden-sector closure below.

\paragraph{Kernel on the sourced background.}
\label{app:kernel_source}

Using Eqs.~\eqref{eq:explicitM}, \eqref{eq:abstar}, and
\eqref{eq:omega}, the projected blocks are
\begin{align}
D_{1K}(v)&=a\bigl(\eexc(v)\bigr)\,\idtwo,
\\
D_{Q1}(v)&=a\bigl(\eexc(v)\bigr)\,\idtwo,
\\
D_{11}(v)&=\omega\bigl(\eexc(v)\bigr)J.
\end{align}
Define
\begin{equation}
\Theta(v,s)=\int_s^v \omega\bigl(\eexc(u)\bigr)\,\dd u.
\end{equation}
Then
\begin{equation}
U_{11}(v,s)=\cos\Theta(v,s)\,\idtwo+\sin\Theta(v,s)J,
\end{equation}
and the kernel in this benchmark is
\begin{equation}
\begin{aligned}
K_{QK}(v,s)&=a\bigl(\eexc(s)\bigr)a\bigl(\eexc(v)\bigr)\\
&\quad\times
\bigl[\cos\Theta(v,s)\,\idtwo+\sin\Theta(v,s)J\bigr].
\end{aligned}
\label{eq:kernelmainApp}
\end{equation}
This is the explicit kernel used throughout the reduced-state analysis.

\section{Reduced qubit state, hidden-sector closure, and admissibility}
\label{app:qubit}

This appendix collects the technical support behind the reduced qubit state. The spectral gap, entropy, and readout formulas are kept in the main text; what remains here are the source-fixed hidden-sector closure and the reduced-channel positivity criterion.

\subsection{Closure of the hidden-sector history}
\label{app:hiddenclosure}

The hidden-sector history variable can be closed in this benchmark by eliminating the interface in the $K_u$ equation rather than only in the radiative equation.
Start from the block system
\begin{align}
\dot\Psi_K &= D_{KK}\Psi_K + D_{K1}\Psi_1 + D_{KQ}\Psi_Q, \\
\dot\Psi_1 &= D_{1K}\Psi_K + D_{11}\Psi_1 + D_{1Q}\Psi_Q, \\
\dot\Psi_Q &= D_{QK}\Psi_K + D_{Q1}\Psi_1 + D_{QQ}\Psi_Q.
\end{align}
Using the retarded interface solution,
\begin{equation}
\begin{aligned}
\Psi_1(v)=&\;U_{11}(v,0)\Psi_1(0)\\
&+\int_0^v U_{11}(v,s)\bigl[D_{1K}(s)\Psi_K(s)+D_{1Q}(s)\Psi_Q(s)\bigr] \, ds.
\end{aligned}
\end{equation}
one finds the hidden-sector Volterra equation
\begin{equation}
\begin{aligned}
\dot\Psi_K(v)= {}& D_{KK}(v)\Psi_K(v)+D_{KQ}(v)\Psi_Q(v)+J_K^{(0)}(v)\\
&+\int_0^v K_{KK}(v,s)\Psi_K(s)\,ds\\
&+\int_0^v K_{KQ}(v,s)\Psi_Q(s)\,ds.
\end{aligned}
\end{equation}
with
\begin{equation}
J_K^{(0)}(v):=D_{K1}(v)U_{11}(v,0)\Psi_1(0),
\end{equation}
and
\begin{equation}
\begin{aligned}
K_{KK}(v,s)&:=D_{K1}(v)U_{11}(v,s)D_{1K}(s),\\
K_{KQ}(v,s)&:=D_{K1}(v)U_{11}(v,s)D_{1Q}(s).
\end{aligned}
\end{equation}
For the one-way relay reduction one imposes
\begin{equation}
\Psi_Q(0)=0,
\qquad
\Psi_1(0)=0,
\qquad
D_{1Q}=0,
\qquad
D_{KQ}=0.
\end{equation}
This is a benchmark specialization, not the most general Albertian block reduction. Keeping $D_{1Q}$ would add a $Q$-self-memory kernel through the retarded interface solution, while keeping $D_{KQ}$ would feed the readout back into the hidden sector. Thus the Volterra mechanism is more general than the one-way kernel, but the closed pair-polarized expression studied in the main text is the source-fixed one-way active-support member.
We then choose the emission sign for this setup
\begin{equation}
D_{K1}(v)=-a\bigl(E_{\mathrm{exc}}(v)\bigr)\,\idtwo.
\end{equation}
Then the hidden-sector equation is
\begin{equation}
\begin{aligned}
\dot\Psi_K(v)= {}& D_{KK}(v)\Psi_K(v)\\
&-a(v)\int_0^v a(s)\,
\Bigl[\cos\Theta(v,s)\,\idtwo+\sin\Theta(v,s)\,J\Bigr]\\
&\qquad\times\Psi_K(s)\,ds.
\end{aligned}
\label{eq:hidden_volterra_app}
\end{equation}
Introduce the hidden propagator
\begin{equation}
\partial_vU_{KK}(v,s)=D_{KK}(v)U_{KK}(v,s),
\qquad U_{KK}(s,s)=\idtwo,
\end{equation}
and the rotating frame
\begin{equation}
R(v):=e^{\Theta(v,0)J}.
\end{equation}
In the co-moving variable
\begin{equation}
\widetilde\Xi_K(v):=R(v)^{-1}U_{KK}(0,v)\Psi_K(v),
\end{equation}
Eq.~\eqref{eq:hidden_volterra_app} reduces to
\begin{equation}
\frac{d}{dv}\widetilde\Xi_K(v)=-a(v)\int_0^v a(s)\widetilde\Xi_K(s)\,ds.
\label{eq:Xi_tilde_app}
\end{equation}
Set
\begin{equation}
Y(v):=\int_0^v a(s)\widetilde\Xi_K(s)\,ds.
\end{equation}
Then
\begin{equation}
\dot{\widetilde\Xi}_K(v)=-a(v)Y(v),
\qquad
\dot Y(v)=a(v)\widetilde\Xi_K(v).
\end{equation}
With the complexified variable
\begin{equation}
U(v):=\widetilde\Xi_K(v)+iY(v),
\end{equation}
one finds
\begin{equation}
\begin{aligned}
\dot U(v)&=i a(v)U(v),\\
U(v)&=e^{iA(v)}U(0),\\
A(v)&:=\int_0^v a(u)\,du.
\end{aligned}
\end{equation}
Since $Y(0)=0$ and $\widetilde\Xi_K(0)=\Xi_{K,0}$, it follows that
\begin{equation}
\widetilde\Xi_K(v)=\cos A(v)\,\Xi_{K,0},
\qquad
Y(v)=\sin A(v)\,\Xi_{K,0}.
\end{equation}
Therefore the hidden-sector history closes as
\begin{equation}
\Xi_K(v)=U_{KK}(v,0)\,e^{\Theta(v,0)J}\,\cos A(v)\,\Xi_{K,0}.
\label{eq:Xi_solution_app}
\end{equation}
Using the source-fixed clock of Eq.~\eqref{eq:sourcedclock}, the two source-fixed quadratures are
\begin{equation}
\begin{aligned}
A(\hat v)
&=\frac{F(e_0)}{e_0}\int_{e(\hat v)}^{e_0}\frac{a(\varepsilon)}{F(\varepsilon)}\,d\varepsilon,\\
\Theta(\hat v)
&=\frac{F(e_0)}{e_0}\int_{e(\hat v)}^{e_0}\frac{\omega(\varepsilon)}{F(\varepsilon)}\,d\varepsilon.
\end{aligned}
\label{eq:ATheta_quads}
\end{equation}
Thus $\Xi_K$ is explicit by quadrature once the sourced background trajectory is fixed.
In the source-fixed gauge $U_{KK}=\idtwo$, the projected hidden functions reduce to finite initial data times the two source-fixed phases. This benchmark choice does not assert the absence of autonomous hidden-sector evolution in general; it fixes the sourced closure in which that evolution does not introduce additional independent data beyond the two source-fixed quadratures and the finite initial datum $\Xi_{K,0}$.
For example,
\begin{equation}
X_I(v)=\cos A(v)\bigl[\cos\Theta(v)X_{I0}+\sin\Theta(v)X_{J0}\bigr],
\end{equation}
with analogous formulas for $(X_J,Y_I,Y_J,Z_I,Z_J)$. Substituting these into Eqs.~\eqref{eq:xv}--\eqref{eq:zv} yields the explicit reduced-state formulas
\begin{align}
x(v)
&=a(v)\sin A(v)\bigl[\cos\Theta(v)X_{I0}+\sin\Theta(v)X_{J0}\bigr], \\
y(v)
&=a(v)\sin A(v)\bigl[\cos\Theta(v)Y_{I0}+\sin\Theta(v)Y_{J0}\bigr], \\
z(v)
&=1-2r(v)+a(v)\sin A(v)\bigl[\cos\Theta(v)Z_{I0}+\sin\Theta(v)Z_{J0}\bigr].
\end{align}
The hidden sector is therefore determined by the sourced clock, two source-fixed quadratures, and a finite initial hidden datum rather than by an arbitrary history functional.

\subsection{Reduced-channel admissibility}
\label{app:admissibility}

For a qubit density matrix, positivity is equivalent to the Bloch bound
\begin{equation}
|n(v)|\le 1,
\qquad
n(v):=(x(v),y(v),z(v)).
\label{eq:blochpositivityapp}
\end{equation}
AQM already requires positive normalized states on the full observable algebra. The theorem below does not introduce a new axiom;
it rewrites that positivity on the reduced ordinary channel in source-fixed variables.
Since $\Tr\rho_Q(v)=1$ by construction, positivity is the nontrivial condition.
Using the closure formulas above, namely Eqs.~\eqref{eq:Xi_solution_app}, \eqref{eq:ATheta_quads}, and the explicit Bloch components in this benchmark, Eqs.~\eqref{eq:xv}--\eqref{eq:zv}, write
\begin{equation}
\Lambda_{\rm hid}(v):=a(v)\sin A(v),
\qquad
m(v):=1-2r(v),
\end{equation}
and define the two hidden-support vectors
\begin{equation}
b_I:=(X_{I0},Y_{I0},Z_{I0}),
\qquad
b_J:=(X_{J0},Y_{J0},Z_{J0}).
\end{equation}
Then
\begin{equation}
q(v):=\cos\Theta(v)\,b_I+\sin\Theta(v)\,b_J,
\end{equation}
so that the Bloch vector can be written compactly as
\begin{equation}
n(v)=m(v)\,\hat z+\Lambda_{\rm hid}(v)q(v).
\label{eq:nvqapp}
\end{equation}
Therefore the necessary-and-sufficient admissibility condition is
\begin{equation}
\rho_Q(v)\ge0
\iff
|m(v)\,\hat z+\Lambda_{\rm hid}(v)q(v)|\le1
\quad
\forall v.
\label{eq:generaladmissibilityapp}
\end{equation}
Expanding the norm gives the equivalent scalar condition
\begin{equation}
\Lambda_{\rm hid}(v)^2\bigl(q_x(v)^2+q_y(v)^2\bigr)+\bigl(m(v)+\Lambda_{\rm hid}(v)q_z(v)\bigr)^2\le1
\quad \forall v.
\label{eq:expandedadmissibilityapp}
\end{equation}
This is the positivity theorem.
It is pointwise in the source-fixed clock and it does not require any globally constant hidden-sector amplitude.
The physical content of Eq.~\eqref{eq:expandedadmissibilityapp} is immediate:
\begin{equation}
\begin{aligned}
\delta r(v)&=-\frac{\Lambda_{\rm hid}(v)q_z(v)}{2},\\
|c(v)|&=\frac{|\Lambda_{\rm hid}(v)|}{2}
\sqrt{q_x(v)^2+q_y(v)^2}.
\end{aligned}
\label{eq:deltarcfromqapp}
\end{equation}
Thus the longitudinal hidden component controls repopulation, whereas the transverse hidden components control coherence. For the transport observables displayed in Fig.~\ref{fig:rates}, one evaluates the fixed-support photonic representative of Appendix~\ref{app:fixedsupport} and applies the criterion \eqref{eq:expandedadmissibilityapp}. On that representative \(\delta r(v)=0\), while \(|c(v)|\) is fixed by the accumulated-history formula \eqref{eq:photonic-c}; positivity is checked pointwise on the reduced channel.

\section{Photonic realization, polarization dictionary, and channel rates}
\label{app:photonic}

The neutral-source geometry remains an effective Einstein--Maxwell-plus-null-fluid background rather than a pure-photon spacetime. The photonic reduction is therefore narrower: one solves the gauge-invariant spin-$1$ problem on adiabatic RN slices, identifies the fixed transverse support that realizes the ordinary channel, and then translates the density-matrix representative of the restricted readout state into detector-level polarization and rate observables.

\subsection{Gauge-invariant spin-1 problem on adiabatic RN slices}
\label{app:spin1}

Once the geometry is specified in this way, the correct photonic calculation is the gauge-invariant spin-1 problem on instantaneous RN slices with slowly varying $M(u)$.
Define the background variation time
\begin{equation}
\tau_{\rm bg}(v):=\min\left\{\left|\frac{M}{\dot M}\right|,\left|\frac{T_H}{\dot T_H}\right|,\left|\frac{r_+}{\dot r_+}\right|\right\}.
\label{eq:taubg}
\end{equation}
The instantaneous greybody description used below assumes $\omega\tau_{\rm bg}(v)\gg1$ for the modes retained in the adiabatic readout. When this fails, the fixed-support spectral forecast must be replaced by a genuinely time-dependent scattering problem.
In the Moncrief formulation one arrives at the same physical master equation \cite{moncrief1974a,moncrief1975,zerilli1974,chandrasekhar1983},
\begin{equation}
\begin{aligned}
\frac{d^2\Psi_{\ell\omega}}{dr_*^2}+\left[\omega^2-V_\ell^{(1)}(r)\right]\Psi_{\ell\omega}&=0,\\
V_\ell^{(1)}(r)&=f(r)\frac{\ell(\ell+1)}{r^2},\qquad \ell\ge 1.
\end{aligned}
\label{eq:master}
\end{equation}
The tortoise coordinate is
\begin{equation}
r_*(r)=r+\frac{r_+^2}{r_+-r_-}\ln(r-r_+)-\frac{r_-^2}{r_+-r_-}\ln(r-r_-)+\mathrm{const}.
\label{eq:tortoise}
\end{equation}
The photonic problem is therefore radial only after harmonic decomposition.
Any spin-1 WKB or numerical greybody calculation must be applied mode by mode to Eq.~\eqref{eq:master};
it is not obtained by treating the sourced metric itself as a one-dimensional photon equation.
At low frequency the dominant spin-1 mode has the quartic greybody scaling
\begin{equation}
\Gamma^{(1)}_1(\omega;E)=a_1(E)(\omega r_+)^4+O(\omega^6),
\label{eq:quartic}
\end{equation}
so the spectral energy flux is
\begin{equation}
\frac{d^2E_\gamma}{du\,d\omega}=
\sum_{\ell=1}^{\infty}\frac{2(2\ell+1)}{2\pi}\,
\frac{\omega\,\Gamma^{(1)}_\ell(\omega;E)}{e^{\omega/T_H}-1}.
\label{eq:spectral}
\end{equation}
Keeping only the dominant quartic term gives
\begin{align}
L_\gamma(E)
&=\int_0^\infty d\omega\,\frac{3}{\pi}
\frac{\omega\,a_1(E)(\omega r_+)^4}{e^{\omega/T_H}-1}+\cdots \nonumber\\
&=\frac{3a_1(E)r_+^4T_H^6}{\pi}
\int_0^\infty dx\,\frac{x^5}{e^x-1}+\cdots \nonumber\\
&=\frac{8\pi^5}{21}a_1(E)\,r_+^4T_H^6+\cdots.
\label{eq:T6law}
\end{align}
The corresponding photon number flux removes one power of \(\omega\). Here \(\dot N_\gamma(E)\) denotes the number of photons emitted per unit time on the adiabatic slice with excitation energy \(E\):
\begin{equation}
\begin{aligned}
\dot N_\gamma(E)&=\sum_{\ell,m}\int_0^\infty\frac{d\omega}{2\pi}\,
\frac{\Gamma^{(1)}_\ell(\omega;E)}{e^{\omega/T_H(E)}-1}\\
&=\frac{36\zeta(5)}{\pi}a_1(E)r_+^4(E)T_H^5(E)+\cdots .
\end{aligned}
\label{eq:NgammadotApp}
\end{equation}
In Eq.~\eqref{eq:NgammadotApp}, \(\Gamma_\ell^{(1)}\) is the spin-$1$ greybody transmission coefficient and \(a_1(E)\) is the leading low-frequency electromagnetic coefficient; \(T_H\) and \(r_+\) are the RN quantities defined above. Thus the near-extremal pure-photon energy law is \(L_\gamma\propto r_+^4T_H^6\), while the number-flux envelope relevant for a two-window polarimetric forecast is \(\dot N_\gamma\propto r_+^4T_H^5\).
The frozen sourced background clock therefore remains an effective sourced clock rather than the luminosity law of the pure Maxwell sector alone \cite{hod2016,ngampitipan2013}.

\subsection{Transverse structure, fixed-support selection, and history kernel}
\label{app:fixedsupport}

The physical Maxwell sector on each RN slice has only the two transverse helicities $\lambda=\pm$.
In the photonic realization one identifies this two-dimensional support with the active support $\Cu=\mathrm{span}_{\mathbb R}\{1,u\}$, and the relay propagator acts on that support as
\begin{equation}
\begin{aligned}
R(v,s):=U_{11}(v,s)&=\cos\Theta(v,s)\,\id+\sin\Theta(v,s)\,J,\\
J^2&=-\id.
\end{aligned}
\label{eq:Rvs}
\end{equation}
Hence every physically admissible emitted-history two-point kernel in this setting has the form
\begin{equation}
g^{\ell\omega m}_{\lambda\lambda'}(s,s')=
A_{\ell\omega m}(s,s')\,\delta_{\lambda\lambda'}+
B_{\ell\omega m}(s,s')\,J_{\lambda\lambda'}.
\label{eq:ABform}
\end{equation}
There is no physical longitudinal sector. In particular, the natural photonic representative obeys
\begin{equation}
\delta r(v)=0.
\label{eq:photonic-drzero}
\end{equation}

The benchmark uses the fixed-support emitted-history kernel selected by the sourced channel. On the fixed transverse support this kernel is
\begin{equation}
\begin{aligned}
g_{\ell\omega m;\lambda\lambda'}(s,s')
&=\frac12\sqrt{n_{\ell\omega}(s)n_{\ell\omega}(s')}
\,e^{i[\delta_{\ell\omega}(s)-\delta_{\ell\omega}(s')]}\\
&\quad\times
\Bigl[\cos\Theta(s,s')\,\mathbf 1
+\sin\Theta(s,s')\,J\Bigr]_{\lambda\lambda'} .
\end{aligned}
\label{eq:gcol_app}
\end{equation}
At equal times it reproduces the unbiased transverse Hawking occupations, while away from equal times it keeps the relay phase that carries the emitted-history coherence. The kernel is positive on the selected support because it has the factorized form
\begin{equation}
\begin{aligned}
h_{\ell\omega m}(s)&:=\sqrt{n_{\ell\omega}(s)}e^{-i\delta_{\ell\omega}(s)}R(s,0),\\
g^{\ell\omega m}(s,s')&=\frac12 h_{\ell\omega m}(s)h_{\ell\omega m}(s')^\dagger .
\end{aligned}
\label{eq:gfactor}
\end{equation}
The accumulated off-diagonal channel coherence is the complex scalar \(c(v)\) appearing in the reduced readout matrix \(\rho_Q(v)\). In the fixed-support electromagnetic realization it is computed from the mode-resolved two-time history as
\begin{subequations}\label{eq:photonic-c-pair}
\begin{align}
c(v)
&= -\frac{1}{2\mathcal N(v)}
\sum_{\ell,m}\int_0^\infty d\omega
\int_0^v ds\int_0^v ds'\,
\sqrt{n_{\ell\omega}(s)n_{\ell\omega}(s')}
\nonumber\\
&\quad\times e^{i[\delta_{\ell\omega}(s)-\delta_{\ell\omega}(s')]}\sin\Theta(s,s'),
\label{eq:photonic-c}
\\
2\mathcal N(v)
&=
2\sum_{\ell,m}\int_0^\infty d\omega
\int_0^v ds\int_0^v ds'\,
\sqrt{n_{\ell\omega}(s)n_{\ell\omega}(s')}.
\label{eq:photonic-2N}
\end{align}
\end{subequations}
Thus \(c(v)\) is not a new tensorial object: it is the complex off-diagonal coefficient of the \(2\times2\) readout representative. Its modulus is the scalar quantity entering \(\Delta(v)\), \(S_A(v)\), and the coherence-transport rate. On the natural transverse representative, Eq.~\eqref{eq:photonic-drzero} holds, so the photonic specialization leaves the diagonal occupation at \(r(v)\) and modifies the density-matrix representative through the off-diagonal sector \(c(v)\). More general positive history kernels describe different emitted-history ensembles; the source-fixed benchmark used here is the fixed-support representative above.

\paragraph{Spectral-overlap forecast.}
For the quantitative photonic forecast one contracts the mode-resolved kernel to the positive spectral overlap
\begin{equation}
 W(s,s')=\sum_{\ell,m}\int_0^\infty d\omega\,
 \sqrt{n_{\ell\omega}(s)n_{\ell\omega}(s')} .
\label{eq:Wapp}
\end{equation}
In the low-frequency spin-$1$ approximation this becomes
\begin{equation}
\begin{aligned}
W(s,s')&\propto r_+^2(s)r_+^2(s')
\int_0^\infty d\omega\\
&\quad\times
\frac{\omega^4}
{\sqrt{(e^{\omega/T_H(s)}-1)(e^{\omega/T_H(s')}-1)}} .
\end{aligned}
\label{eq:WlowfreqApp}
\end{equation}
The mode-integrated separable envelope \(g_\gamma(s)=[\dot N_\gamma(s)/\dot N_\gamma(0)]^{1/2}\), with \(\dot N_\gamma\) given by Eq.~\eqref{eq:NgammadotApp}, is useful for scale estimates. The quantitative plots evaluate the nonseparable overlap \(W(s,s')\) of Eq.~\eqref{eq:WlowfreqApp}. The phase-sensitive cross-helicity observable obeys
\begin{equation}
C_{+-}(s,s')\propto W(s,s')\exp\{2i[\Theta(s)-\Theta(s')]\},
\label{eq:CpmApp}
\end{equation}
up to the mode-resolved scattering phase already displayed in Eq.~\eqref{eq:gcol_app}.  Thus the falsifiable fixed-support signature is not only an excess intensity covariance, but a polarimetric covariance phase-locked to the relay angle.

\subsection{Bloch--Stokes dictionary and fixed-support rates}
\label{app:blochstokes}

On the active support one has $Q_{\mathrm{act}}\cong\Cu\cong\R^2$, together with the fixed antisymmetric matrix $J$ of Eq.~\eqref{eq:fixedsupportmatrices}. The density-matrix representative
\begin{equation}
\rho_Q(v)=\frac12\bigl(\idtwo+x(v)\sigma_x+y(v)\sigma_y+z(v)\sigma_z\bigr)
\label{eq:IrhoQ}
\end{equation}
may therefore be read as a $2\times2$ coherency matrix for an effective two-component radiative mode. Introduce the effective Stokes components
\begin{equation}
\begin{aligned}
S_0(v)&:=1,\\
S_1(v)&:=x(v),\\
S_2(v)&:=y(v),\\
S_3(v)&:=z(v).
\end{aligned}
\label{eq:IStokes}
\end{equation}
Then Eq.~\eqref{eq:IrhoQ} becomes the standard coherency decomposition
\begin{equation}
\rho_Q(v)=\frac12
\begin{pmatrix}
S_0(v)+S_3(v) & S_1(v)-iS_2(v)\\[2pt]
S_1(v)+iS_2(v) & S_0(v)-S_3(v)
\end{pmatrix}.
\label{eq:Icoherency}
\end{equation}
The same state admits a natural ladder-operator form. Define
\begin{equation}
\begin{aligned}
\sigma_+&:=|1\rangle\langle 0|,\\
\sigma_-&:=|0\rangle\langle 1|,\\
n&:=|1\rangle\langle 1|=\sigma_+\sigma_- = \frac{\idtwo-\sigma_z}{2}.
\end{aligned}
\label{eq:Iladders}
\end{equation}
Then, using Eq.~\eqref{eq:rhoQ},
\begin{equation}
\begin{aligned}
\rho_Q(v)=&\;\bigl(1-r(v)-\delta r(v)\bigr)|0\rangle\langle0|\\
&+\bigl(r(v)+\delta r(v)\bigr)|1\rangle\langle1|\\
&+c(v)\sigma_-+c(v)^*\sigma_+.
\end{aligned}
\label{eq:IrhoLadder}
\end{equation}
Consequently,
\begin{equation}
\langle n\rangle_v=r(v)+\delta r(v),
\qquad
\langle\sigma_-\rangle_v=c(v),
\qquad
\langle\sigma_+\rangle_v=c(v)^*.
\label{eq:IladderExpect}
\end{equation}
Thus $\delta r(v)=\langle n\rangle_v-r(v)$ is the population shift relative to the leading flux variable, while $|c(v)|$ is the magnitude of the off-diagonal coherence sector.
Since
\begin{equation}
J=-i\sigma_y=\sigma_- - \sigma_+,
\label{eq:IJsigma}
\end{equation}
the bridge propagator is equally a pseudospin rotation,
\begin{equation}
U_{11}(v,s)=\cos\Theta(v,s)\,\idtwo+\sin\Theta(v,s)\,J=e^{-i\Theta(v,s)\sigma_y}.
\label{eq:IU11SU2}
\end{equation}
The effective degree of polarization is
\begin{equation}
\begin{aligned}
P(v)&:=\frac{\sqrt{S_1(v)^2+S_2(v)^2+S_3(v)^2}}{S_0(v)}\\
&=\sqrt{x(v)^2+y(v)^2+z(v)^2}.
\end{aligned}
\label{eq:IP}
\end{equation}
Using Eqs.~\eqref{eq:gap} and \eqref{eq:readout}, one finds immediately
\begin{equation}
P(v)=\Delta(v).
\label{eq:IPDelta}
\end{equation}
Hence the same entropy may be rewritten as the polarization-entropy identity
\begin{equation}
S_A(v)=H_2\!\left(\frac{1-P(v)}{2}\right).
\label{eq:ISAP}
\end{equation}
The same representative in this benchmark is thus simultaneously a qubit density matrix, a coherency matrix of an effective two-component radiative mode, and a pseudospin density matrix whose diagonal and off-diagonal sectors are resolved by Eq.~\eqref{eq:IladderExpect}.
This is the meaning of the vector-polarization reading invoked in Sec.~\ref{sec:state}.
Pure channel states in this setting correspond to the usual $SU(2)$-coherent limit of this effective two-component mode.
For background on coherency matrices, quantum Stokes variables, and polarization geometry, see Refs.~\cite{ParisRehacek2004,Alonso2023PolarizationGeometry,GilOssikovski2022}.

\paragraph{Rates from the sourced trajectory.}
\label{app:ratesclock}

The sourced background fixes the excitation history $\eexc(v)$ and the evaporative clock
\begin{equation}
\begin{aligned}
\Gamma_{\mathrm{evap}}(v)&=-\frac{\dd}{\dd v}\log \eexc(v),\\
\dot{\eexc}(v)&=-\Gamma_{\mathrm{evap}}(v)\,\eexc(v).
\end{aligned}
\label{eq:IRGamma}
\end{equation}
The energy-drain rate is therefore
\begin{equation}
\mathcal{R}_E(v):=-\dot{\eexc}(v)=\Gamma_{\mathrm{evap}}(v)\,\eexc(v).
\label{eq:IRE}
\end{equation}
Next, differentiate the barrier weight $a(E)=e^{-S_*(E)}$ along this background. Since
\begin{equation}
S_*'(E)=\frac12\,T_H(E)^{-1}=\frac{1}{2\,\omega(E)},
\label{eq:ISprimed}
\end{equation}
one obtains
\begin{align}
\frac{\dd}{\dd v}\log a\bigl(\eexc(v)\bigr)
&=-S_*'\bigl(\eexc(v)\bigr)\,\dot{\eexc}(v)
\nonumber\\
&=\frac{\Gamma_{\mathrm{evap}}(v)\,\eexc(v)}{2\,\omega\bigl(\eexc(v)\bigr)}.
\end{align}
Thus the entropic relay-opening rate is
\begin{equation}
\mathcal{R}_a(v):=\frac{\dd}{\dd v}\log a\bigl(\eexc(v)\bigr)
=\frac{\Gamma_{\mathrm{evap}}(v)\,\eexc(v)}{2\,\omega\bigl(\eexc(v)\bigr)}.
\label{eq:IRa}
\end{equation}
Because the matrix factor in Eq.~\eqref{eq:kernelmain} is a rotation, its operator norm is unity.
Therefore
\begin{equation}
\|K_{QK}(v,s)\|=a\bigl(\eexc(v)\bigr)a\bigl(\eexc(s)\bigr),
\label{eq:IKnorm}
\end{equation}
and the cumulative memory intensity becomes
\begin{equation}
\mathcal{M}(v):=\int_0^v \|K_{QK}(v,s)\|\,\dd s
=a\bigl(\eexc(v)\bigr)\int_0^v a\bigl(\eexc(s)\bigr)\,\dd s.
\label{eq:IM}
\end{equation}
Differentiating Eq.~\eqref{eq:IM} gives the closed evolution law
\begin{equation}
\dot{\mathcal{M}}(v)=\mathcal{R}_a(v)\,\mathcal{M}(v)+a\bigl(\eexc(v)\bigr)^2.
\label{eq:IMdot}
\end{equation}
A further local relay scale is obtained by combining the bridge weight and bridge frequency,
\begin{equation}
\mathcal{R}_{\mathrm{ch}}(v):=a\bigl(\eexc(v)\bigr)^2\,\omega\bigl(\eexc(v)\bigr).
\label{eq:IRch}
\end{equation}
Eqs.~\eqref{eq:IRE}--\eqref{eq:IRch} are intrinsic observables of the channel in this setting. They are not species-resolved Hawking emissivities.

On the natural transverse photonic representative, Eq.~\eqref{eq:photonic-drzero} and Eq.~\eqref{eq:ladderexpectmain} give \(\langle n\rangle_v=r(v)\). Since \(r(v)=1-\eexc(v)/\eexc(0)\), the diagonal occupation bookkeeping follows the sourced clock directly, while the fixed-support coherence remains off-diagonal in the channel basis.

\paragraph{Population, coherence, spectral-gap, and entropy rates.}
\label{app:ratesentropy}

From Eq.~\eqref{eq:rdef},
\begin{equation}
\dot r(v)=\Gamma_{\mathrm{evap}}(v)\,[1-r(v)].
\label{eq:IrDot}
\end{equation}
From Eq.~\eqref{eq:readout},
\begin{equation}
\dot{\delta r}(v)=-\dot r(v)-\frac12\dot z(v).
\label{eq:Ideltardot}
\end{equation}
Likewise,
\begin{equation}
\dot c(v)=\frac12\bigl(\dot x(v)-i\dot y(v)\bigr),
\label{eq:Icdot}
\end{equation}
and, whenever $|c(v)|\neq0$,
\begin{equation}
\frac{\dd}{\dd v}|c(v)|
=\frac{x(v)\dot x(v)+y(v)\dot y(v)}{4|c(v)|}.
\label{eq:Icohdot}
\end{equation}
The effective radiative occupation rate is therefore
\begin{equation}
\mathcal{R}_Q(v):=\frac{\dd}{\dd v}\bigl[r(v)+\delta r(v)\bigr]
=\dot r(v)+\dot{\delta r}(v),
\label{eq:IRQ}
\end{equation}
while the coherence-transport rate is
\begin{equation}
\mathcal{R}_{\mathrm{coh}}(v):=\frac{\dd}{\dd v}|c(v)|.
\label{eq:IRcoh}
\end{equation}
Now differentiate the spectral gap. Since
\begin{equation}
\Delta(v)=\sqrt{x(v)^2+y(v)^2+z(v)^2},
\label{eq:IDelta}
\end{equation}
one has, for $\Delta(v)\neq0$,
\begin{equation}
\dot\Delta(v)=\frac{x(v)\dot x(v)+y(v)\dot y(v)+z(v)\dot z(v)}{\Delta(v)}.
\label{eq:IDeltadot}
\end{equation}
Finally, differentiating Eq.~\eqref{eq:SAchannel} gives the reduced-channel entropy production rate
\begin{equation}
\mathcal{R}_S(v):=\dot S_A(v)
=-\frac12\log_2\!\left(\frac{1+\Delta(v)}{1-\Delta(v)}\right)\dot\Delta(v).
\label{eq:IRS}
\end{equation}
Equations~\eqref{eq:IRQ}, \eqref{eq:IRcoh}, \eqref{eq:IDeltadot}, and \eqref{eq:IRS} show that the construction predicts a genuine kinematics of population transfer, coherence transport, polarization-gap flow, and entropy production.

\section{Euclidean branch measures and endpoint canonization}
\label{app:branchappendix}

This appendix records the branch-side checks used in Sec.~\ref{sec:superstatistics}: the outgoing-history identification, the pushforward construction of the positive branch measure, the replica/Mellin tower carried by that measure, the regular-opening Tsallis/Lomax endpoint, the near-extremal shifted-L{\'e}vy endpoint, and the no-global-positive-transform argument that justifies the branch-admissible hard envelope.

\subsection{Outgoing-history identification on the sourced background}
\label{app:outgoinghistory}

Define the outgoing algebra up to time $v$ by
\begin{equation}
\mathfrak A_{\mathrm{out}}([0,v])
:=\Jor\Big\langle \mathcal O_{\mathrm{asymp}}(s):0\le s\le v\Big\rangle,
\end{equation}
where $\mathcal O_{\mathrm{asymp}}(s)$ denotes any asymptotic observable that has reached the selected readout sector by time $s$.
The proof is benchmark-sectorial.
Fix the readout sector on the neutral-source RN geometry to be the gauge-invariant transverse spin-1 sector carried by the two Maxwell helicities of the ordinary block $\Qu$ \cite{moncrief1974a,moncrief1975,zerilli1974,chandrasekhar1983}.
Readout completeness holds in this benchmark because every gauge-invariant asymptotic observable in that sector is generated by the ordinary channel described in Secs.~\ref{sec:Peirce} and \ref{sec:channel}.
No independent leakage holds on the same geometry because $J_I^{H(5)N}=0$ and the fixed-support photonic realization closes on the same two-helicity support, as established in Sec.~\ref{sec:state} and Appendix~\ref{app:fixedsupport}.

Every element of $\mathfrak A_{\mathrm{out}}([0,v])$ therefore belongs to the algebra generated by the projected ordinary readout and lies in $R_v^{\min}$.
Conversely, every generator appearing in Eq.~\eqref{eq:Rmin} is by construction an outgoing asymptotic observable of that same selected sector, so $R_v^{\min}\subseteq \mathfrak A_{\mathrm{out}}([0,v])$.
Hence
\begin{equation}
\mathfrak A_{\mathrm{out}}([0,v])=R_v^{\min}.
\end{equation}
Restricting the same background state $\varpi_v$ to the two equal algebras gives Eq.~\eqref{eq:SAout}. This proves Proposition~\ref{prop:AoutEqualsRmin}.

\subsection{Positive branch measure as a pushforward}
\label{app:branch_canonization}

Let $H_b$ be a branch-history domain and let $h=(E,E')\in H_b$ denote an ordered Euclidean transfer segment.  The source-fixed kernel of Sec.~\ref{sec:superstatistics} assigns to such histories the positive barrier-clock measure
\begin{equation}
\dd\nu_b(h)=W(E)W(E')\,\dd\nu_{\rm clock}(E,E'),
\label{eq:dnub_app}
\end{equation}
with $W(E)=e^{-S_*(E)}$.  A branch scale map
\begin{equation}
\lambda_b:H_b\to\mathbb R_+
\end{equation}
assigns the effective intensive transfer scale conjugate to the branch coordinate $y_b$.  The associated branch measure is the pushforward
\begin{equation}
\dd\mu_b(\lambda)=(\lambda_b)_*\dd\nu_b.
\label{eq:pushforward_app}
\end{equation}
Equivalently, when written with a Dirac measure,
\begin{equation}
\dd\mu_b(\lambda)=\int_{H_b}\dd\nu_b(h)\,
\delta\!\left(\lambda-\lambda_b(h)\right).
\label{eq:pushforward_delta_app}
\end{equation}
The branch transfer function is therefore
\begin{equation}
G_b(y_b)=\int_{H_b}e^{-\lambda_b(h)y_b}\,\dd\nu_b(h)
=\int_0^\infty e^{-\lambda y_b}\,\dd\mu_b(\lambda).
\label{eq:Gb_push_app}
\end{equation}
This is the positive canonical sum over branch histories, in the branchwise superstatistical sense of a positive Laplace superposition of local exponential weights \cite{BeckCohen2003,Beck2007,Bernstein1929,Widder1941}.  If $\dd\mu_b(\lambda)=e^{\Sigma_b(\lambda)}\dd\lambda$ locally, then the saddle approximation gives
\begin{equation}
A_b(y_b):=-\log G_b(y_b)\simeq
\min_\lambda\{\lambda y_b-\Sigma_b(\lambda)\}.
\label{eq:Legendre_branch_app}
\end{equation}
Thus the Laplace form converts the scale-resolved entropy of Euclidean histories into the branch free action \cite{Bernstein1929,Widder1941}.

\subsection{Replica/Mellin tower of a branch measure}
\label{app:replica_mellin}

The branch Laplace representation also determines a canonical replica
tower. Here \(n\) is first a positive integer labeling a Mellin/replica
moment of the branch measure. The replicated objects are formal
readings of the same temporal branch ensemble, not independent readings
with separately sampled branch scales. For fixed branch coordinate
\(y_b\), set
\begin{equation}
z=e^{-\lambda y_b}\in(0,1].
\end{equation}
The pushforward of \(\dd\mu_b(\lambda)\) under this map defines a
positive measure \(\dd\nu_{b,y_b}(z)\). Its \(n\)-th Mellin moment is
\begin{equation}
Z^{\rm rel}_{n,b}(y_b):=\int_0^1 z^n\,\dd\nu_{b,y_b}(z).
\end{equation}
Equivalently,
\begin{equation}
Z^{\rm rel}_{n,b}(y_b)
=\int_0^\infty e^{-n\lambda y_b}\,\dd\mu_b(\lambda)
=G_b(ny_b).
\label{eq:MellinTowerApp}
\end{equation}
Thus the branch measure fixes a distinguished shared-scale replica
deformation: the replicated readings are conditioned on the same
effective branch scale \(\lambda\). This is the channel-side signature
of temporal memory.

If the \(n\) replicas were statistically independent, each would carry
its own branch scale, and the corresponding factorized weight would be
\begin{equation}
Z^{\rm fact}_{n,b}(y_b)=G_b(y_b)^n.
\end{equation}
The ratio
\begin{equation}
\mathcal R_{n,b}(y_b)
=\frac{Z^{\rm rel}_{n,b}(y_b)}{Z^{\rm fact}_{n,b}(y_b)}
=\frac{G_b(ny_b)}{G_b(y_b)^n}
\label{eq:ReplicaRatioApp}
\end{equation}
therefore measures the non-factorizing part of the branch replica
tower. Writing \(A_b=-\log G_b\), the associated connected replica
action is
\begin{equation}
I^{\rm conn}_{n,b}(y_b):=-\log \mathcal R_{n,b}(y_b)
=A_b(ny_b)-nA_b(y_b).
\label{eq:ConnectedReplicaActionApp}
\end{equation}
In the temporal-ensemble reading, \(G_b(ny_b)\) describes replicated
measurements tied to the same slow branch scale, whereas
\(G_b(y_b)^n\) describes independent draws of that scale. The connected
action is therefore the channel-side measure of temporal memory in the
branch ensemble.

For the regular-opening endpoint,
\begin{equation}
G_H(y)=\int_0^\infty e^{-\lambda y}
\frac{\lambda_c}{(\lambda+\lambda_c)^2}\,\dd\lambda,
\end{equation}
so the replica/Mellin tower is
\begin{equation}
Z^{\rm rel}_{n,H}(y)=G_H(ny).
\end{equation}
The same branch gives an explicit connected activation action. Setting
\(a=\lambda_c y\), its transfer function can be written as
\begin{equation}
\begin{aligned}
G_H(y)&=\Phi(a),\qquad a=\lambda_c y,\\
\Phi(a)&:=1-ae^aE_1(a),\\
E_1(a)&=\int_a^\infty \frac{e^{-t}}{t}\,\dd t .
\end{aligned}
\label{eq:LomaxTransferClosedD3}
\end{equation}
Therefore
\begin{equation}
I^{\rm conn}_{n,H}(y)
=
-\log\Phi(n\lambda_c y)
+n\log\Phi(\lambda_c y).
\label{eq:LomaxConnectedD3}
\end{equation}
For \(\lambda_c y\ll1\),
\begin{equation}
I^{\rm conn}_{n,H}(y)
=
-n(\lambda_c y)\log n
+
O\!\left((\lambda_c y)^2|\log(\lambda_c y)|^2\right).
\label{eq:LomaxConnectedSmallD3}
\end{equation}
The regular-opening branch therefore carries a connected
replica/Mellin response with a logarithmic activation signature. This
is the \(q=2\) Lomax tail appearing as shared-scale temporal memory in
the opening regime.

For the near-extremal residence endpoint,
\begin{equation}
G_I(\epsilon)=\exp[-\mu_e\epsilon-\nu_e\sqrt{\epsilon}],
\end{equation}
and hence
\begin{equation}
Z^{\rm rel}_{n,I}(\epsilon)
=G_I(n\epsilon)
=\exp[-\mu_e n\epsilon-\nu_e\sqrt{n\epsilon}].
\end{equation}
The connected action in the residence branch is therefore
\begin{equation}
I^{\rm conn}_{n,I}(\epsilon)
=A_I(n\epsilon)-nA_I(\epsilon)
=\nu_e(\sqrt n-n)\sqrt{\epsilon}.
\label{eq:LevyConnectedD3}
\end{equation}
The linear term cancels in the connected ratio; the non-analytic
near-extremal residence term carries the non-factorizing replica
response. In the temporal-ensemble reading, the replicated measurements
are tied to the same long near-extremal waiting scale.

The object controlled here is the effective connected replica action of
the selected channel. The branch measure gives the channel-side
realization of the connected replica sector: replicated readings are
tied by a shared slow temporal scale, and the connected action
reproduces the same non-factorizing saddle bookkeeping on the selected
readout.
\subsection{Regular-opening endpoint and the Tsallis/Lomax class}
\label{app:opening_lomax}

At the opening endpoint set
\begin{equation}
\begin{gathered}
r=1-\frac{E}{E_0},\qquad E=E_0(1-r),\\
E'=E_0(1-u),\qquad 0\le u\le r.
\end{gathered}
\end{equation}
Regularity of the sourced clock and barrier gives
\begin{equation}
\begin{aligned}
S_*(E)&=S_*(E_0)+O(r),\\
\Gamma(E)&=\Gamma_0+O(r),\\
\tau(E,E')&=O(r).
\end{aligned}
\end{equation}
Substitution in Eq.~\eqref{eq:KEVII} yields
\begin{equation}
\begin{aligned}
K_E(r,u)&=C_0[1+O(r)],\\
C_0&=\frac{\exp[-2S_*(E_0)]}{\Gamma_0}.
\end{aligned}
\end{equation}
For a regular branch preparation $X_H(u)=X_0+O(u)$,
\begin{equation}
Z_H(r)\propto\int_0^rK_E(r,u)X_H(u)\,\dd u
=C_0X_0r+O(r^2).
\label{eq:ZH_app}
\end{equation}
This linear opening gives the endpoint branch action
\begin{equation}
A_H(r)=-r\log_2 r+O(r).
\label{eq:AH_app}
\end{equation}
The corresponding canonical scale density must reproduce
\begin{equation}
G_H(r)=e^{-A_H(r)}=1+\frac{r\ln r}{\ln2}+O(r).
\label{eq:GHopening_app}
\end{equation}
The Lomax/Pareto-II representative
\begin{equation}
w_H^{\rm can}(\lambda)=\frac{\lambda_c}{(\lambda+\lambda_c)^2},
\qquad \lambda\ge0,
\label{eq:lomax_app}
\end{equation}
is normalized and has Laplace transform
\begin{equation}
\int_0^\infty e^{-\lambda r}
\frac{\lambda_c}{(\lambda+\lambda_c)^2}\,\dd\lambda
=1+\lambda_c r\ln r+O(r).
\end{equation}
Comparison with Eq.~\eqref{eq:GHopening_app} fixes
\begin{equation}
\lambda_c=\frac{1}{\ln2}.
\end{equation}
Therefore the regular-opening endpoint canonizes to the Lomax/Pareto-II class, equivalently the one-sided Tsallis/Lomax branch \cite{Tsallis1988,Tsallis2009,Shalizi2007,Lomax1954}.

\subsection{Near-extremal RN residence and the shifted-L{\'e}vy class}
\label{app:late_levy}

From Eq.~\eqref{eq:Sstar}, the source-fixed action has the near-extremal expansion
\begin{equation}
\begin{aligned}
S_*(E)&=\mu E+\nu\sqrt E+c_{3/2}E^{3/2}+O(E^2),\\
c_{3/2}&=\frac{5\sqrt2\,\pi}{4}\sqrt{Q_0}.
\end{aligned}
\label{eq:Sstar_subleading}
\end{equation}
The coefficient follows by expanding $r_+(E)=Q_0+E+\sqrt{E(E+2Q_0)}$ to order $E^{3/2}$ and substituting into $S_*(E)=\tfrac12[\SBH(E)-\SBH(0)]$.
The branch-defining late term is the square root.  Setting $\epsilon=1-r=E/E_0$, the representative late weight is
\begin{equation}
\begin{aligned}
G_I(r)&=\exp[-\mu_e(1-r)-\nu_e\sqrt{1-r}],\\
\mu_e&=\mu E_0,
\qquad \nu_e=\nu\sqrt{E_0}.
\end{aligned}
\label{eq:GI_app}
\end{equation}
The identity
\begin{equation}
\begin{aligned}
e^{-\nu_e\sqrt\epsilon}
&=\int_0^\infty
\frac{\nu_e}{2\sqrt\pi}\alpha^{-3/2}\\
&\quad\times
\exp\!\left(-\frac{\nu_e^2}{4\alpha}\right)
e^{-\alpha\epsilon}\,\dd\alpha
\end{aligned}
\label{eq:Levyidentity_app}
\end{equation}
gives, after the shift $\beta=\mu_e+\alpha$,
\begin{equation}
G_I(r)=\int_{\mu_e}^{\infty}e^{-\beta(1-r)}
\frac{\nu_e}{2\sqrt\pi}(\beta-\mu_e)^{-3/2}
\exp\!\left[-\frac{\nu_e^2}{4(\beta-\mu_e)}\right]\dd\beta.
\label{eq:shifted_levy_app}
\end{equation}
This is the shifted one-sided L{\'e}vy residence class.  The $O(E^{3/2})$ terms in Eq.~\eqref{eq:Sstar_subleading} renormalize the smooth late profile while preserving the square-root branch class.

\subsection{No global positive transform and branch-envelope selection}
\label{app:noglobal}

\begin{proposition}[No single global positive transform for a genuine branch switch]
Let $A_H(x)$ and $A_I(x)$ be two real-analytic branch actions on a neighborhood of a dominance-exchange point $x_\times$, and suppose they exchange dominance there:
\begin{equation}
A_H(x_\times)=A_I(x_\times),
\end{equation}
with $A_H<A_I$ on one side and $A_I<A_H$ on the other side.  Assume that $A_H$ and $A_I$ do not define the same analytic germ at $x_\times$.  Then the branch envelope
\begin{equation}
A_{\rm br}(x)=\min\{A_H(x),A_I(x)\}
\end{equation}
cannot be represented, on any neighborhood of $x_\times$, as
\begin{equation}
e^{-A_{\rm br}(x)}=
\int_0^\infty e^{-\lambda y(x)}\,\dd\mu(\lambda),
\qquad \dd\mu\ge0,
\end{equation}
with $y(x)$ real analytic and monotone and with the transform finite and nonzero.
\end{proposition}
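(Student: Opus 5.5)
The plan is to exploit the rigidity of positive Laplace transforms: wherever such a transform is finite on an open interval, it is real-analytic there (Bernstein--Widder, \cite{Bernstein1929,Widder1941}). This analyticity is then incompatible with the kink of $\min\{A_H,A_I\}$ at a genuine dominance exchange. I would argue by contradiction, assuming that a representation $e^{-A_{\rm br}(x)}=\int_0^\infty e^{-\lambda y(x)}\,\dd\mu(\lambda)$ with $\dd\mu\ge0$ holds on some neighborhood $U$ of $x_\times$.

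\emph{Step 1: reduce to a strictly monotone clock.} A real-analytic monotone function is either constant near $x_\times$ or strictly monotone there, because a nonconstant analytic germ has isolated critical points and cannot be locally flat. Suppose $y$ is constant. Then $A_{\rm br}$ is constant on $U$, so $A_H$ equals that constant on one side of $x_\times$ and $A_I$ equals it on the other side. By the identity theorem both are that constant as germs, which contradicts the hypothesis that they are distinct germs. So I may take $y$ strictly monotone, and $y(U)$ contains an open interval $V\ni y_\times:=y(x_\times)$.

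\emph{Step 2: analyticity of the transform.} Set $G(t)=\int_0^\infty e^{-\lambda t}\,\dd\mu(\lambda)$. Since $\dd\mu\ge0$, the set where $G$ is finite is an interval unbounded to the right, with left endpoint $t_0$. By hypothesis $G$ is finite on all of $V$, and $V$ is open and contains $y_\times$, so $y_\times>t_0$. This is the point I expect to need the most care: $y_\times$ must lie in the open interior of the convergence domain and not at its edge. Strict monotonicity from Step 1 is what guarantees this. On $\{\mathrm{Re}\,z>t_0\}$ the integral converges absolutely and locally uniformly, and dominated convergence shows it is holomorphic there. Hence $G$ is real-analytic near $y_\times$. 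It is also strictly positive there, since it is finite and nonzero with a positive integrand. Therefore $A_{\rm br}=-\log G\circ y$ is real-analytic on a neighborhood of $x_\times$.

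\emph{Step 3: contradiction with the branch switch.} Suppose $A_H<A_I$ on the left of $x_\times$. Then $A_{\rm br}=A_H$ on a left half-neighborhood, and both functions are analytic. By the identity theorem they have the same germ at $x_\times$. Likewise $A_{\rm br}=A_I$ on a right half-neighborhood forces $A_{\rm br}$ to have the germ of $A_I$. Hence $A_H$ and $A_I$ have the same germ at $x_\times$, contradicting the assumption. The case where $A_I<A_H$ on the left is symmetric.

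I would also remark that the proof never uses the specific Lomax or L\'evy forms. It uses only positivity of $\dd\mu$, analyticity of $y$, and the exchange of dominance. This is why the hard envelope \eqref{eq:hardEnvelopeVII} cannot be recovered as a single global superstatistical transform.
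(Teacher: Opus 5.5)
Your proof is correct and follows the same route as the paper. Analyticity of a positive Laplace transform on the interior of its convergence half-line, composed with the analytic clock $y(x)$, forces $A_{\rm br}=-\log F$ to be real analytic at $x_\times$, which contradicts a genuine branch exchange; your Steps~1--2 (disposing of constant $y$ and checking that $y(x_\times)$ lies strictly inside the convergence domain) and the explicit identity-theorem argument in Step~3 make rigorous points that the paper's proof only asserts.
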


\begin{proof}
If the assumed global Laplace representation held, the right-hand side would be a positive Laplace transform of the variable $y$.  By the Bernstein--Widder theorem it is completely monotone in $y$ on the interior of its domain \cite{Bernstein1929,Widder1941}.  Where the transform is finite, and where $y(x)$ is real analytic and monotone, the composition
\begin{equation}
F(x)=\int_0^\infty e^{-\lambda y(x)}\,\dd\mu(\lambda)
\end{equation}
is real analytic in $x$ on that neighborhood.  Since $F(x)>0$, the action $-\log F(x)$ is also real analytic.  But $-\log F(x)=A_{\rm br}(x)=\min\{A_H(x),A_I(x)\}$.  A lower envelope of two distinct analytic branches that exchange dominance is analytic at the exchange point only in the degenerate case in which the two branches agree to all orders there, i.e. define the same analytic germ.  By hypothesis this is not the case.  Hence $A_{\rm br}(x)$ is not real analytic at the genuine branch exchange, contradicting the analyticity implied by a single positive global Laplace representation. Therefore no such global positive transform exists.
\end{proof}

\begin{remark}[Why this justifies the hard envelope]
Each branch action $A_b=-\log G_b$ is generated by its own positive pushforward measure on its branch-history domain.  The obstruction is to one global positive measure spanning the two regimes as a single smooth transform.  The source-fixed Page reconstruction is therefore the branch-admissible envelope $A_{\rm br}=\min(A_H,A_I)$. A soft sum $-\log(e^{-A_H}+e^{-A_I})$ would be a different physical model: it would add a global two-branch ensemble absent from the benchmark construction.
\end{remark}

\end{document}